\newtheorem{lemma}{Lemma}
\newtheorem{proposition}{Proposition}
\newtheorem{theorem}{Theorem}
\newtheorem{corollary}{Corollary}
\theoremstyle{definition}
\newtheorem{definition}{Definition}
\newtheorem{remark}{Remark}
\newtheorem{conjecture}{Conjecture}
\newtheorem{example}{Example}
\newcommand{\mcs}{\mathcal S}	 		
\newcommand{\mcg}{\mathcal G}		   
\newcommand{\mcp}{\mathcal P}			
\newcommand{\product}{\square}
\DeclareMathOperator{\dens}{dens}	      
\DeclareMathOperator{\vcd}{VC-dim}		  
\DeclareMathOperator{\arbo}{a}			  
\DeclareMathOperator{\vcdens}{VC-dens}
\DeclareMathOperator{\mad}{mad}
\DeclareMathOperator{\dd}{dd}	
\newcommand{\DD}{\dd^*} 
\newcommand{\cm}[1]{\omega(#1)} 
\newcommand{\lenum}[1]{{(#1)}}	
\newcommand{\tip}{leaf\xspace}
\newcommand{\tips}{leaves\xspace}
\newcommand{\ceil}[1]{\left\lceil{#1}\right\rceil}
\begin{document}

\centerline{\Large\bf On density of subgraphs of Cartesian products}

\vspace{10mm}
\centerline{Victor Chepoi, Arnaud Labourel, and S\'ebastien Ratel}

\medskip
\begin{small}
	\medskip
	\centerline{Aix Marseille Univ, Universit\'e de Toulon, CNRS, LIS, 
	Marseille, France}

	\centerline{\texttt{\{victor.chepoi, arnaud.labourel, 
	sebastien.ratel\}@lis-lab.fr}}
\end{small}

\bigskip\bigskip\noindent
{\footnotesize {\bf Abstract.}
	In this paper, we extend two classical results about the density of 
	subgraphs of hypercubes to subgraphs $G$ of Cartesian products 
	$G_1\product\cdots\product G_m$ of arbitrary connected graphs. Namely, we 
	show that $\frac{|E(G)|}{|V(G)|} \le \ceil{2\max\{\dens(G_1), \ldots, 
	\dens(G_m)\}} \log|V(G)|$, where $\dens(H)$ is the maximum ratio 
	$\frac{|E(H')|}{|V(H')|}$ taken over all subgraphs $H'$ of $H$.  We 
	introduce the notions of VC-dimension $\vcd(G)$ and VC-density $\vcdens(G)$ 
	of a subgraph $G$ of a Cartesian product $G_1\product\cdots\product G_m$, 
	generalizing the classical Vapnik-Chervonenkis dimension of set-families 
	(viewed as subgraphs of hypercubes). We prove that if $G_1,\ldots,G_m$ 
	belong to the class ${\mathcal G}(H)$ of all finite connected graphs not
	containing a given graph $H$ as a minor, then  for any subgraph $G$ of
	$G_1\product\cdots\product G_m$	the sharper inequality 
	$\frac{|E(G)|}{|V(G)|} \le \mu(H) \cdot \vcd^*(G)$ holds, where $\mu(H)$ is 
	the 
	supremum of the densities of the graphs from ${\mathcal G}(H)$. We refine 
	and sharpen these two results to several specific graph classes. We also
	derive  upper bounds (some of them polylogarithmic) for the size of 
	adjacency labeling schemes of subgraphs of Cartesian products.
}

\section{Introduction} \label{introduction}
A folklore result (see for example \cite[Lemma 3.2]{HaImKl} and \cite{Gr}) 
asserts that if $G=(V,E)$ is
an induced $n$-vertex subgraph of the $m$-dimensional hypercube $Q_m$, then 
$\frac{|E|}{|V|}\le
\log n$. This inequality together with the fact that
this class of graphs is closed by taking induced subgraphs immediately implies 
that $n$-vertex
subgraphs of hypercubes have $O(\log n)$ density, degeneracy, arbority, and 
consequently admit
$O(\log^2 n)$ adjacency labeling schemes. This {\it density bound}  
$\frac{|E|}{|V|}\le \log n$ has
been refined and sharpened in several directions and these improvements lead to 
important
applications and results.

On the one hand, the {\it  edge-isoperimetric problem}	for hypercubes 
\cite{Bezrukov,Har} asks for
any integer $1\le n\le 2^m$ to find an $n$-vertex subgraph $G$ of the 
$m$-dimensional cube
$Q_m$ with the smallest edge-boundary $\partial G$, i.e., with the minimum  
number of edges of
$Q_m$ running between $G$ and its complement in $Q_m$. Since the hypercubes 
are regular
graphs, minimizing the boundary $\partial G$ of $G$  is equivalent to 
maximizing the number of
edges of $G$, and thus to maximizing the density $\frac{|E|}{|V|}$ of $G$. The 
classical result by
Harper \cite{Har}
nicely characterizes the  solutions of the edge-isoperimetric problem for 
hypercubes: for any $n$,
this is the subgraph of $Q_m$ induced by the initial segment of length $n$ of 
the {\it lexicographic
	numbering} of the vertices of $Q_m$.  One elegant way of proving this 
	result is using the
operation
of compression  \cite{Har_book}.  For a generalization of these results and 
techniques to subgraphs
of Cartesian products of other regular graphs and for applications of 
edge-isoperimetric problems,
see the book by Harper \cite{Har_book} and the survey by Bezrukov 
\cite{Bezrukov}.

On the other hand, to any set family (concept class) $\mathcal S\subseteq \{ 
0,1\}^m$ with $n$ sets
one can associate the subgraph $G({\mathcal S})=(V,E)$ of the hypercube $Q_m$ 
induced by the
vertices corresponding to the sets of $\mathcal S$, i.e., $V=\mathcal S$. This 
graph $G({\mathcal
	S})$ is called the {\it 1-inclusion graph} of $\mathcal S$; 1-inclusion 
	graphs have numerous
applications in computational learning theory, for example, in prediction 
strategies \cite{HaLiWa} and
in sample compression schemes \cite{KuWa}. Haussler, Littlestone, and Warmuth 
\cite{HaLiWa}
proposed a prediction strategy for concept classes based on their 1-inclusion 
graph (called the {\it
	1-inclusion prediction strategy}) as a natural approach to the prediction 
	model of learning. They
provided an upper bound on the worst-case expected risk of the 1-inclusion 
strategy for a concept
class $\mathcal S$ by the density of its 1-inclusion graph $G(\mathcal S)$ 
divided by $n$. Moreover,
\cite[Lemma 2.4]{HaLiWa} establishes  a sharp upper bound $\frac{|E|}{|V|}\le 
\vcd({\mathcal S})$ on
the density of $G(\mathcal S)$, where $\vcd(\mathcal S)$ is the 
Vapnik-Chervonenkis dimension of
${\mathcal S}$.  While $\vcd(\mathcal S)\le \log n$ always holds, for some
concept classes
$\vcd(\mathcal S)$ is much smaller than $\log n$ and thus the inequality 
$\frac{|E|}{|V|}\le
\vcd({\mathcal S})$ presents a significant improvement over the folklore 
inequality $\frac{|E|}{|V|}\le
\log n$. This is the case of maximum concept classes \cite{KuWa} and, more 
generally, of lopsided
systems \cite{BaChDrKo,BoRa,La}. In this case,  $\vcd(\mathcal S)$ is exactly 
the dimension of the
largest subcube of $G(\mathcal S)$ and this dimension may not depend at all on 
the number of sets
of $\mathcal S$. Haussler \cite{Hau} presented an elegant proof of the 
inequality $\frac{|E|}{|V|}\le
\vcd({\mathcal S})$ using the shifting (push-down) operation (which can be 
compared with the
compression operation  for edge-isoperimetric problem). He used this density 
result to give an
upper bound on the $\epsilon$-packing number (the maximum number of disjoint 
balls of a radius
$\epsilon$ of  $Q_m$ with centers at the vertices of $G({\mathcal S})$), and 
this result can be
viewed  as a far-reaching generalization of the classical Sauer lemma
\cite{Sauer}.   The inequality of
Haussler et al. \cite{HaLiWa} as well as the classical notions of
VC-dimension and the Sauer lemma have
been subsequently extended to the subgraphs of Hamming graphs, i.e., from 
binary alphabets to
arbitrary alphabets; see \cite{HaLo,Pollard_book,Natarajan}. Cesa-Bianchi and 
Haussler \cite{CBHa}
presented a graph-theoretical generalization of the Sauer Lemma for the
$m$-fold $F^m=F\product
\cdots\product F$ Cartesian products of arbitrary undirected graphs $F$.

In this paper, we extend the inequalities $\frac{|E|}{|V|}\le \log n$ and 
$\frac{|E|}{|V|}\le \vcd(\mathcal
S)$ to $n$-vertex subgraphs $G=(V,E)$ of Cartesian products
$\Gamma:=G_1\product\cdots \product
G_m$ of arbitrary connected graphs $G_1,\ldots,G_m$. Namely, in Theorem
\ref{thm_subgraphs_products1}, we show that the density of $G$ 
is at most $2\log n$ times the largest density of a factor of $\Gamma$. To 
extend the density result
of \cite{HaLiWa}, we define the notions of (minor and induced) VC-dimensions 
and VC-densities of
subgraphs $G$ of arbitrary Cartesian products and show that if all factors 
$G_1,\ldots, G_m$ do not
contain a fixed subgraph $H$ as a minor, then the density of any subgraph 
$G=(V,E)$ of $\Gamma$
is at most $\mu(H)$ times the VC-dimension $\vcd^*(G)$ of $G$: 
$\frac{|E|}{|V|}\le \mu(H) \cdot \vcd^*(G)$,
where $\mu(H)$ is a constant such that any graph not containing $H$ as a minor 
has density at
most $\mu(H)$ (it is well known \cite{Die} that if $r:=|V(H)|$, then $\mu(H)\le 
cr\sqrt{\log r}$ for a
universal constant $c$). We conjecture that in fact $\frac{|E|}{|V|}\le 
\vcdens^*(G)$ holds, where
$\vcdens^*(G)$ is the VC-density of $G$. We consider several classes of graphs 
for which sharper
inequalities hold. Since by Nash-Williams's theorem \cite{NaWi} all such 
inequalities provide upper
bounds for arboricity and since by a result of Kannan, Naor, and Rudich 
\cite{KaNaRu} bounded
arboricity implies bounded adjacency labeling schemes, in the last section of 
the paper we present
the applications of our results to the design of compact adjacency labeling 
schemes for subgraphs
of Cartesian products (which was one of our initial motivations).

The {\it canonical metric representation} theorem of Graham and Winkler 
\cite{GrWi} asserts that
any  connected finite graph $G$ has a unique isometric embedding into the 
Cartesian product
$\Pi_{i=1}^m G_{i}$ in which each factor $G_{i}$ is prime (i.e., not further 
decomposable this way)
and this representation can be computed efficiently; for proofs and algorithms, 
see the books
\cite{DeLa,HaImKl}.  Thus our results have a general nature and show that it 
suffices to bound the
density of prime graphs. For many classes of graphs occurring in metric graph 
theory
\cite{BaCh_survey}, the prime graphs have special structure. 
For example, the primes for isometric subgraphs of hypercubes (which have been 
characterized in a
nice way by Djokovi\'{c} \cite{Djokovic}) are the $K_2$. Thus the density of 
isometric subgraphs of
hypercubes (and more generally, of subgraphs of hypercubes) is upper bounded by 
their
VC-dimension. Shpectorov \cite{Shp} proved that the primes  of graphs which 
admit a scale
embedding into a hypercube are exactly the subgraphs of octahedra and isometric 
subgraphs of
halved cubes. In Section \ref{sect_consequences}, we will show how to bound the 
density of
subgraphs of Cartesian products of octahedra. In another paper 
\cite{ChLaRa_halfcubes} we will
define an appropriate notion of VC-dimension for subgraphs of halved cubes and 
we will use it to
upper bound the density of such graphs. The papers \cite{BrChChKoLaVa} and 
\cite{BrChChGoOs}
investigate  the local-to-global structure of graphs which are retracts of 
Cartesian products of
chordal graphs, bridged  and weakly bridged graphs, respectively (bridged 
graphs are the graphs in
which all isometric cycles have length 3). 2-Connected chordal  graphs and 
bridged or weakly
bridged graphs are prime. Notice also that the bridged and weakly bridged 
graphs are dismantlable
(see, for example, \cite[Section 7]{BrChChGoOs}). In Section 
\ref{sect_consequences}, we present
sharper density inequalities for subgraphs of Cartesian products of chordal 
graphs and of
dismantlable graphs, which can be directly applied to the classes of graphs from
\cite{BrChChKoLaVa} and \cite{BrChChGoOs}. For other such classes of graphs 
occurring in metric
graph theory, see the survey \cite{BaCh_survey} and the papers 
\cite{ChChHiOs,Cha}.

\section{Preliminaries}  \label{preliminaries} 
In this section, we define the basic notions and concepts used throughout the 
paper. Some specific
notions
(some classes of graphs, adjacency labeling schemes, etc) will be introduced 
when appropriate.

\subsection{Basic definitions} 

\subsubsection{Density}
All graphs $G=(V,E)$ occurring in this note are finite, undirected, and simple. 
The {\it closed neighbourhood} of a vertex $v$ is denoted by $N[v]$ and 
consists of $v$ and the
vertices adjacent to $v$. The degree $d(v)$ of a vertex $v$ is the number of 
edges of $G$ incident
to $v$. The number $d(G)=\frac{1}{|V|}\sum_{v\in G} d(v)=\frac{2|E|}{|V|}$ is 
the \emph{average
	degree} of $G$.

The \emph{maximum average degree} $\mad(G)$ of $G$ is the maximum average 
degree of a
subgraph $G'$ of $G$: $\mad(G)=\max \{ d(G'): G' \text{ is a subgraph of } G\}$.

The \emph{density $\dens(G)$} of $G$ will be the maximal ratio 
$|E(G')|/|V(G')|$ over all its
subgraphs $G'$.

Density and maximum average degree are closely related, namely $\dens(G) = 
\frac{\mad(G)}{2} =
\max\left\{\frac{d(G')}{2} : G'\subseteq G\right\}$ holds, but they quantify
different aspects of $G$. We will use both numbers, depending on the 
circumstances: $\dens(G)$
will be used to express a global parameter of $G$ in a result, and $\mad(G)$
will be used in proofs when we have to look at a local parameter (degrees) of 
$G$. We will use the
following simple observation:

\begin{lemma}\label{prop:2vertexmad}
	Let $G$ be a simple and connected graph. Then $G$ has two vertices of 
	degree at most
	$\ceil{\mad(G)}$.
\end{lemma}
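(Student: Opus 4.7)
The plan is to argue by contradiction via a direct degree-sum estimate. Writing $k := \lceil \mad(G) \rceil$, I would suppose that $G$ has at most one vertex of degree at most $k$, so that at least $|V(G)| - 1$ vertices of $G$ have degree at least $k+1$.

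First, I would sum the degrees of $G$ in two ways. The hypothesis yields the lower bound $\sum_{v \in V(G)} d(v) \ge (|V(G)| - 1)(k+1)$. On the other hand, since $G$ is a subgraph of itself, $d(G) \le \mad(G) \le k$, which translates to $\sum_{v \in V(G)} d(v) = |V(G)| \cdot d(G) \le k \cdot |V(G)|$. Combining these two estimates gives $(|V(G)|-1)(k+1) \le k \cdot |V(G)|$, which simplifies to $|V(G)| \le k+1$.

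Next, I would invoke the elementary simple-graph bound $\Delta(G) \le |V(G)| - 1 \le k$, which forces every vertex of $G$ to have degree at most $k$. This contradicts the contrapositive hypothesis that at least $|V(G)| - 1 \ge 1$ vertices have degree at least $k+1$; here connectedness is only used to guarantee $|V(G)| \ge 2$ so that the statement is non-vacuous.

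I do not anticipate any real obstacle; the argument is essentially mechanical. The only subtle point worth flagging is that the global bound $d(G) \le \mad(G)$ must be paired with the trivial simple-graph cap $\Delta(G) \le |V(G)|-1$. Neither inequality alone collapses the contrary hypothesis, but combined they immediately do.
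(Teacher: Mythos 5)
Your proof is correct, and while it shares the paper's basic strategy (negate the conclusion, then double-count degrees against the bound $d(G)\le\mad(G)$), it finishes by a genuinely different step. The paper's proof uses connectedness essentially: assuming one vertex $v_0$ of degree at most $\mu(G):=\ceil{\mad(G)}$ and $n-1$ vertices of degree at least $\mu(G)+1$, the inequality $d(v_0)\ge 1$ is precisely what pushes the average degree strictly above the ceiling, via $d(G)\ge\frac{(\mu(G)+1)(n-1)+1}{n}=\mu(G)+1-\frac{\mu(G)}{n}>\mu(G)\ge\mad(G)$. You instead discard the exceptional vertex's contribution, deduce only $|V(G)|\le k+1$, and close with the trivial simple-graph cap $\Delta(G)\le |V(G)|-1\le k$. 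A byproduct of your route is that connectedness plays no real role: your argument establishes the statement for \emph{every} simple graph on at least two vertices, which is slightly more general. One small inaccuracy to note: connectedness does not guarantee $|V(G)|\ge 2$ (the one-vertex graph is connected, and for it the lemma is degenerate), so $|V(G)|\ge 2$ is really an implicit hypothesis rather than a consequence of connectivity; the paper's own proof tacitly makes the same assumption, and nothing in the lemma's subsequent use is affected.
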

\begin{proof} Let $\mu(G):=\ceil{\mad(G)}$ and $n$ the number of vertices of 
$G$. Assume for
	contradiction that there exists connected graph $G$ with one
	vertex $v_0$ of  degree at most $\mu(G)$ and $n-1$ vertices of  degree at 
	least $\mu(G)+1$. Since $G$ is connected,
$d(v_0)\geq 1$ and we obtain
	$\mad(G)\geq 
	d(G)\geq\frac{(\mu(G)+1)(n-1)+1}{n}=\mu(G)+1-\frac{\mu(G)}{n}>\mu(G)$, 
	leading to a
	contradiction.
\end{proof}

\subsubsection{Cartesian products \cite{DeLa,HaImKl}}
Let $G_{1},\ldots,G_{m}$ be a family of $m$ connected graphs. The 
\emph{Cartesian product}
$\Gamma:=\prod_{i=1}^m G_i=G_1\product\cdots\product G_m$ is a graph
defined on the set of all
$m$-tuples $(x_1,\ldots,x_m)$, $x_{i} \in V(G_{i})$, where two vertices 
$x=(x_1,\ldots,x_m)$ and
$y=(y_1,\ldots,y_m)$ are adjacent if and only if there exists an index $1\le 
j\le m$ such that $x_{j}
y_{j} \in E(G_{j})$ and $x_{i} = y_{i}$ for all $i\neq j$. If $uv$ is an edge 
of the factor $G_i$, then all
edges of $\Gamma$ running between two vertices of the form
$(v_1,\ldots,v_{i-1},u,v_{i+1},\ldots,v_m)$ and 
$(v_1,\ldots,v_{i-1},v,v_{i+1},\ldots,v_m)$  will be called
\emph{edges of type $uv$}. A factor $G_i$ is called a {\it non-trivial factor} 
of $\Gamma$ if $G_i$ contains at least two vertices.
A graph $G$ is \emph{prime} if it can not be represented as a Cartesian product 
of two non-trivial graphs.

The \emph{$m$-dimensional hypercube} $Q_m$ is the Cartesian product of $m$ 
copies of $K_2$ with $V(K_2)=\{ 0,1\}$, i.e., $Q_m=K_2\product \cdots \product 
K_2$.
Equivalently, $Q_m$ has the subsets $S$ of a set $X$ of size $m$ as the 
vertex-set and two such sets $A$ and $B$ are adjacent in $Q_m$ if and only if 
$|A\Delta B|=1$.

A \emph{subproduct} $\Gamma'$ of a Cartesian  product $\Gamma=\prod_{i=1}^m 
G_i$ is a product
such that $\Gamma'=\prod_{j=1}^k G'_{i_j}$, where each $G'_{i_j}$ is a 
connected non-trivial
subgraph
of
$G_{i_j}$.  A subproduct $\Gamma'=\prod_{j=1}^k G'_{i_j}$ in which each factor 
$G'_{i_j}$ is an edge
of $G_{i_j}$  is called a {\it cube-subproduct} of $\Gamma$.

Given a vertex $v'=(v'_{i_1},\ldots,v'_{i_k})$ of $\Gamma'$, we  say that a 
vertex $v=(v_1,\ldots,v_m)$
of
$\Gamma$  is an \emph{extension} of $v'$ if $v_{i_j}=v'_{i_j}$ for all 
$j\in\{1,\ldots,k\}$. 
We denote by $F(v')$ the set of all extensions
$v$ in $\Gamma$ of a vertex $v'$ of a subproduct $\prod_{j=1}^k G'_{i_j}$ and 
call $F(v')$ the
\emph{fiber} of $v'$ in the product $\Gamma$ (see Figure 
\ref{fig_sect_cartesian_product} for an
illustration of these notions).

Let $G$ be a subgraph of a Cartesian product $\Gamma$ and $\Gamma'$ be a 
subproduct of
$\Gamma$. The {\it trace} of $V(G)$ on  $V(\Gamma')$ consists of all vertices 
$v'$ of $\Gamma'$
such that $F(v')\cap V(G)\ne \emptyset$. The \emph{projection} of $G$ on 
$\Gamma'$ is  the
subgraph $\pi_{\Gamma'}(G)$ of $\Gamma'$ induced by the trace of $V(G)$ on 
$V(\Gamma')$. We
will denote the projection of $G$ on the $i$th factor $G_i$ of $\Gamma$ by 
$\pi_i(G)$ instead of
$\pi_{G_i}(G)$.

The following lemma must be well-known, but we have not found it in the 
literature. Its proof was
communicated to us by Fran\c{c}ois Dross (which we would like to acknowledge).

\begin{figure}
	\centering
	\begin{minipage}{0.49\textwidth}
		\centering
		\includegraphics[width=0.9\textwidth]{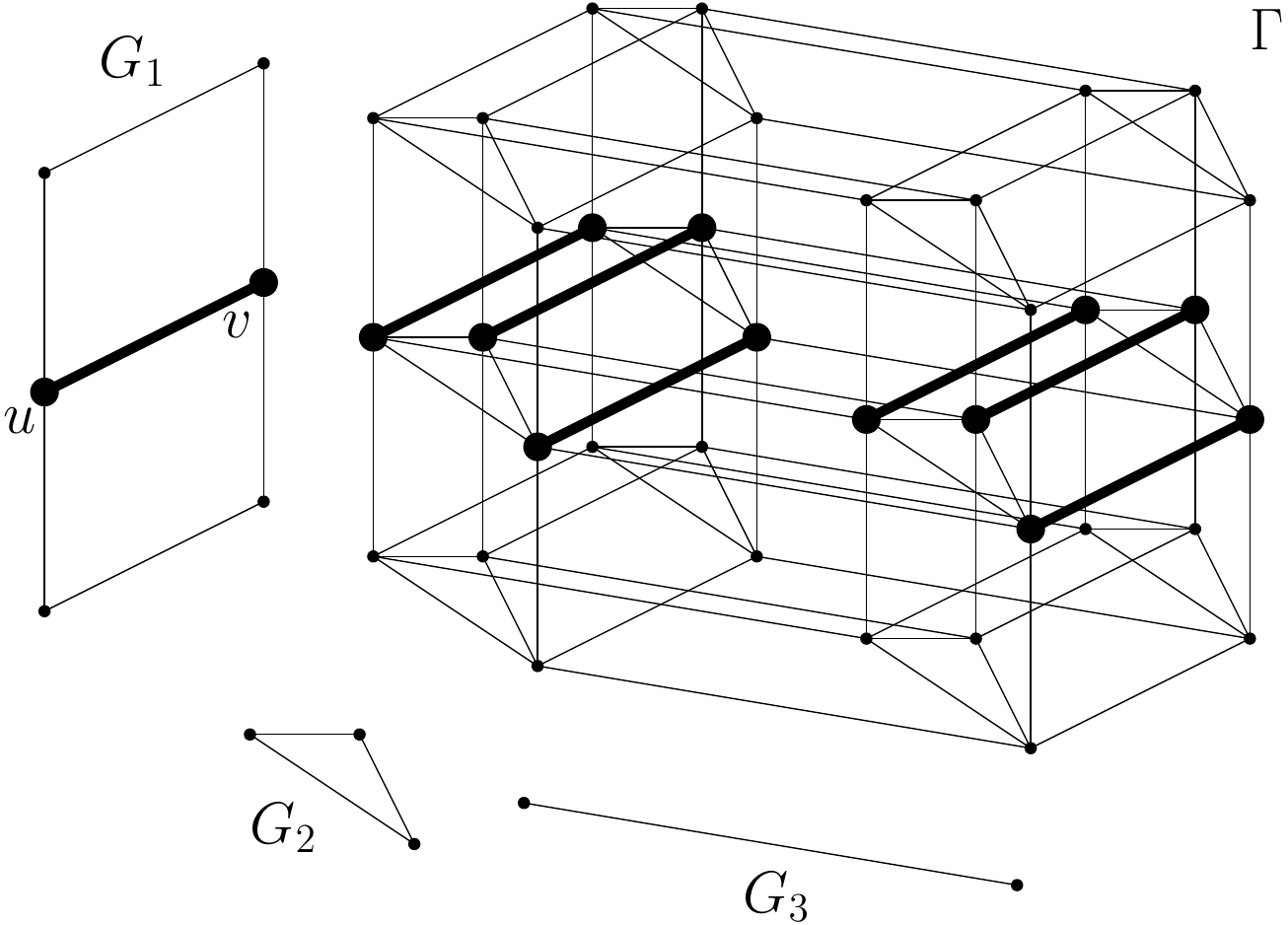}
	\end{minipage}
	\begin{minipage}{0.49\textwidth}
		\centering
		\includegraphics[width=0.9\textwidth]{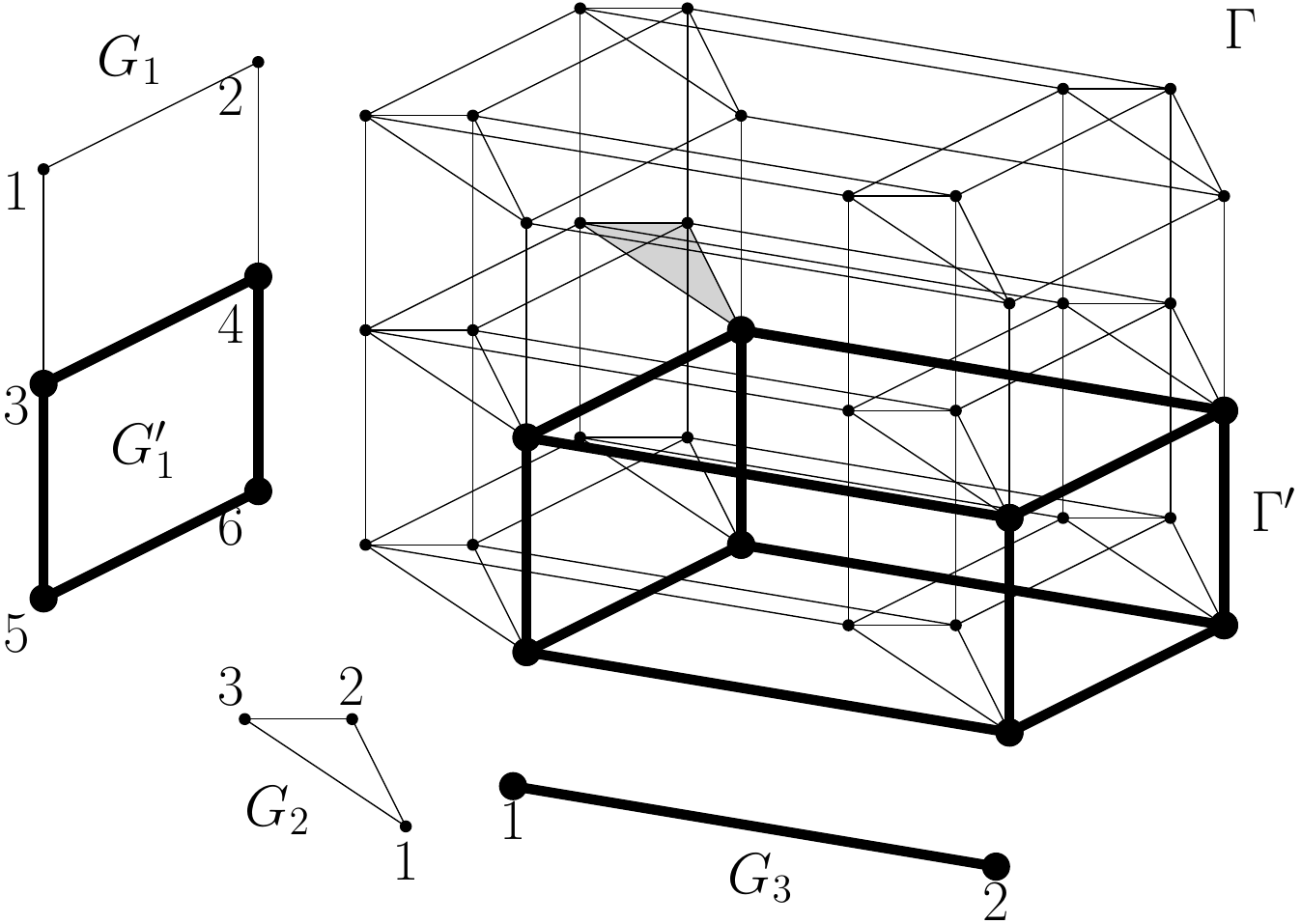}
	\end{minipage}
	\caption{\label{fig_sect_cartesian_product} On left, an example of a 
	Cartesian product
		$\Gamma := G_1 \product G_2 \product G_3$ and an edge $uv \in
		E(G_1)$ and its copies in
		$\Gamma$
		(the edges of type $uv$ of $\Gamma$). On right, a subproduct (which is 
		also a
		cube-subproduct) $\Gamma' := G_1' \product G_3'$ of $\Gamma$ and
		the fiber (in gray) $F((4,1))
		:= \{(4,1,1),(4,2,1),(4,3,1)\}$ of the vertex $v = (4,1)$ of $\Gamma'$.}
\end{figure}

\begin{lemma} \label{lem_product_density} If $\Gamma=\prod_{i=1}^m G_i$, then
	$\dens(\Gamma)=\sum_{i=1}^m \dens(G_i)$, i.e., if $G'_i=(V'_i,E'_i)$ is a 
	densest subgraph of
	$G_i=(V_i,E_i)$, then $\Gamma'=\prod_{i=1}^m G'_i$ is a densest subgraph of 
	$\Gamma$.
\end{lemma}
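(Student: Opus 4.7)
\bigskip
\noindent\textbf{Proof proposal.}

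The plan is to prove two inequalities. For the lower bound, I would start from the standard product identity
\[
|V(G\product H)|=|V(G)|\cdot|V(H)|,\qquad |E(G\product H)|=|E(G)|\cdot|V(H)|+|V(G)|\cdot|E(H)|,
\]
which is immediate from the definition of $\product$. Dividing yields the additive relation $\frac{|E(G\product H)|}{|V(G\product H)|}=\frac{|E(G)|}{|V(G)|}+\frac{|E(H)|}{|V(H)|}$. Iterating on $m$, for any choice of nonempty subgraphs $G'_i\subseteq G_i$,
\[
\frac{|E(\prod_{i=1}^m G'_i)|}{|V(\prod_{i=1}^m G'_i)|}=\sum_{i=1}^m \frac{|E(G'_i)|}{|V(G'_i)|}.
\]
Taking each $G'_i$ to be a densest subgraph of $G_i$ gives $\dens(\Gamma)\ge\sum_{i=1}^m\dens(G_i)$, and this value is realized by the subgraph $\prod_{i=1}^m G'_i$ of $\Gamma$.

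For the matching upper bound, I would pick an arbitrary subgraph $H$ of $\Gamma$ and split its edges according to type: $|E(H)|=\sum_{i=1}^m |E_i(H)|$, where $E_i(H)$ collects the edges of $H$ of type $uv$ with $uv\in E(G_i)$. For a fixed $i$, consider the subproduct $\Gamma_{\hat i}:=\prod_{j\ne i}G_j$. The fibers $F(\bar v)$, for $\bar v\in V(\Gamma_{\hat i})$, partition $V(\Gamma)$, and each edge in $E_i(H)$ lies entirely inside a single fiber. For each $\bar v$, let $U_{\bar v}\subseteq V(G_i)$ be the set of $i$-th coordinates of the vertices of $V(H)\cap F(\bar v)$; then the edges of $E_i(H)$ inside $F(\bar v)$ form a subgraph of the induced subgraph $G_i[U_{\bar v}]$, and hence by the definition of $\dens(G_i)$ their number is at most $\dens(G_i)\cdot|U_{\bar v}|$. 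Summing over $\bar v$ and using $\sum_{\bar v}|U_{\bar v}|=|V(H)|$,
\[
|E_i(H)|\le \dens(G_i)\cdot |V(H)|.
\]
Summing over $i$ then gives $\frac{|E(H)|}{|V(H)|}\le\sum_{i=1}^m\dens(G_i)$, which yields $\dens(\Gamma)\le\sum_{i=1}^m\dens(G_i)$ after taking the maximum over $H$.

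Combining the two bounds gives equality, and the first paragraph already identified $\prod_{i=1}^m G'_i$ as a subgraph achieving this density, so it is indeed a densest subgraph of $\Gamma$. I do not anticipate a real obstacle here: the only point that needs a moment of care is the fiber accounting — that the edges of $H$ of type $i$ within $F(\bar v)$ embed into $G_i[U_{\bar v}]$ (they may be strictly fewer, but the inequality is all we need), and that the fibers over $\Gamma_{\hat i}$ do partition $V(H)$ so $\sum_{\bar v}|U_{\bar v}|=|V(H)|$. Everything else is bookkeeping on top of the elementary edge-count identity for $\product$.
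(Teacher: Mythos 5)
Your argument is correct and follows essentially the same route as the paper: the lower bound via the edge/vertex count of a product of densest subgraphs, and the upper bound by splitting the edges of an arbitrary subgraph by type and bounding the type-$i$ edges fiber by fiber (your fibers over $\prod_{j\ne i}G_j$ are exactly the paper's layers $L(x)$, and your direct bound $|E_i(H)|\le \dens(G_i)\,|V(H)|$ is equivalent to the paper's use of the mediant inequality). The only cosmetic difference is that the paper first reduces to two factors by associativity, whereas you treat all $m$ factors at once.
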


\begin{proof}
	The graph $\Gamma'$ has $|V'_1|\cdot\ldots\cdot|V'_m|$ vertices and 
	$\sum_{i=1}^m
	|E_i'|\cdot(\prod_{j=1, j\ne i}^m |V'_j|)$ edges. Therefore
	$\frac{|E'|}{|V'|}=\sum_{i=1}^m\frac{|E'_i|}{|V'_i|}=\sum_{i=1}^m\dens(G_i),$
	 showing that
	$\dens(\Gamma)\ge \sum_{i=1}^m\dens(G_i)$.

	We will prove now the converse inequality $\dens(\Gamma)\le 
	\sum_{i=1}^m\dens(G_i)$. Since the
	Cartesian product operation is associative, it suffices to prove this 
	inequality for two factors.
	Let $\Gamma=G_1\product G_2$ and let $G=(V,E)$ be a densest subgraph of
	$\Gamma$. We will call
	the edges of $G$ arising from $G_1$ horizontal edges and those arising from 
	$G_2$ vertical edges
	and denote the two edge-sets by $E_h$ and $E_v$. Then
	$\frac{|E|}{|V|}=\frac{|E_h|}{|V|}+\frac{|E_v|}{|V|}$. Thus it suffices to 
	prove the inequalities
	$\frac{|E_h|}{|V|}\le \dens(G_1)$ and $\frac{|E_v|}{|V|}\le \dens(G_2)$. 
	First, we will establish
	the first inequality. For each vertex $x\in V_2$ we will denote by 
	$L(x)=(V(x),E(x))$ the subgraph
	of $G$ induced by the vertices  of $G$ having $x$ as their second 
	coordinate (this
	subgraph is called the {\it $x$-layer} of $G$, see Fig. \ref{fig_layers}(a) 
	for an illustration). All
	horizontal edges of $G$ and all vertices of $G$
	are distributed in such layers and the layers are pairwise disjoint. Each 
	layer $L(x)$ is isomorphic 
    to a subgraph of
	$G_1$, thus $\dens(L(x))\le \dens(G_1)$ for any $x\in V_2$. Therefore,
	$$\frac{|E_h|}{|V|}=\frac{\sum_{x\in V_2} |E(x)|}{\sum_{x\in V_2} 
	|V(x)|}\le \max_{x\in V_2}
	\left\{\frac{|E(x)|}{|V(x)|}\right\}\le \dens(G_1),$$
	as required (we used the inequality $\frac{\sum_{i=1}^m a_i}{\sum_{i=1}^m 
	b_i}\le
	\max\left\{\frac{a_i}{b_i}\right\}$ for nonnegative numbers 
	$a_1,\ldots,a_m$ and $b_1,\ldots,b_m$).
	The second inequality can be shown with a similar argument.
\end{proof}

\begin{figure}
	\begin{minipage}{0.49\textwidth}
		\centering
		\includegraphics[width=0.5\textwidth]{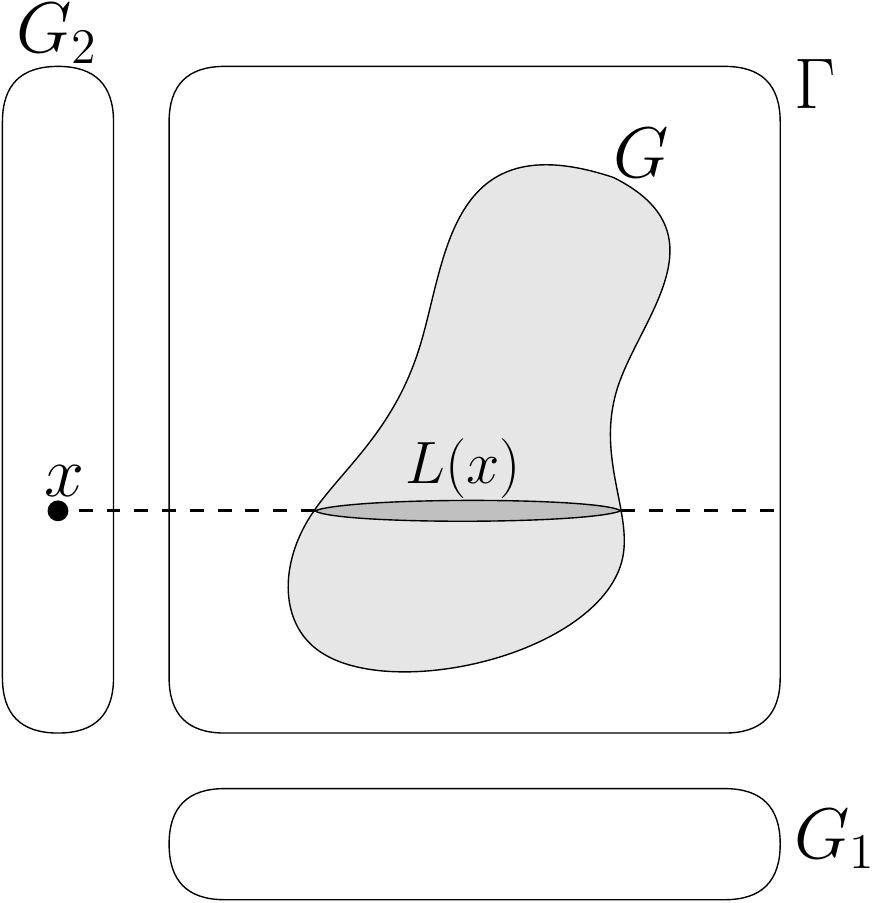}
	\end{minipage}
	\begin{minipage}{0.49\textwidth}
		\centering
		\includegraphics[width=0.5\textwidth]{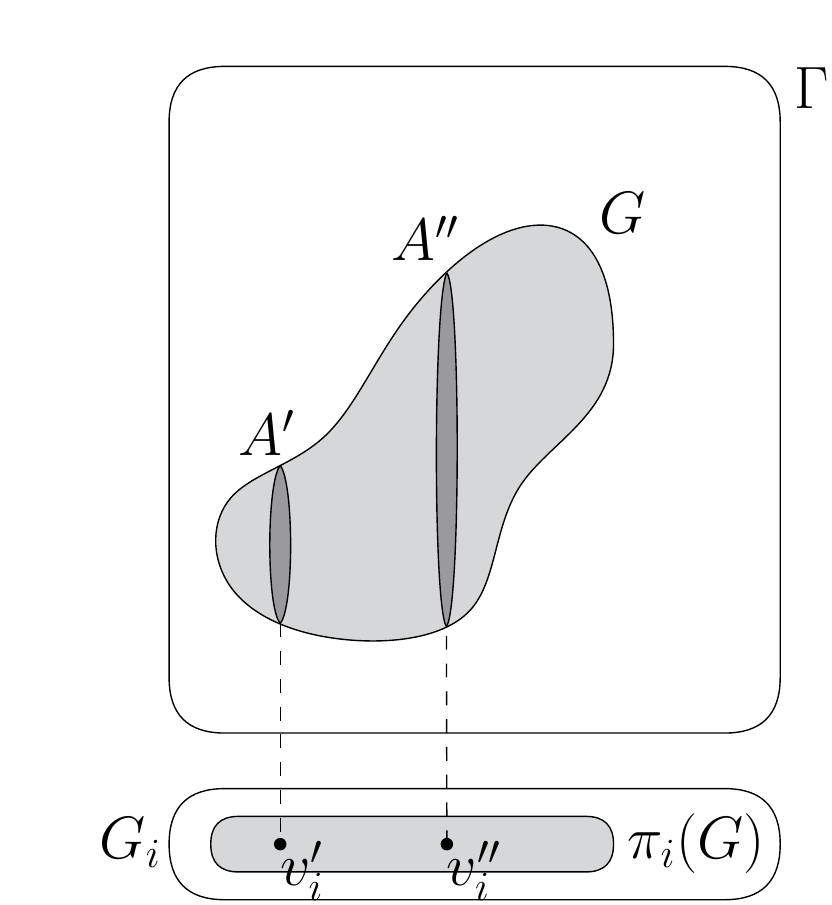}
	\end{minipage}
	
	\medskip
	\begin{minipage}{0.49\textwidth}
		\centering
		(a)
	\end{minipage}
	\begin{minipage}{0.49\textwidth}
		\centering
		(b)
	\end{minipage}
	\caption{\label{fig_layers} To the proofs of Lemma 
	\ref{lem_product_density} and Theorem
		\ref{thm_subgraphs_products1}.}
\end{figure}	

\subsubsection{Minors \cite{Die}}
A \emph{minor} of a graph $G$ is a graph $M$ obtained from a subgraph $G'$ of 
$G$ by
contracting some edges. Equivalently, $M$ is a minor of a connected graph $G$ 
if there exists a
partition of vertices of $G$ into connected subgraphs ${\mathcal P}=\{ 
P_1,\ldots,P_t\}$ and a
bijection $f: V(M)\rightarrow {\mathcal P}$ such that if $uv\in E(M)$ then 
there exists an edge of
$G$ running between the subgraphs $f(u)$ and $f(v)$ of $\mathcal P$, i.e., 
after contracting each subgraph $P_i\in {\mathcal P}$ into a single vertex  we 
will obtain a graph containing $M$ as a spanning subgraph. A class of graphs 
${\mathcal G}$ is called 
\emph{minor-closed} if for any graph
$G$ from ${\mathcal G}$ all minors of $G$ also belong to ${\mathcal G}$.

A \emph{minor-subproduct} of a Cartesian product $\Gamma:=\prod_{i=1}^mG_i$ of 
connected
graphs is a Cartesian product $M:=\prod_{i=1}^mM_{i}$, where $M_{i}$ is a minor 
of $G_{i}$ for all
$1\leq i\leq m$. Let ${\mathcal P}_{i}=\{ P^{i}_1,\ldots,P^{i}_{t_i}\}$ denote 
the partition of $G_{i}$
defining the minor $M_{i}$ and let ${\mathcal P}:={\mathcal 
P}_{1}\times\cdots\times{\mathcal
	P}_{m}$. Notice that $\mathcal P$ is a partition of the vertex set of the 
	Cartesian product
$\prod_{i=1}^mG_{i}$.

\subsection{VC-dimension and VC-density}

Let $\mcs$ be a family of subsets of a finite set $X=\{ e_1,\ldots, e_m\}$, 
i.e., $\mcs\subseteq 2^X$.
$\mcs$ can be viewed as a subset of vertices of the
$m$-dimensional hypercube $Q_m$. Denote by  $G(\mcs)$ the subgraph of $Q_m$ 
induced by the
vertices of $Q_m$ corresponding to the sets of $\mcs$;
$G(\mcs)$ is also called the \emph{1-inclusion graph} of $\mcs$ 
\cite{Hau,HaLiWa} (in \cite{GaGr},
such graphs were called {\it cubical graphs}).
Vice-versa, any induced subgraph $G=(V,E)$ of the hypercube $Q_m$ corresponds 
to a family of
subsets $\mcs(G)$ of $2^X$ with $|X|=m$ such that $G$
is the 1-inclusion graph of   $\mcs(G)$.

A subset $Y$ of $X$ is said to be \emph{shattered} by $\mcs$ if for any 
$Y'\subseteq Y$, there
exists a set   $S\in\mcs$ such that $S\cap Y=Y'$.
The \emph{Vapnik-Chervonenkis's dimension} \cite{VaCh,FuPa} $\vcd(\mcs)$ of 
$\mcs$ is the
cardinality of the largest subset of $X$ shattered by $\mcs$.
Viewing $Q_m$ as the $m$-fold Cartesian product $K_2\product \cdots\product
K_2$, the shattering
operation can be redefined in more graph-theoretical terms as follows.
For  $Y\subseteq X$ denote by $\Gamma_Y$ the Cartesian product of the factors of
$Q_m$ indexed by $e_i \in Y$ ($\Gamma_Y$ is a $|Y|$-dimensional cube).
Then a set $Y\subseteq X$ is shattered by $\mcs$ (or by $G(\mcs)$) if
$\pi_{\Gamma_Y}(G(\mcs))=\Gamma_Y$; see Fig. \ref{fig_shattering}, 
\ref{fig_P5_Q4}, and
\ref{fig_P5_Q3} for an illustration.

We continue with our main definitions of VC-dimension and VC-density for 
subgraphs of Cartesian
products of connected graphs. 	First, we define these notions with respect to 
subproducts.

\begin{definition} \label{vcdens} A subproduct $\Gamma':=\prod_{j=1}^kG'_{i_j}$ 
of a Cartesian
	product $\Gamma=\prod_{i=1}^m G_i$ is \emph{shattered} by $G$ if $F(v')\cap 
	V(G)\ne\emptyset$
	for any vertex $v'$
	of $\Gamma'$, i.e., if $\pi_{\Gamma'}(G)=\Gamma'$. The ({\it induced}) 
	\emph{VC-dimension}
	$\vcd(G)$ of $G$ with respect to the Cartesian product $\Gamma$ is the 
	largest number of
	non-trivial factors in a
	subproduct $\Gamma'$ of $\Gamma$ shattered by $G$. Equivalently, $\vcd(G)$ 
	is the largest
	dimension of a cube-subproduct of $\Gamma$ shattered by $G$ (since each 
	factor of $\Gamma'$
	is non-trivial).
	The \emph{VC-density} $\vcdens(G)$ of $G$ is  the largest density of a 
	subproduct $\Gamma'$
	shattered by $G$.
\end{definition}	

In the same vein, we consider now the notion of the VC-dimension and VC-density 
of subgraphs
$G$ of a Cartesian product $\Gamma$  with respect to minor subproducts.

\begin{definition} \label{minorvcdim}
	Let $M:=\prod_{i=1}^mM_{i}$ be a minor subproduct of $\Gamma=\prod_{i=1}^m 
	G_i$.  Let
	${\mathcal P}:={\mathcal P}_{1}\times\cdots\times{\mathcal P}_{m}$ be the 
	partition of $\Gamma$
	such that ${\mathcal P}_{i}=\{ P^{i}_1,\ldots,P^{i}_{t_i}\}$ is the 
	partition of $G_{i}$ defining the
	minor $M_{i}$, $i=1,\ldots,m$. The minor-subproduct $M$ is \emph{shattered} 
	by $G$
	if any set $P^{1}_{l_1}\times\cdots\times P^{m}_{l_m}$ of $\mathcal P$ 
	(where
	$l_1=1,\ldots,t_1,\ldots,\mbox{ and }l_m=1,\ldots,t_m$) contains a vertex 
	of $G$. The
	\emph{(minor) VC-dimension} $\vcd^*(G)$  of $G$ with respect to  $\Gamma$ is
	the largest number of non-trivial factors of a minor-subproduct $M$ of 
	$\Gamma$ shattered by
	$G$. The \emph{(minor) VC-density} $\vcdens^*(G)$  of $G$  is the largest 
	density $\dens(M)$ of
	a minor subproduct $M$ shattered by $G$.
\end{definition}

\begin{example} In Fig. \ref{example1}(1), we present a subgraph $G$ of the 
Cartesian product
	$\Gamma=G_1\product G_2$  of two graphs $G_1$ and $G_2$. In Fig.
	\ref{example1}(2), we provide
	a
	partition of $V(\Gamma)$ which induces a shattered minor isomorphic to
	the square $K_2\product
	K_2$. On the other hand, in Fig. \ref{example1}(3) we present a partition 
	of $\Gamma$ which
	induces
	the Cartesian product $P_3\product P_2$ of the paths of length 2 and
	length 1(see Fig.
	\ref{example1}(4)) and this product is not shattered by $G$.
\end{example}

\begin{figure}
	\begin{minipage}{0.49\textwidth}
		\centering
		\includegraphics[scale=0.3]{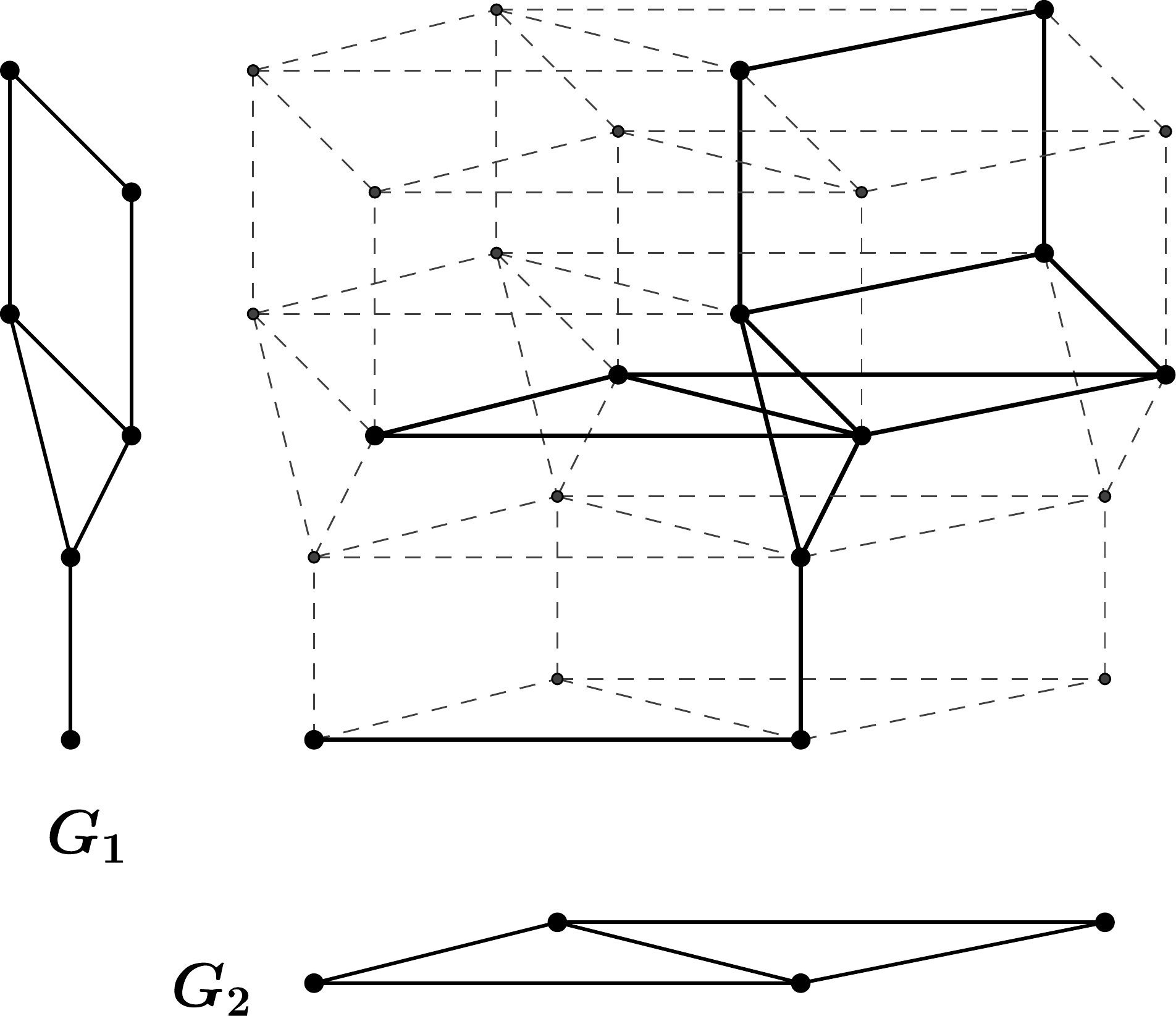}
	\end{minipage}
	\begin{minipage}{0.49\textwidth}
		\centering
		\includegraphics[scale=0.3]{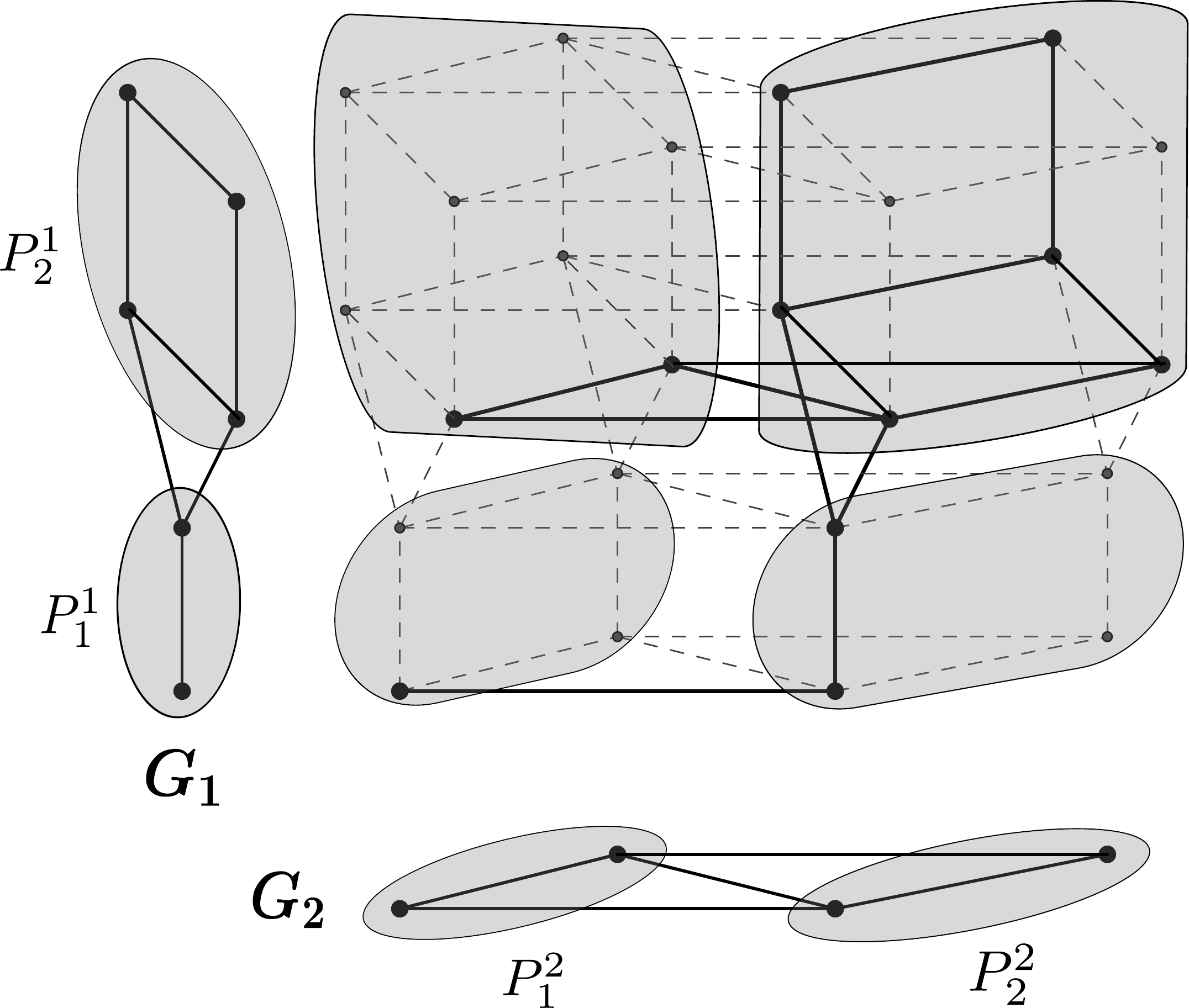}
	\end{minipage}
	
	\vskip1ex
	\begin{minipage}{0.49\textwidth}\centering{\bf (1)}\end{minipage}
	\begin{minipage}{0.49\textwidth}\centering{\bf (2)}\end{minipage}
	
	\vskip1ex
	\begin{minipage}{0.49\textwidth}
		\centering
		\includegraphics[scale=0.3]{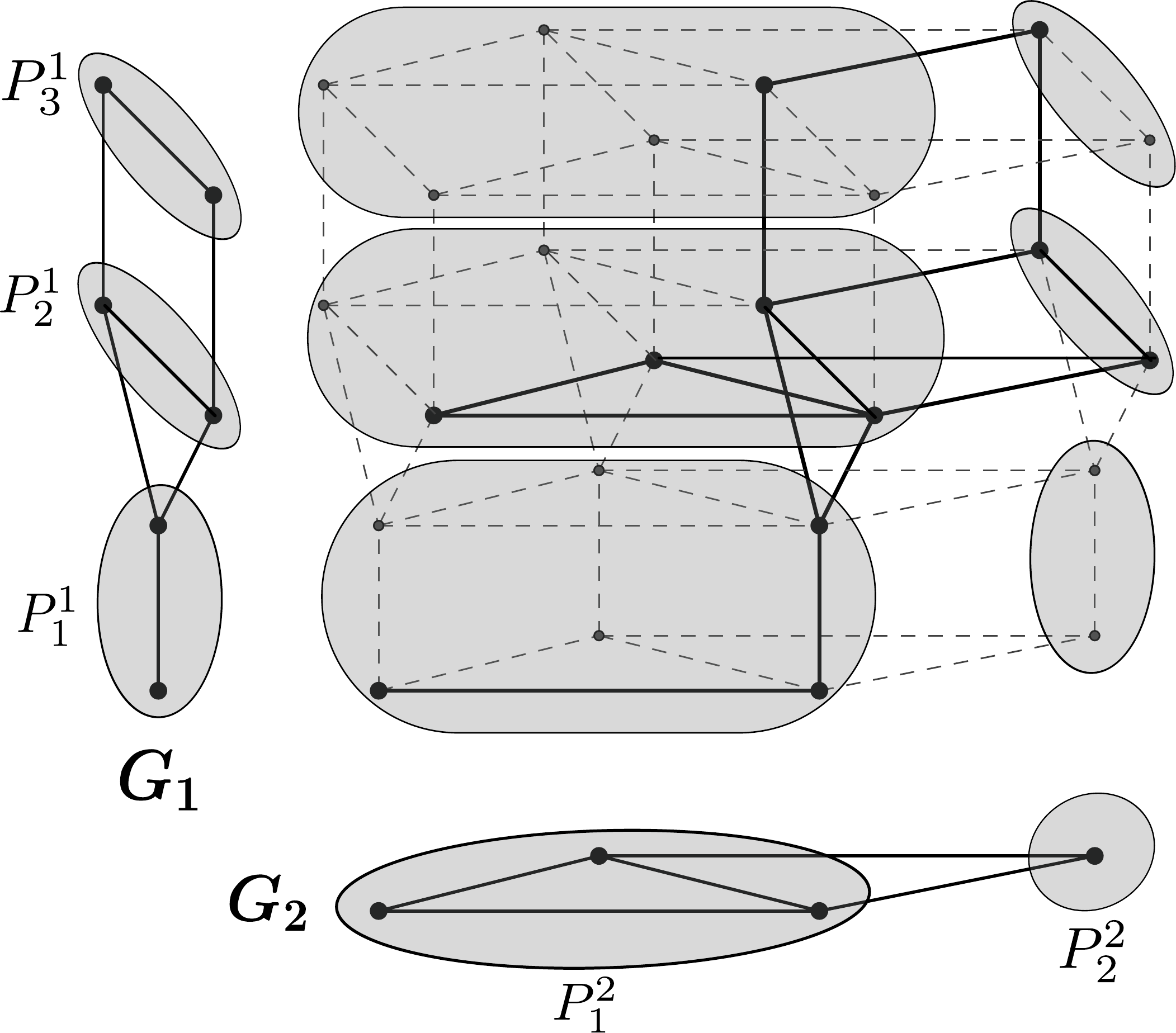}
	\end{minipage}
	\begin{minipage}{0.49\textwidth}
		\centering
		\includegraphics[scale=0.3]{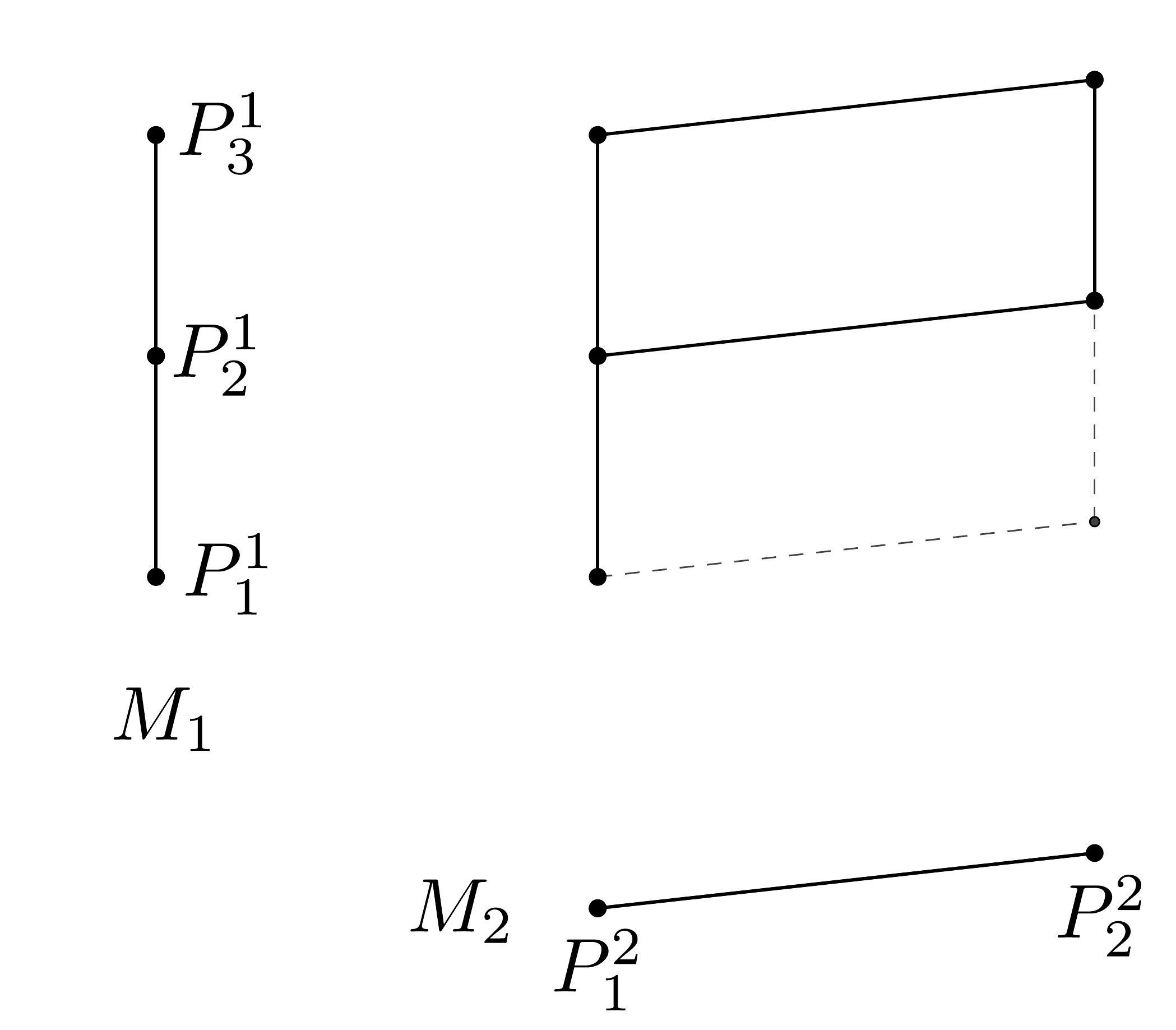}
	\end{minipage}
	
	\vskip1ex
	\begin{minipage}{0.49\textwidth}\centering{\bf (3)}\end{minipage}
	\begin{minipage}{0.49\textwidth}\centering{\bf (4)}\end{minipage}
	
	\caption{\label{example1} A subgraph $G$ of a Cartesian product of
	$\Gamma=G_1\product G_2$
		and two partitions  of $V(\Gamma)$, first  inducing a shattered by $G$ 
		minor-subproduct and
		second inducing a subproduct not shattered by $G$.}
\end{figure}

\begin{example} The inequalities $\vcd(G)\le \vcd^*(G)$ and $\vcdens(G)\le  
\vcdens^*(G)$ hold for
	any subgraph $G$ of $\Gamma$. In Fig.\ref{fig_vcd<vcd*}, we present a 
	simple example of a
	subgraph $G$ of the product of two 2-paths for which $\vcd(G)=1$ and 
	$\vcd^*(G)=2$.
\end{example}

\begin{figure}
	\begin{minipage}{0.49\textwidth}
		\centering
		\includegraphics[width=0.85\linewidth]{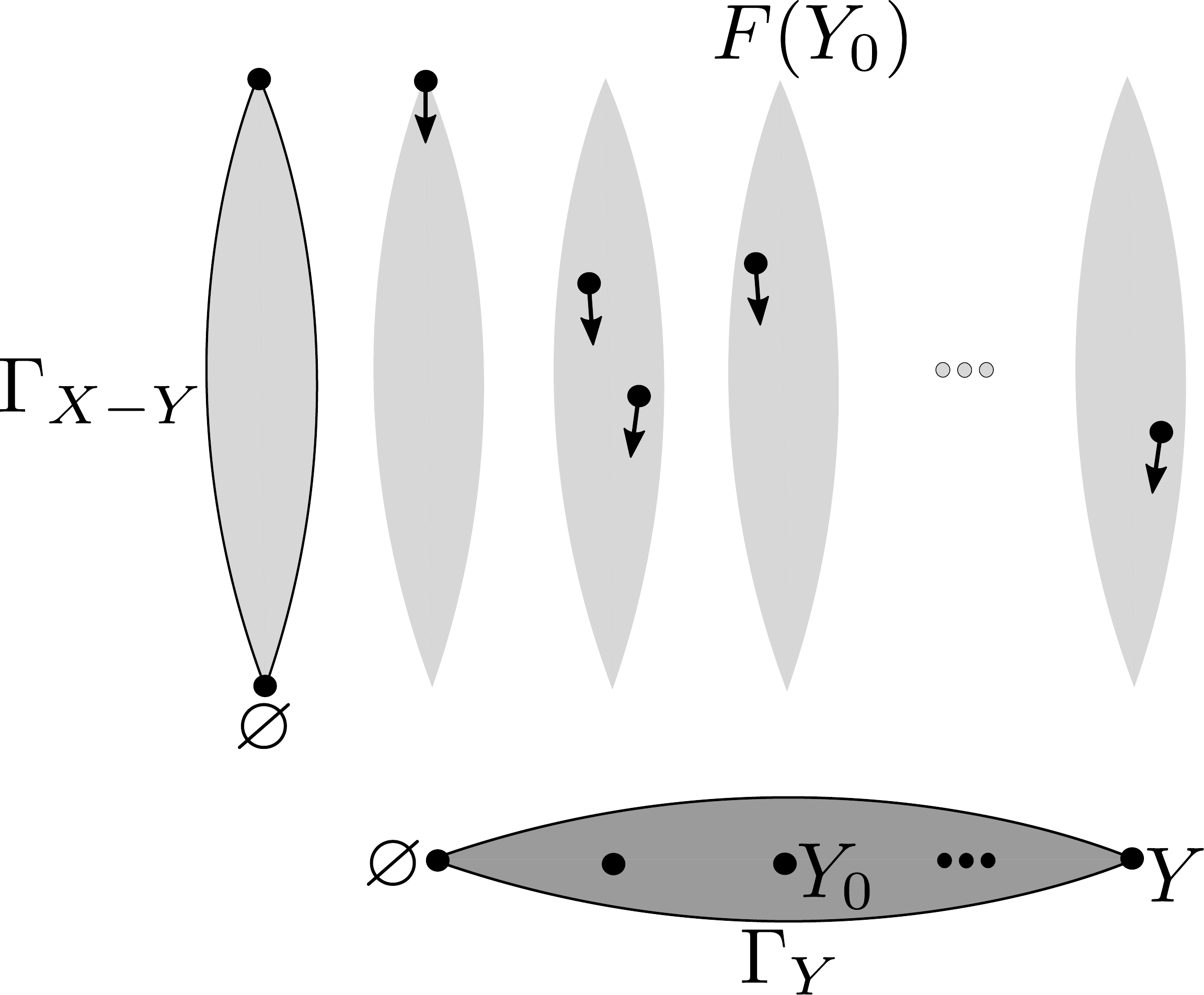}
	\end{minipage}
	\begin{minipage}{0.49\textwidth}
		\centering
		\includegraphics[width=0.7\linewidth]{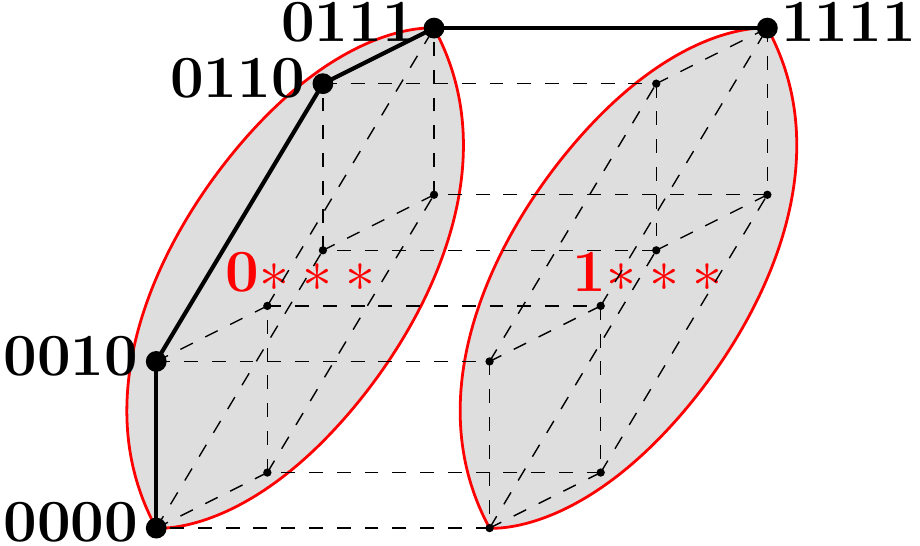}
	\end{minipage}
	
	\begin{minipage}{0.49\textwidth}\caption{\label{fig_shattering}
			A set $Y$ shattered by a set family $\mcs$ (the sets of $\mcs$ are 
			represented by black
			vertices in the gray fibers, see $F(Y_0)$, for example).
	}\end{minipage}
	\begin{minipage}{0.49\textwidth}\caption{\label{fig_P5_Q4} An embedding
	of $P_5$ in $Q_4$
			with VC-dimension $1$ ($\{0000,1000\}$ is shattered).}\end{minipage}
	
	\vskip2ex
	\begin{minipage}{0.49\textwidth}
		\centering
		\includegraphics[width=0.6\linewidth]{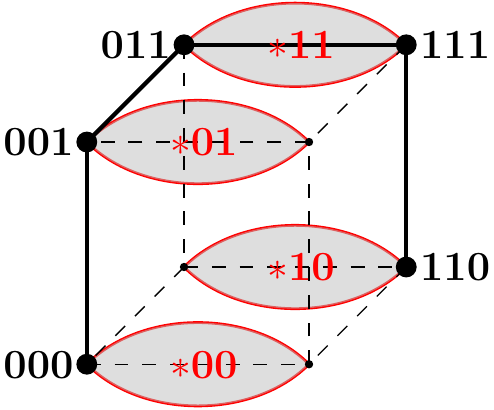}
	\end{minipage}
	\begin{minipage}{0.49\textwidth}
		\centering
		\includegraphics[width=0.7\linewidth]{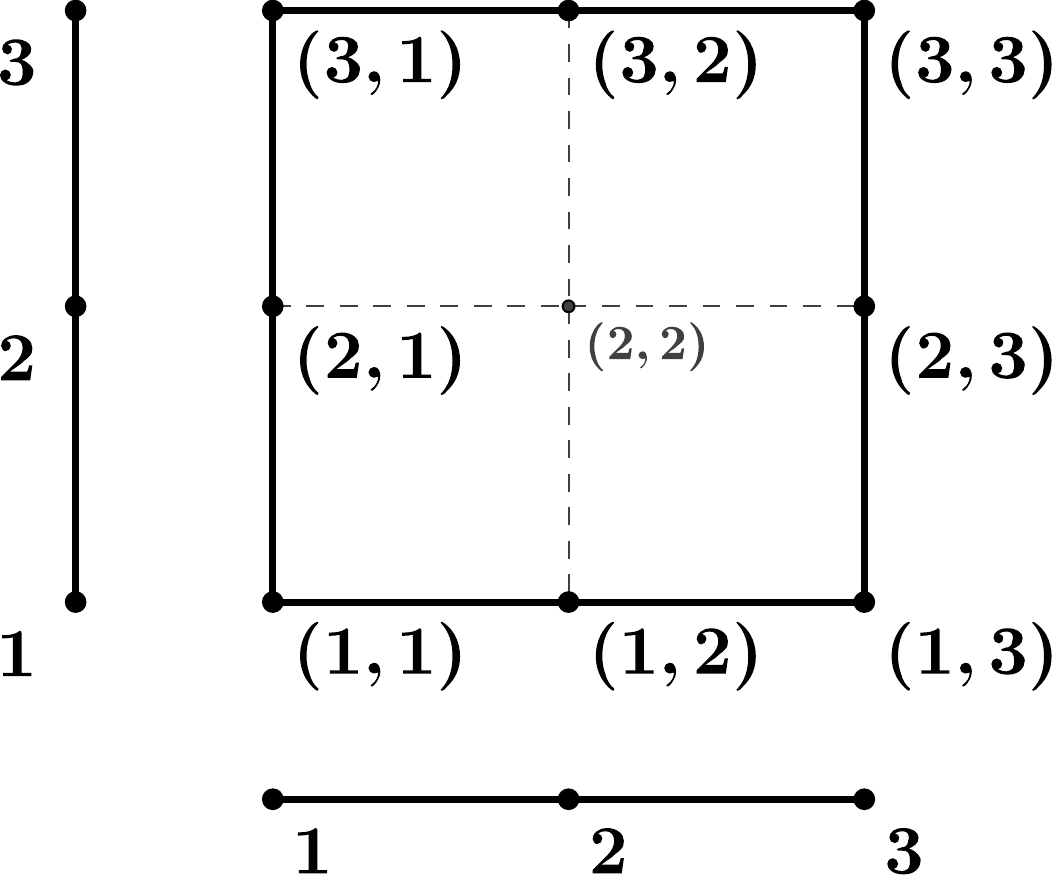}
	\end{minipage}
	
	\begin{minipage}{0.49\textwidth}\caption{\label{fig_P5_Q3} An embedding of 
	$P_5$ in $Q_3$
			with VC-dimension $2$ ($\{000,010,001,011\}$ is 
			shattered).}\end{minipage}
	\begin{minipage}{0.49\textwidth}\caption{\label{fig_vcd<vcd*} A
	subgraph $G$ of $P_3\product
			P_3$ that does not shatter any subproduct with $2$ factors but that 
			shatters a
			minor-subproduct induced by 
			$\mcp=\{\{1,2\},\{3\}\}\times\{\{1\},\{2,3\}\}$.}\end{minipage}
\end{figure}

\begin{remark}
	In case of  $m$-fold Cartesian products $F^m$ of a fixed graph $F$, 
	Cesa-Bianchi and Haussler
	\cite{CBHa} defined the notion of a $d$-dimensional projected cube, which 
	in this case  coincides
	with our notion of shattered cube-subfactor. 
\end{remark}

\begin{remark}
	As in the case of the classical VC-dimension, $\vcd(G)$ and $\vcd^*(G)$ are 
	defined with respect
	to an embedding of $G$ as a subgraph of the Cartesian product $\Pi_{i=1}^m 
	G_i$. For example, if
	$G$ is the path $P_5$ (with 5 vertices and 4 edges) and this path is 
	embedded in the 4-cube
	$Q_4$ such that the end-vertices of $P_5$ are $(0,0,0,0)$ and $(1,1,1,1)$, 
	then
	$\vcd^*(P_5)=\vcd(P_5)=1$. However, if $P_5$ is embedded in $Q_3$ (which 
	can be viewed as a
	face of $Q_4$) such the end-vertices of $P_5$ are $(0,0,0)$ and $(1,1,0)$, 
	then
	$\vcd^*(P_5)=\vcd(P_5)=2$. For illustration of the two embeddings of $P_5$ 
	see Figures
	\ref{fig_P5_Q4} and \ref{fig_P5_Q3}.
	
	In this paper, when we speak about $\vcd(G)$ or $\vcd^*(G)$ we assume that 
	an embedding of
	$G$ as an induced subgraph of a Cartesian product
	$G_1\product\cdots\product G_m$ is given. This is
	also essential from the computational point of view because already 
	recognizing if a graph $G$ is
	a subgraph of a hypercube is NP-complete \cite{AfPaPa}.
\end{remark}

\begin{remark}
	We present some motivation for the names ``VC-dimension'' and 
	``VC-density'' in the general
	setting of subgraphs of Cartesian products and for the way these concepts 
	have been defined.
	First notice that in case of subgraphs $G$ of hypercubes $Q_m$, i.e., 
	set-systems, the equality
	$\vcd(G)=\vcd^*(G)=\vcdens(G)=\vcdens^*(G)=\vcd(\mcs(G))$ holds and these 
	numbers  coincide
	with the dimension of the largest subcube $Q$ of $Q_m$ shattered by $G$. 
	This is because for
	hypercubes, i.e., Cartesian products of $K_2$, the notions of subproducts, 
	cube-subproducts,
	and minor-subproducts coincide. All these  dimensions coincide with the
	degrees of the vertices
	of $Q$ and thus coincide with the average degree $\mad(Q)$ of $Q$. This
	is one explanation
	why the VC-density of subgraphs of Cartesian products has been defined
	via the average
	degrees of shattered subproducts or shattered minor subproducts. Second, 
	Lemma
	\ref{lem_product_density} shows that subproducts are densest subgraphs of 
	Cartesian products.
	Therefore, one can expect that in case of subgraphs $G$ of Cartesian 
	products, the densest
	subproducts shattered by $G$ provide an upper bound for the density of $G$. 
	Third, for
	subgraphs of general products one cannot define $\vcdens(G)$ or 
	$\vcdens^*(G)$ as just the
	maximum number of factors in a shattered (minor) subproduct because the 
	factors in this
	subproduct may have completely different numbers of vertices, edges, or 
	average degrees.
	Finally, we use ``VC'' in the names because it concerns shattering, closely 
	related to classical
	VC-dimension.
\end{remark}

\section{Related work} \label{sect_original_method}  

\subsection{Subgraphs of hypercubes}	
In this subsection, we briefly review the inductive method of bounding the 
density of subgraphs
$G$ of hypercubes by the VC-dimension $d$ of  $\mcs=\mcs(G)$:
\begin{theorem}\cite[Lemma 2.4]{HaLiWa} \label{subgraphs_hypercubes} 
$\frac{|E(G)|}{|V(G)|}\le d.$
\end{theorem}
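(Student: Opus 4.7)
The plan is to follow Haussler's shifting (push-down) argument, which reduces the general case to the case of a downward-closed set family where the bound is transparent.

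For each coordinate $i \in \{1,\ldots,m\}$, I will introduce the shift $\sigma_i \colon 2^X \to 2^X$ defined by $\sigma_i(S) = S \setminus \{e_i\}$ if $e_i \in S$ and $S\setminus\{e_i\} \notin \mcs$, and $\sigma_i(S) = S$ otherwise, and set $\sigma_i(\mcs) = \{\sigma_i(S) : S \in \mcs\}$. The first routine verifications are that $|\sigma_i(\mcs)| = |\mcs|$ (no two sets collide under $\sigma_i$) and that $\vcd(\sigma_i(\mcs)) \le \vcd(\mcs)$ (any set shattered after shifting was already shattered before). These are the standard lemmas underlying Sauer's bound and I would state them as claims to be checked.

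The crucial step is to show that the shift does not decrease the edge count. Writing $E_j(\mcs)$ for the number of edges of $G(\mcs)$ in direction $j$ (pairs $\{S, S\triangle\{e_j\}\}$ both lying in $\mcs$), I would verify two things. First, $E_i(\sigma_i(\mcs)) = E_i(\mcs)$: after shifting in direction $i$, every $S' \in \sigma_i(\mcs)$ with $e_i \in S'$ automatically satisfies $S'\setminus\{e_i\} \in \sigma_i(\mcs)$, and the number of such $S'$ equals $E_i(\mcs)$. Second, $E_j(\sigma_i(\mcs)) \ge E_j(\mcs)$ for $j \ne i$: this is a short case analysis on the four possible membership patterns of $\{S, S\triangle\{e_j\}, S\setminus\{e_i\}, S\triangle\{e_j\}\setminus\{e_i\}\}$ in $\mcs$, showing that every edge in direction $j$ in $\mcs$ corresponds to at least one edge in direction $j$ in $\sigma_i(\mcs)$. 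I expect this case check to be the main technical obstacle, as it requires carefully tracking which of the four sets is moved by $\sigma_i$.

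Next, I apply the shifts $\sigma_1, \sigma_2, \ldots, \sigma_m$ repeatedly. Each shift either leaves $\mcs$ invariant or strictly decreases $\sum_{S \in \mcs} |S|$, so the process terminates in some family $\mcs^*$ that is stable under every $\sigma_i$. Stability exactly means $\mcs^*$ is downward-closed: whenever $S \in \mcs^*$ and $e_i \in S$, we have $S\setminus\{e_i\} \in \mcs^*$.

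Finally, I use the three properties transferred to $\mcs^*$: $|V(G(\mcs^*))| = |V(G)|$, $|E(G(\mcs^*))| \ge |E(G)|$, and $\vcd(\mcs^*) \le d$. Because $\mcs^*$ is downward-closed, every $S \in \mcs^*$ is shattered (together with all its subsets), so $\vcd(\mcs^*) = \max_{S \in \mcs^*} |S|$; hence every $S \in \mcs^*$ has $|S| \le d$. Counting edges of $G(\mcs^*)$ by the $(S, e_i)$ pairs with $e_i \in S \in \mcs^*$ gives
\[
|E(G)| \le |E(G(\mcs^*))| = \sum_{S \in \mcs^*} |S| \le d \, |\mcs^*| = d\, |V(G)|,
\]
which is the desired inequality.
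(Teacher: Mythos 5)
Your argument is correct, but it takes a genuinely different route from the one the paper uses. The paper proves this bound by the inductive decomposition of Haussler--Littlestone--Warmuth: for a coordinate $e$ it splits $\mcs$ into $\mcs_e$ and $\mcs^e$, invokes Lemma~\ref{VC-dim-prop} ($|\mcs|=|\mcs_e|+|\mcs^e|$, $\vcd(\mcs_e)\le d$, $\vcd(\mcs^e)\le d-1$), bounds $|E(G)|\le |E(G_e)|+|E(G^e)|+|V(G^e)|$, and closes by induction. You instead follow Haussler's shifting (push-down) proof: the maps $\sigma_i$ preserve $|\mcs|$, do not increase $\vcd$, do not decrease the edge count (your direction-$i$ equality and direction-$j$ four-set case analysis are the right verifications, and the potential $\sum_{S\in\mcs}|S|$ guarantees termination), and the terminal family is downward-closed, where every set is shattered and the edge count is exactly $\sum_{S}|S|\le d\,|\mcs|$. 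Both proofs are standard and complete; the paper explicitly cites the shifting proof as an alternative but deliberately recalls the inductive one because that is the template it generalizes in Section~\ref{products2}: the operators $\mcs_e$ and $\mcs^e$ become the graphs $G_{uv}$ and $G^{uv}_c$ for edges $uv$ of an arbitrary factor, whereas shifting relies on the partial order and downward-closure structure of $2^X$ and does not transfer readily to Cartesian products of general graphs. What your route buys instead is a normal form (a downward-closed family of the same size, VC-dimension, and at least as many edges), which is the engine behind Sauer's lemma and Haussler's packing bounds; what the paper's route buys is the extensibility that the rest of the paper depends on.
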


There are several  ways to prove this result (see \cite{Hau} for a proof using 
shifting operations, or
\cite{Har,Bezrukov} for edge-isoperimetric inequalities method), but our proofs 
in Section
\ref{products2} use the same idea as the inductive method we recall now. A 
similar proof also applies
to the classical {\it Sauer lemma} asserting that  sets families  of $\{ 
0,1\}^m$ of VC-dimension $d$
have size $O(m^d)$ (generalizations of Theorem \ref{subgraphs_hypercubes}
and of the Sauer lemma
were provided for subgraphs of Hamming graphs in \cite{HaLo,RuBaRu}). Both 
proofs are based on
the following fundamental lemma. For a finite set $X$ and $e\in X$, let 
$\mcs_e=\{ S'\subseteq
X\setminus \{ e\}: S'=S\cap X \text{ for some } S\in \mcs\}$ and $\mcs^e=\{ 
S'\subseteq X\setminus \{
e\}: S' \mbox{ and }  S'\cup \{ e\} \mbox{ belong to } \mcs\}.$

\begin{lemma} \label{VC-dim-prop} $|\mcs|=|\mcs_e|+|\mcs^e|$,  $\vcd(\mcs_e)\le 
d$, and
	$\vcd(\mcs^e)\le d-1$.
\end{lemma}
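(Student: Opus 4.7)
The plan is to prove the three claims in order, using a case analysis on whether a set in $\mcs$ does or does not have a ``partner'' (differing only in $e$) in $\mcs$.

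For the counting identity $|\mcs|=|\mcs_e|+|\mcs^e|$, I would set up the map $S \mapsto S\setminus\{e\}$ from $\mcs$ to $\mcs_e$ and analyze its fibers. Each $S'\in \mcs_e$ has either exactly one preimage (when exactly one of $S',S'\cup\{e\}$ lies in $\mcs$) or exactly two preimages (when both $S'$ and $S'\cup\{e\}$ lie in $\mcs$, i.e., when $S'\in \mcs^e$). Summing gives $|\mcs| = |\mcs_e \setminus \mcs^e| + 2|\mcs^e| = |\mcs_e| + |\mcs^e|$.

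For $\vcd(\mcs_e)\le d$, I would take any $Y\subseteq X\setminus\{e\}$ shattered by $\mcs_e$ and show it is already shattered by $\mcs$. Given $Y'\subseteq Y$, pick $S'\in\mcs_e$ with $S'\cap Y = Y'$, and let $S\in\mcs$ be any set with $S\setminus\{e\}=S'$. Since $e\notin Y$, we have $S\cap Y = S'\cap Y = Y'$, so $Y$ is shattered by $\mcs$ and therefore $|Y|\le d$.

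For $\vcd(\mcs^e)\le d-1$, the key idea is that shattering by $\mcs^e$ automatically gives ``$e$ for free,'' so I would show that any $Y\subseteq X\setminus\{e\}$ shattered by $\mcs^e$ yields $Y\cup\{e\}$ shattered by $\mcs$. Indeed, for any $Y'\subseteq Y$, there is $S'\in\mcs^e$ with $S'\cap Y = Y'$, and then both $S'$ and $S'\cup\{e\}$ belong to $\mcs$ by definition of $\mcs^e$; their intersections with $Y\cup\{e\}$ realize $Y'$ and $Y'\cup\{e\}$ respectively. As $Y'$ ranges over subsets of $Y$, this exhibits all $2^{|Y|+1}$ subsets of $Y\cup\{e\}$, so $|Y|+1 \le d$.

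None of the three steps looks difficult; the only point that requires a moment of care is making sure the double-counting in (1) is clean, i.e., that the two types of fibers of the map $S\mapsto S\setminus\{e\}$ are correctly identified with $\mcs_e\setminus\mcs^e$ and $\mcs^e$ respectively. After that, (2) and (3) are essentially ``unfold the definition of shattering and lift a witness from $\mcs_e$ (resp.\ $\mcs^e$) back to $\mcs$.''
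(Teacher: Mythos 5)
Your proof is correct and complete; the paper itself states this lemma without proof (it is recalled from Haussler--Littlestone--Warmuth \cite{HaLiWa}), and your argument --- fibering the map $S\mapsto S\setminus\{e\}$ for the count, and lifting shattering witnesses from $\mcs_e$ (resp.\ adjoining $e$ to a set shattered by $\mcs^e$) --- is exactly the standard one. One small remark: the paper's displayed definition of $\mcs_e$ contains a typo ($S'=S\cap X$ should read $S'=S\setminus\{e\}$), and you correctly worked with the intended definition.
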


The proof of the inequality $|E(G)|\le d\cdot |V(G)|$ in  Theorem 
\ref{subgraphs_hypercubes}
provided by Haussler, Littlestone, and Warmuth \cite{HaLiWa} is by induction 
using Lemma
\ref{VC-dim-prop}. 	 For $e\in X$, denote by $G_e$ and $G^e$ the subgraphs of 
$2^{X-\{ e\}}$
induced by  $\mcs_e$ and $\mcs^e$.  Then $|E(G_e)|\le d|V(G_e)|=d|\mcs_e|$ and 
$E(G^e)\le
(d-1)|V(G^e)|=(d-1)|\mcs^e|$ by Lemma \ref{VC-dim-prop} and induction 
hypothesis. The graph
$G_e$ is obtained from $G$ by contracting  the set $F$ of edges of type $e$ of 
$G$. The vertex set
of $G^e$ is in bijection with $F$ and two vertices of $G^e$ are adjacent iff 
the corresponding edges
of $F$ belong to a common square. By  Lemma \ref{VC-dim-prop},
$|V(G)|=|\mcs|=|\mcs_e|+|\mcs^e|=|V(G_e)|+|V(G^e)|$. The edges of $G$ which 
lead to loops or
multiple edges of $G_e$ are the edges of $G^e$ and of $F$, hence
$|E(G)|\leq|E(G_e)|+|E(G^e)|+|F|=|E(G_e)|+|E(G^e)|+|V(G^e)|$. From this 
(in)equality and after some
calculation, one deduce that $\frac{|E(G)|}{|V(G)|}\le d$.

\subsection{Other notions of VC-dimension in graphs}
We continue with a brief survey of other existing notions of VC-dimension in 
graphs.  Haussler and
Welzl \cite{HaWe} defined the VC-dimension of a graph $G=(V,E)$ as the 
VC-dimension of
the set family of closed neighborhoods of vertices of $G$. It was shown in 
\cite{HaWe} that this
VC-dimension of  planar graphs is at most 4. Anthony, Brightwell, and Cooper 
\cite{AnBrCo}
proved that this VC-dimension is at most $d$ if $G$ does not contain $K_{d+1}$ 
as a minor (they
also investigated this notion of VC-dimension for random graphs). These two 
results have
been extended in \cite[Proposition 1 \& Remark 1]{ChEsVa} to the families of 
closed balls of any fixed
radius. Kranakis et al. \cite{KrKrRuUrWo} considered the VC-dimension for other
natural families of sets of a graph: the families induced by trees, connected 
subgraphs, paths,
cliques, stars, etc.). They investigated the complexity issues for computing 
these
VC-dimensions and for some of them they presented upper bounds in terms of other
graph-parameters. The concept of VC-dimension of the family of shortest paths 
(in graphs with
unique
shortest paths) was exploited in \cite{AbDeFiGoWe} to improve the time bounds 
of query algorithms
for point-to-point shortest path problems in real world networks (in
particular, for road networks).

\section{Our results} \label{our_results}
The main purpose of this paper is to generalize the density results about 
subgraphs of hypercubes
to subgraphs $G$ of Cartesian products of connected graphs $G_1,\ldots,G_m$. 
Namely, we will
prove the following two results:

\begin{theorem}\label{thm_subgraphs_products1} Let $G$ be a subgraph of a 
Cartesian product
	$G_1\product\cdots\product G_m$ of connected graphs $G_1,\ldots, G_m$
	and let
	$\beta:=\ceil{\max\{\mad(G_1),\ldots,\textsl{}\mad(G_m)\}}$,
	$\beta_0=\ceil{\max\{\mad(\pi_1(G)),\ldots,\mad(\pi_m(G))\}}$. Then
	$$\frac{|E(G)|}{|V(G)|}\leq \beta_0\log |V(G)|\le \beta\log |V(G)|.$$
\end{theorem}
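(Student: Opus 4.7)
The plan is to prove $|E(G)| \le \beta_0 n \log n$ (where $n := |V(G)|$) by induction on $n$; the second inequality then follows immediately, since $\pi_i(G)$ is a subgraph of $G_i$, whence $\mad(\pi_i(G)) \le \mad(G_i)$ and $\beta_0 \le \beta$. The base case $n = 1$ is vacuous; if $n \ge 2$ then at least one coordinate $j$ has $|V(\pi_j(G))| \ge 2$ (otherwise all vertices of $G$ coincide).

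The core step is a binary split along coordinate $j$: pick a vertex $u \in V(\pi_j(G))$, let $A := F(u) \cap V(G)$, $B := V(G) \setminus A$, and decompose $|E(G)| = |E(G[A])| + |E(G[B])| + e(A,B)$. Every edge between $A$ and $B$ must differ only in coordinate $j$, since its endpoints agree in every other coordinate as adjacent vertices of $G$; hence each $x \in A$ contributes at most $d_{\pi_j(G)}(u)$ crossing edges and $e(A,B) \le d_{\pi_j(G)}(u)\cdot|A|$. Since $\pi_i(G[A])$ and $\pi_i(G[B])$ are subgraphs of $\pi_i(G)$ for each $i$, their $\beta_0$-parameters are $\le \beta_0(G)$, so the induction hypothesis applies to both pieces and yields
\[
|E(G)| \;\le\; \beta_0 |A|\log|A| + \beta_0 |B|\log|B| + d_{\pi_j(G)}(u)\cdot|A|.
\]

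To close the induction I would choose $u$ so that \emph{both} $d_{\pi_j(G)}(u) \le \beta_0$ and $|A| \le n/2$ hold: then $\log(n/|A|) \ge 1$, so $|A|\log(n/|A|) \ge |A|$ and $|B|\log(n/|B|) \ge 0$, which together give $|A|\log|A| + |B|\log|B| + |A| \le n\log n$, as required.

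The main obstacle is producing such a $u$. Lemma~\ref{prop:2vertexmad} supplies two distinct vertices $u_1, u_2$ of $\pi_j(G)$ of degree at most $\lceil \mad(\pi_j(G)) \rceil \le \beta_0$ when $\pi_j(G)$ is connected; for the disconnected case a short averaging argument works — if only one vertex of degree $\le \beta_0$ existed then $(\beta_0+1)(|V(\pi_j(G))|-1) \le \beta_0|V(\pi_j(G))|$ forces $|V(\pi_j(G))| \le \beta_0+1$, but then every vertex has degree at most $|V(\pi_j(G))|-1 \le \beta_0$ and they all qualify, a contradiction. Since $F(u_1) \cap V(G)$ and $F(u_2) \cap V(G)$ are disjoint in $V(G)$, at least one has size $\le n/2$, and that choice serves as $u$. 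Without this double-selection one would only be guaranteed a single low-degree vertex, possibly with an arbitrarily large fiber, and the asymmetric inequality $a\log a + b\log b + a \le n\log n$ breaks for $a$ close to $n$; the factor-of-two slack extracted from Lemma~\ref{prop:2vertexmad} is exactly what makes the recursion close.
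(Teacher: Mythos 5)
Your proof is correct and follows essentially the same route as the paper: induction on $n$, using Lemma~\ref{prop:2vertexmad} to find two low-degree vertices in some projection $\pi_j(G)$ and splitting off the smaller of their two fibers, with the crossing edges bounded by $\beta_0|A|$. Your explicit patch for the case where $\pi_j(G)$ is disconnected (which Lemma~\ref{prop:2vertexmad} as stated does not cover) is a welcome extra care, but it does not change the substance of the argument.
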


For a graph $H$, denote by ${\mathcal G}(H)$ the set of all finite 
$H$-minor-free graphs, i.e., graphs
not having $H$ as a minor. By results of Mader, Kostochka, and Thomason (see 
\cite[Chapter
8.2]{Die}), any graph $G$ with average degree $d(G)\ge cr\sqrt{\log r}$  has 
the complete graph
$K_r$ on $r$ vertices as a minor. Therefore, for each graph $H$ on $r$ 
vertices, there exists a
constant $\mu(H)\le cr\sqrt{\log r}$ such that all graphs $G$ from ${\mathcal 
G}(H)$ have average
degree $d(G)\le \mu (H)$. In case when the factors of the product $\Gamma$ 
belong to ${\mathcal
	G}(H)$ the following result, generalizing Theorem 
	\ref{subgraphs_hypercubes} and sharpening
Theorem \ref{thm_subgraphs_products1}, holds:

\begin{theorem}\label{thm_subgraphs_products} Let $H$ be a graph and let $G$ be 
a subgraph of
	a Cartesian product $\Gamma=G_1\product\cdots\product G_m$ of connected
	graphs
	$G_1,\ldots,G_m$
	from ${\mathcal G}(H)$. Then
	$$
        \frac{|E(G)|}{|V(G)|} \le \mu (H)\cdot\vcd^*(G) 
                              \le \mu(H) \cdot\log |V(G)|.
    $$
\end{theorem}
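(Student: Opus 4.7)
The plan is to generalize the inductive edge-contraction argument reviewed in Section~\ref{sect_original_method} for the hypercube case (Theorem~\ref{subgraphs_hypercubes}) from two-vertex factors $K_2$ to arbitrary connected factors $G_i \in \mathcal{G}(H)$. The induction proceeds on $\sum_{i=1}^m |E(G_i)|$; when this sum is zero, $\Gamma$ is a single vertex, $G$ has no edges, and the inequality holds trivially. For the inductive step, pick an edge $e = uv$ in some factor $G_i$ with $|V(G_i)| \ge 2$, and introduce two auxiliary graphs $G_e$ and $G^e$ that play the roles of the set-families $\mathcal{S}_e$ and $\mathcal{S}^e$ of Lemma~\ref{VC-dim-prop}.

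Let $G_e$ be the image of $G$ in the smaller product $\Gamma/e := G_1 \product \cdots \product (G_i/e) \product \cdots \product G_m$ obtained by collapsing each edge of $G$ of type $uv$ to a single vertex, and let $G^e$ be the subgraph of $\prod_{j \ne i} G_j$ whose vertex set is the set of edges of $G$ of type $uv$ (identified with their common projection to $\prod_{j \ne i} G_j$), with two vertices $\bar{x},\bar{y}$ declared adjacent exactly when both $(\bar{x},u)$–$(\bar{y},u)$ and $(\bar{x},v)$–$(\bar{y},v)$ are edges of $G$. A direct bookkeeping yields $|V(G)| = |V(G_e)| + |V(G^e)|$ and $|E(G)| \le |E(G_e)| + |E(G^e)| + |V(G^e)|$ (the inequality accounts for the fact that, outside the pure hypercube setting, other edges of $G_i$ incident to $u$ or $v$ can create additional parallel collapses that must be charged carefully to $G^e$). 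Because $G_i/e$ is a minor of $G_i$, all factors of $\Gamma/e$ and of $\prod_{j \ne i} G_j$ still belong to $\mathcal{G}(H)$, and the total edge-count of the factor set has strictly decreased, so the inductive hypothesis applies to both $G_e$ and $G^e$.

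The heart of the argument, and what I expect to be the main obstacle, is to prove $\vcd^*(G_e) \le \vcd^*(G)$ and $\vcd^*(G^e) \le \vcd^*(G) - 1$. The first follows by pulling back a shattered minor-subproduct of $\Gamma/e$ to $\Gamma$: the class of the partition of $G_i/e$ that contains the contracted vertex $[uv]$ is replaced by its preimage in $G_i$, which remains connected since $uv \in E(G_i)$. The second is more delicate and exploits the minor flexibility of $\vcd^*$: choose a spanning tree $T$ of $G_i$ containing $uv$, so that removing $uv$ from $T$ partitions $V(G_i)$ into two connected classes $V(T_u), V(T_v)$ giving a $K_2$-minor of $G_i$. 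Pairing this $K_2$-minor with any shattered minor-subproduct $M$ of $\prod_{j \ne i} G_j$ (with $k$ non-trivial factors) yields a minor-subproduct of $\Gamma$ with $k+1$ non-trivial factors that is shattered by $G$, because every box of $G^e$ contains an edge of $G$ of type $uv$ whose two endpoints supply vertices of $G$ in both the $V(T_u)$-side and $V(T_v)$-side boxes of $\Gamma$.

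Combining the counting identities, the inductive hypothesis, and the two VC-dimension bounds gives
\[
  |E(G)| \le \mu(H)\vcd^*(G)|V(G_e)| + \mu(H)(\vcd^*(G)-1)|V(G^e)| + |V(G^e)| = \mu(H)\vcd^*(G)|V(G)| - (\mu(H)-1)|V(G^e)|,
\]
which is at most $\mu(H)\vcd^*(G)|V(G)|$ as soon as $\mu(H) \ge 1$ (the residual case in which $\mathcal{G}(H)$ contains only edgeless graphs is trivial because then $G$ has no edges). The second inequality $\vcd^*(G) \le \log|V(G)|$ is immediate: a shattered minor-subproduct with $k$ non-trivial factors contains at least $2^k$ boxes, each of which meets $V(G)$, so $|V(G)| \ge 2^k$.
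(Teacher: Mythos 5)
Your overall strategy mirrors the paper's: contract all edges of one type $uv$, form a quotient graph and a ``link'' graph, show that the minor VC-dimension drops by one for the link graph, and close an induction with counting identities. Your two VC-dimension claims are essentially the paper's Lemmas \ref{prop_VC(G_uv)} and \ref{prop_VC(G^uv)} and are correct as sketched (the spanning-tree $K_2$-minor of $G_i$ is exactly the paper's partition of $V(G_i)$ into two connected parts containing $u$ and $v$). The genuine gap is in the counting step, precisely at the point you flag parenthetically and then do not resolve. The inequality $|E(G)|\le|E(G_e)|+|E(G^e)|+|V(G^e)|$ is false whenever $uv$ lies in a triangle of $G_i$: if $x$ is a common neighbour of $u$ and $v$ and $(\bar a,u),(\bar a,v),(\bar a,x)$ all lie in $G$, the two edges from $(\bar a,x)$ to $(\bar a,u)$ and to $(\bar a,v)$ collapse to a single edge of $G_e$, and the lost edge is recorded neither in $E(G^e)$ (your $G^e$ only sees squares with two opposite $uv$-type edges) nor in the term $|V(G^e)|$ (which only pays for the contracted $uv$-type edges themselves). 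Concretely, take $m=1$ and $G=G_1$ the planar book $K_2+\overline{K_t}$ with $uv$ the spine: then $|E(G)|=2t+1$ while $|E(G_e)|+|E(G^e)|+|V(G^e)|=t+1$, a deficit of $t$ that is unbounded over ${\mathcal G}(K_5)$ even though $\mu(K_5)$ is a fixed constant. So no charging scheme to your $G^e$ can work for an arbitrarily chosen edge.

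The two missing ingredients are exactly what the paper supplies. First, the edge cannot be arbitrary: one uses the fact that each $G_i\in{\mathcal G}(H)$ is $\mu(H)$-degenerate to choose $v$ of degree at most $\mu(H)$ in (the current) $G_i$ and $u$ a neighbour of $v$, so that the set $N$ of common neighbours of $u$ and $v$ has $|N|\le\mu(H)-1$. Second, the link graph is enlarged: the paper's $G^{uv}$ has, besides the central vertices (your $V(G^e)$, the paper's $V(G^{uv}_c)$), one tip vertex per triangle of $G$ lying over a triangle $xuv$ of $G_i$, and the corrected count (Lemma \ref{lem_counting_V_and_E}) is $|E(G)|\le|E(G_{uv})|+|E(G^{uv})|+|V(G^{uv}_c)|$ with $|E(G^{uv})|=|E(G^{uv}_c)|+|V_l(G^{uv})|$ and $|V_l(G^{uv})|\le|N|\cdot|V(G^{uv}_c)|$ (Lemma \ref{lem_G_c^uv}). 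The slack of $(\mu(H)-1)|V(G^e)|$ that you observe at the end of your computation is then consumed exactly by these $|N|\cdot|V(G^{uv}_c)|\le(\mu(H)-1)|V(G^{uv}_c)|$ extra edges, so with both corrections your argument closes; without the degeneracy-based choice of $uv$ there is no bound on the number of tips per central vertex and the induction fails. A minor structural difference is that the paper keeps a star $\widetilde G_i$ as the $i$-th factor of the product hosting $G^{uv}$ and inducts on the total number of vertices of the factors rather than deleting the factor and inducting on total edges; your variant is workable, but the tip vertices must live somewhere.
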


We conjecture that in fact a stronger result holds:

\begin{conjecture} \label{conj_E/V<vcdens} Let $G$ be a subgraph of the 
Cartesian product
	$G_1\product\cdots\product G_m$. Then  $\frac{|E(G)|}{|V(G)|}\le
	\vcdens^*(G).$
\end{conjecture}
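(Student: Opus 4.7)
The plan is to adapt the inductive shifting argument of Theorem \ref{subgraphs_hypercubes} to arbitrary factors, proceeding by induction on $|V(G)|$. Fix any factor $G_i$ and an edge $uv\in E(G_i)$, and partition $V(G)$ into the set $D$ of vertices $x\in V(G)$ with $x_i = u$ whose $v$-twin $x[i\mapsto v]$ also lies in $V(G)$, together with the complement $U$. Define $G_{uv}$ as the subgraph of the reduced ambient product $\Gamma_{uv}:=G_1\product\cdots\product(G_i/uv)\product\cdots\product G_m$ induced by the image of $V(G)$ after merging the $u$- and $v$-slices, and define $G^{uv}$ as the graph on $D$ whose edges record pairs of type-$uv$ edges of $G$ sharing a common square, in exact analogy with the hypercube case. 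A direct count then gives $|V(G)| = |V(G_{uv})| + |V(G^{uv})|$ and $|E(G)| = |V(G^{uv})| + |E(G_{uv})| + |E(G^{uv})|$, with the three terms accounting respectively for type-$uv$ edges of $G$, non-type-$uv$ edges that survive the contraction, and edges of $G^{uv}$ that are collapsed by it.

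The induction closes if one can establish two monotonicity properties of $\vcdens^*$. First, $\vcdens^*(G_{uv}) \le \vcdens^*(G)$: any shattered minor-subproduct of $\Gamma_{uv}$ should lift to one of $\Gamma$ by refining only the $i$-th partition so as to keep $u$ and $v$ in a common part, the density being preserved by Lemma \ref{lem_product_density}. Second, $\vcdens^*(G^{uv}) \le \vcdens^*(G) - 1$: this is meant to express that freeing direction $uv$ as a $K_2$-minor-factor of $G_i$ inside a shattered minor-subproduct of $\Gamma$ buys exactly one unit of density. Granted these two bounds, the arithmetic mirrors the hypercube proof and yields $|E(G)| \le \vcdens^*(G)\cdot|V(G)|$. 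The base case $m=1$ is immediate: any subgraph $G$ of $G_1$ witnesses a shattered one-factor minor-subproduct of density equal to $\dens(G)$, so $\dens(G) \le \vcdens^*(G)$.

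The principal obstacle is the second monotonicity property. In the hypercube case, a shattered subcube witnessed in $\mcs^e$ directly yields a strictly higher-dimensional shattered subcube in $\mcs$ by adjoining coordinate $e$. For general factors, the analogous step would take a shattered minor-subproduct of $\Gamma_{uv}$ witnessed by $G^{uv}$ and enlarge it to one of $\Gamma$ by splitting $\{u,v\}$ off as a separate $K_2$-minor-factor of $G_i$; however, refining the existing partition of $V(G_i)$ in this way may well disrupt the shattering condition on the other parts. A plausible remedy, in the spirit of Haussler's proof \cite{Hau}, is first to normalize $G$ via a sequence of shifting (push-down) operations on the direction $uv$ until every type-$uv$ edge that can consistently appear does appear, and then to verify the second monotonicity for the resulting stable configurations. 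It is worth noting that a simpler layer/column decomposition via Lemma \ref{lem_product_density}, bounding horizontal and vertical contributions separately, yields only $\dens(G)\le 2\,\vcdens^*(G)$, because the per-factor shattered minors witnessed by different layers need not be jointly shattered by $G$; eliminating this factor of two is precisely what forces the delicate per-edge analysis outlined above.
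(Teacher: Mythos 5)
The statement you were asked about is not proved in the paper: it is stated as Conjecture \ref{conj_E/V<vcdens} and explicitly left open. The paper establishes only the weaker bound $\frac{|E(G)|}{|V(G)|}\le\mu(H)\cdot\vcd^*(G)$ under the additional hypothesis that all factors are $H$-minor-free (Theorem \ref{thm_subgraphs_products}), and offers Lemma \ref{lem_vcdens<alpha.vcdim}, i.e. $\vcdens^*(G)\le\frac{\mu(H)}{2}\vcd^*(G)$, only as partial evidence. Your proposal correctly reproduces the skeleton of the paper's inductive machinery (the graphs $G_{uv}$ and $G^{uv}$, the vertex and edge counts of Lemma \ref{lem_counting_V_and_E}, and the two monotonicity properties, which are the analogues of Lemmas \ref{prop_VC(G_uv)} and \ref{prop_VC(G^uv)}), but it does not close the induction, and you say so yourself. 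So there is a genuine gap, and it is essentially the same one the authors face.

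Concretely, two things are missing. First, your edge count is stated as an equality and your $G^{uv}$ lives only on the set $D$ of matched $u$-vertices; when the edge $uv$ lies in triangles of $G_i$, the contraction also collapses edges coming from those triangles, which is why the paper must distinguish $G^{uv}$ from $G^{uv}_c$, introduce the tip vertices $V_l(G^{uv})$, and settle for the inequality $|E(G)|\le|E(G_{uv})|+|E(G^{uv})|+|V(G^{uv}_c)|$. Second, and more seriously, even granting the monotonicity $\vcdens^*(G^{uv}_c)\le\vcdens^*(G)-1$ (which the paper asserts as item (2) of Lemma \ref{prop_VC(G^uv)} and singles out in its closing remark as the key ingredient for the conjecture), the induction step requires bounding $\frac{|E(G^{uv}_c)|+|V(G^{uv})|}{|V(G^{uv}_c)|}$, and the ratio $\frac{|V(G^{uv})|}{|V(G^{uv}_c)|}$ is controlled only by the number $|N|$ of common neighbours of $u$ and $v$ (Lemma \ref{lem_G_c^uv}); this additive term, not expressible in terms of $\vcdens^*(G)$ alone, is precisely where the extraneous factor $\mu(H)$ enters and why the final arithmetic yields $\mu(H)\cdot\vcd^*(G)$ rather than $\vcdens^*(G)$. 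Your proposed remedy (normalizing by shifting in the direction $uv$ before verifying the second monotonicity) is a reasonable idea, but it is not carried out, and nothing in the paper carries it out either. As it stands your text is a research plan with an honestly flagged hole, not a proof of the conjecture.
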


A partial evidence for this conjecture is Lemma \ref{lem_vcdens<alpha.vcdim} 
below showing that
$\vcdens^*(G) \leq \frac{\mu(H)}{2}\cdot\vcd^*(G)$. 	

As a direct consequence of \cite[Lemma 3.1]{AlTa} and Theorem 
\ref{thm_subgraphs_products}, we
obtain the following:

\begin{corollary}\label{corol_orientation}
	If $G$ is a subgraph of a Cartesian product of connected graphs from 
	${\mathcal G}(H)$ and $d
	:= \vcd^*(G)$, then $G$ has an orientation $D$ in which every outdegree is 
	at most $d\mu(H)$.
	
\end{corollary}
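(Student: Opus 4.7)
The plan is to combine Theorem \ref{thm_subgraphs_products} with the classical Alon--Tarsi orientation criterion. Recall that \cite[Lemma 3.1]{AlTa} states that a graph $G$ admits an orientation with maximum out-degree at most $k$ if and only if $|E(G')| \leq k \cdot |V(G')|$ for every subgraph $G'$ of $G$. Setting $k := d\mu(H)$, it therefore suffices to verify this density bound uniformly on every subgraph of $G$.

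To this end, I would first record a monotonicity observation: if $G' \subseteq G$ and a minor-subproduct $M = \prod_{i=1}^m M_i$ of $\Gamma$ is shattered by $G'$ (witnessed by partitions ${\mathcal P}_i$ as in Definition \ref{minorvcdim}), then every block $P^1_{l_1}\times\cdots\times P^m_{l_m}$ of the product partition $\mathcal P$ contains a vertex of $G'$, hence also a vertex of $G$, so $M$ is shattered by $G$ as well. Consequently $\vcd^*(G') \leq \vcd^*(G) = d$ for every subgraph $G' \subseteq G$.

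Next, any such subgraph $G'$ is itself a subgraph of the same Cartesian product $G_1 \product \cdots \product G_m$ whose factors lie in ${\mathcal G}(H)$, so Theorem \ref{thm_subgraphs_products} applies to $G'$ and gives
\[
   |E(G')| \;\leq\; \mu(H)\cdot \vcd^*(G') \cdot |V(G')| \;\leq\; d\mu(H)\cdot |V(G')|.
\]
Since this bound holds for every subgraph $G'$ of $G$, the Alon--Tarsi criterion produces an orientation $D$ of $G$ with every out-degree at most $d\mu(H)$.

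There is no real obstacle here: the only point requiring comment is the subgraph-monotonicity of $\vcd^*$, which follows immediately from the definition, and this is why the corollary is advertised as a \emph{direct} consequence of Theorem \ref{thm_subgraphs_products} and \cite[Lemma 3.1]{AlTa}.
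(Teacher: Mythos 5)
Your proof is correct and is precisely the argument the paper intends: the paper presents the corollary as a direct consequence of Theorem \ref{thm_subgraphs_products} and the Alon--Tarsi orientation criterion without writing out the details, and your filling in of the subgraph-monotonicity of $\vcd^*$ plus the application of the density bound to every subgraph $G'\subseteq G$ is exactly the missing routine verification.
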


The proof of Theorem \ref{thm_subgraphs_products1} is given in Section 
\ref{sect_products1}. The
proof of Theorem \ref{thm_subgraphs_products} occupies Section \ref{products2}, 
where we also
present basic properties of the VC-dimension and VC-density of subgraphs of 
Cartesian products.

\section{Proof of Theorem \ref{thm_subgraphs_products1}}\label{sect_products1}

Let $n:=|V(G)|$. Since $\dens(G'_i)\le \dens(G_i)$ for any subgraph $G'_i$ of 
$G_i$ for
$i=1,\ldots,m$,
$$\ceil{\max\{\dens(\pi_1(G)),\ldots,\dens(\pi_m(G))\}}\le 
\ceil{\max\{\dens(G_1),\ldots,\dens(G_m)\}}$$
holds, establishing the second inequality $\beta_0\log |V(G)|\le \beta\log 
|V(G)|$ (recall that
$\pi_i(G)$ is the projection of $G$ on factor $G_i$). We prove the inequality
$$\frac{|E(G)|}{|V(G)|}\leq \beta_0\log
|V(G)|=\ceil{\max\{\mad(\pi_1(G)),\ldots,\mad(\pi_m(G))\}}\cdot\log n$$
by induction on $n$. If $n=1$, then we are obviously done. So suppose that 
$n\ge 2$. Then the
projection $\pi_i(G)$ on some factor $G_i$ contains at least two vertices. From
Lemma~\ref{prop:2vertexmad}, it follows that $\pi_i(G)$ has two vertices $v'_i$ 
and $v''_i$ of degree
at most $\ceil{\mad(\pi_i(G))}\le \beta_0$. Denote by $A'$ (resp. $A''$) the 
set of all vertices of the
graph $G$ having $v'_i$ (resp. $v''_i$) as their $i$th coordinate (see 
Fig.\ref{fig_layers}(b)).
Since $v'_i,v''_i\in V(\pi_i(G))$, both $A'$ and $A''$ are nonempty. At least 
one of the sets $A',A''$
contains at most $n/2$ vertices, say $|A'|\le \frac{n}{2}$. Set 
$B:=V(G)\setminus A'$ and denote by
$G'$ and $G''$ the subgraphs of $G$ induced by the sets $A'$ and $B$. Let 
$\beta'_0:=\ceil{\max\{
	\mad(\pi_1(G')),\ldots,\mad(\pi_m(G'))\}}$ and $\beta''_0:=\ceil{\max\{
	\mad(\pi_1(G'')),\ldots,\mad(\pi_m(G''))\}}$. Since $\pi_j(G')$ and 
	$\pi_j(G'')$ are subgraphs of
$\pi_i(G)$,
$j=1,\ldots,m$, $\beta'_0,\beta''_0\le \beta_0$. By induction assumption, 
$|E(G')|\leq
\beta'_0|A'|\log|A'|\le \beta_0|A'|\log|A'|$ and $|E(G'')|\leq 
\beta''_0|B|\log|B|\le  \beta_0|B|\log|B|$.
Since $A'$ and $B$ partition $V(G)$, $E(G)$ is the disjoint union of 
$E(G'),E(G'')$, and $E(A',B)$,
where $E(A',B)$ is the set of edges of $G$ with one end in $A'$ and another end 
in $B$. Since any
vertex of $A'$ has at most $d(v'_i)\le\ceil{\mad(\pi_i(G))}\le \beta_0$ 
incident edges in $E(A',B)$, we
obtain $|E(A',B)|\le \beta_0\cdot|A'|=\beta_0|A'|\log2$. Consequently,
$$
\begin{aligned}
|E(G)| &=|E(G')| + |E(G'')| +|E(A',B)|									\\
&\leq \beta_0|A'|\log|A'| + \beta_0|B|\log|B| + \beta_0|A'|\log2 \\
&= \beta_0|A'|\log(2|A'|) + \beta_0|B|\log|B|	\\
&\leq \beta_0(|A'|+|B|)\log n = \beta_0 n \log n.
\end{aligned}
$$
\qed

\section{Proof of Theorem \ref{thm_subgraphs_products}}\label{products2}

The proof of Theorem \ref{thm_subgraphs_products} goes along the lines of the 
proof of Theorem
\ref{subgraphs_hypercubes} of \cite{HaLiWa}
but is technically more involved.  The roadmap of the proof is as follows.
The first three results (Lemmas \ref{VC-dim-prop1}, 
\ref{lem_vcdens<alpha.vcdim}, and
\ref{VC-dim-prop2}) present the elementary properties of VC-dimension and 
VC-density
and some relationships between them. To prove an analog of Lemma 
\ref{VC-dim-prop} we have to
extend
to subgraphs of products of graphs the operators $\mcs_e$ and $\mcs^e$ defined 
for set systems
$\mcs$. For this, we pick  two adjacent
vertices $u$ and $v$ of some factor $G_i$ and define the graphs $G_{uv}$ and 
$G^{uv}_c$. The
graph $G_{uv}$ is obtained from $G$ by contracting every edge of type $uv$.
$G_{uv}$ is a subgraph of the Cartesian product having the same factors as
$\Gamma=G_1\product\cdots\product G_m$, only the $i$th factor is $G_i$ in
which the edge
$uv$ is contracted. The definition of the graph $G^{uv}_c$ is more involved and 
is given below.
Then we prove the analogues of the inequalities $\vcd(\mcs_e)\le d$ and
$\vcd(\mcs^e)\le d-1$ for graphs $G_{uv}$ and $G^{uv}_c$ (Lemmas 
\ref{prop_VC(G_uv)} and
\ref{prop_VC(G^uv)}). Finally, we have to obtain analogues of the equality
$|\mcs|=|\mcs_e|+|\mcs^e|$ and of the inequality $|E(G)|\leq 
|E(G_e)|+|E(G^e)|+|V(G^e)|$ to the
graphs $G, G_{uv},$ and $G^{uv}_c$. This is done in Lemma 
\ref{lem_counting_V_and_E}.
To proceed by induction and to prove that  $\frac{|E(G)|}{|V(G)|}\le 
\mu(H)\cdot\vcd^*(G)$, in Lemma
\ref{lem_G_c^uv} we show that
$\frac{|V(G^{uv})|}{|V(G^{uv}_c)|}\le |N|\le \mu(H)-1$, where $N$ is the set of 
the common neighbors
of $u$ and $v$. This is the case if one of the vertices $u$ or $v$ has
degree $\le \mu(H)$ (in our case, such a vertex always exists since each factor 
$G_i$ is
$\mu(H)$-degenerated).

\subsection{Properties of VC-dimension and 
VC-density}\label{sect_properties_VC-dim}

We continue with some basic properties of minor and induced VC-dimensions for 
products of
arbitrary connected graphs and extend Lemma \ref{VC-dim-prop}. In all these 
results,
we suppose that $G$ is a  subgraph of the Cartesian product 
$\Gamma:=\prod_{i=1}^m
G_i=G_1\product\cdots\product G_m$ of connected graphs  $G_1,\ldots,G_m$
and that $G$ has $n$
vertices.
Since shattering in the definition of $\vcd(G)$ and $\vcdens(G)$ is respect to 
subproducts and
since all subproducts are minor-subproducts,  we immediately obtain:

\begin{lemma} \label{VC-dim-prop1} $\vcd(G)\le \vcd^*(G)$ and $\vcdens(G)\le 
\vcdens^*(G).$
\end{lemma}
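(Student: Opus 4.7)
The plan is to show that every shattered subproduct of $\Gamma$ can be reinterpreted as a shattered minor-subproduct of $\Gamma$ with the same number of non-trivial factors and the same density. Both inequalities will then follow by taking suprema over shattered objects.

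Given a shattered subproduct $\Gamma' = \prod_{j=1}^k G'_{i_j}$ of $\Gamma = \prod_{i=1}^m G_i$, I will realize each connected non-trivial subgraph $G'_{i_j}$ as a minor $M_{i_j}$ of $G_{i_j}$ by constructing a partition $\mathcal{P}_{i_j}$ of $V(G_{i_j})$ as follows: run a BFS in $G_{i_j}$ seeded simultaneously from all vertices of $V(G'_{i_j})$, and assign each vertex of $G_{i_j}$ to the seed that reaches it first (ties broken arbitrarily). This produces $|V(G'_{i_j})|$ connected parts, each containing exactly one vertex of $V(G'_{i_j})$, and after contraction the quotient graph contains $G'_{i_j}$ as a spanning subgraph. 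For each remaining index $i \notin \{i_1,\ldots,i_k\}$, I set $M_i$ to be the trivial one-vertex graph with partition $\mathcal{P}_i := \{V(G_i)\}$. The resulting $M := \prod_{i=1}^m M_i$ is a minor-subproduct of $\Gamma$ whose non-trivial factors are exactly $G'_{i_1}, \ldots, G'_{i_k}$.

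The key verification is that $M$ is shattered by $G$. Pick any block $\prod_{i=1}^m P^i_{l_i}$ of the product partition $\mathcal{P} := \prod_i \mathcal{P}_i$. For each $i_j$, the block $P^{i_j}_{l_{i_j}}$ contains a designated vertex $v^{i_j}_{l_{i_j}} \in V(G'_{i_j})$, while for $i \notin \{i_1,\ldots,i_k\}$ the block equals $V(G_i)$. Therefore the product block contains the entire fiber $F(v')$ of the vertex $v' := (v^{i_1}_{l_{i_1}}, \ldots, v^{i_k}_{l_{i_k}}) \in V(\Gamma')$, and $F(v') \cap V(G) \neq \emptyset$ by the hypothesis that $\Gamma'$ is shattered. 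Hence $\prod_i P^i_{l_i}$ meets $V(G)$, and $M$ is shattered.

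The first inequality $\vcd(G) \le \vcd^*(G)$ is now immediate, since $M$ has $k$ non-trivial factors matching $\Gamma'$. For the second, I invoke Lemma \ref{lem_product_density}: since the trivial factors contribute $0$ to the density, $\dens(M) = \sum_{j=1}^k \dens(G'_{i_j}) = \dens(\Gamma')$, and taking suprema yields $\vcdens(G) \le \vcdens^*(G)$. There is really no serious obstacle here; the only point requiring care is the passage from a connected subgraph $G'_{i_j}$ to a genuine minor description of $G_{i_j}$, which the BFS partition handles cleanly.
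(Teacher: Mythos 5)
Your proof is correct and follows the same route as the paper, which disposes of this lemma in one sentence by observing that every subproduct is a minor-subproduct; you simply make that observation explicit by building the Voronoi-type BFS partition realizing each $G'_{i_j}$ as a minor of $G_{i_j}$ and checking that each block of the product partition contains the corresponding fiber $F(v')$. The density bookkeeping via Lemma \ref{lem_product_density} (trivial factors contributing $0$) is also exactly what is needed, so there is nothing to object to.
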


The following lemma justifies in part the formulation of Conjecture 
\ref{conj_E/V<vcdens}:
\begin{lemma} \label{lem_vcdens<alpha.vcdim} $\vcdens^*(G) \leq 
\frac{\mu(H)}{2}\cdot\vcd^*(G)$.
\end{lemma}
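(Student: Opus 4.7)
The plan is to combine the density formula for Cartesian products (Lemma \ref{lem_product_density}) with the fact that the class $\mathcal{G}(H)$ is minor-closed and comes equipped with the degree bound $\mu(H)$. Specifically, I would start by picking a shattered minor-subproduct $M=M_1\product\cdots\product M_m$ of $\Gamma$ that witnesses $\vcdens^*(G)$, i.e., $\dens(M)=\vcdens^*(G)$. Let $k$ denote the number of non-trivial factors of $M$ (those with at least one edge), so by the definition of $\vcd^*(G)$ one has $k\le \vcd^*(G)$.

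Next I would apply Lemma \ref{lem_product_density} to decompose the density:
$$\vcdens^*(G) \;=\; \dens(M) \;=\; \sum_{i=1}^{m}\dens(M_i).$$
Trivial factors $M_i$ (single vertices) contribute $0$, so only the $k$ non-trivial factors count. For each non-trivial $M_i$, I would use that $M_i$ is a minor of $G_i\in{\mathcal G}(H)$ and that ${\mathcal G}(H)$ is minor-closed, hence $M_i\in{\mathcal G}(H)$. By the choice of $\mu(H)$, the average degree of $M_i$ satisfies $d(M_i)\le \mu(H)$, so $\dens(M_i)=d(M_i)/2\le \mu(H)/2$.

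Combining these observations gives
$$\vcdens^*(G) \;=\; \sum_{i\,:\,M_i\text{ non-trivial}}\dens(M_i) \;\le\; k\cdot\frac{\mu(H)}{2} \;\le\; \frac{\mu(H)}{2}\cdot \vcd^*(G),$$
which is the desired inequality. There is no real obstacle here: the argument is essentially bookkeeping combining three facts already available (density of products sums over factors, minor-closedness of ${\mathcal G}(H)$, and the definition of $\vcd^*$ as the max number of non-trivial shattered factors). The one small subtlety worth flagging is that the supremum defining $\vcdens^*(G)$ is realized by taking $M$ with as many non-trivial factors as possible, but since trivial factors contribute $0$ to $\dens(M)$, one loses nothing by restricting attention to the non-trivial ones, and hence by the bound $k\le \vcd^*(G)$.
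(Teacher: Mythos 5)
Your argument is correct and matches the paper's own proof essentially step for step: both take a minor-subproduct $M$ witnessing $\vcdens^*(G)$ with $k$ non-trivial factors, bound $k\le\vcd^*(G)$, decompose $\dens(M)$ as the sum of the factor densities, and bound each non-trivial factor's density by $\mu(H)/2$ via minor-closedness of ${\mathcal G}(H)$. The only cosmetic difference is that you cite Lemma~\ref{lem_product_density} for the density decomposition where the paper redoes the short edge/vertex count inline.
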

\begin{proof}
	Let $M=M_{i_1} \product\cdots\product M_{i_k}$ be a minor-subproduct of
	$\Gamma$ such that
	$\frac{|E(M)|}{|V(M)|}=\vcdens^*(G)$, and let $M'=M_{j_1}'
	\product\cdots\product M_{j_h}'$ be a
	minor-subproduct such that $h=\vcd^*(G)$. First, notice two things: (1) 
	$h\geq k$ because, by
	definition, we chose $M$ with no trivial factors (so if $k$ was greater 
	than $h$ we would have
	taken, at least, $M'=M$); (2) a simple counting of the vertices and edges 
	of a Cartesian product
	$\Gamma'=G'_1 \product\cdots\product G'_m$ shows that
	$\frac{|E(\Gamma')|}{|V(\Gamma')|} =
	\frac{\sum_{i=1}^{m} |E(G'_i)| \cdot \prod_{j \neq i} 
	|V(G'_j)|}{\prod_{i=1}^{m} |V(G'_i)|} =
	\sum_{i=1}^{m}\frac{|E(G'_i)|}{|V(G'_i)|}$.
	We then have 
	$\frac{|E(M)|}{|V(M)|}=\sum_{l=1}^k\frac{|E(M_{i_l})|}{|V(M_{i_l})|}$. 
	Since we defined
	$\Gamma$ is a product of $H$-minor free graphs, for all 
	$l\in\{1,\ldots,k\}$,
	$\frac{|E(M_{i_l})|}{|V(M_{i_l})|}=\frac{1}{2}d(M_{i_l})\leq\frac{\mu(H)}{2}$.
	 Then,
	$\frac{|E(M)|}{|V(M)|}=\vcdens^*(G)\leq \frac{\mu(H)}{2}\cdot k \leq 
	\frac{\mu(H)}{2}\cdot
	h=\frac{\mu(H)}{2}\cdot\vcd^*(G)$.
\end{proof}

\begin{lemma} \label{VC-dim-prop2} 
	$\vcd^*(G)\le \log n$.
\end{lemma}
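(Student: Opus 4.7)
The plan is a short pigeonhole argument on the partition of $V(\Gamma)$ induced by a shattered minor-subproduct realizing $\vcd^*(G)$.

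First, by definition of $\vcd^*(G)$, I would choose a minor-subproduct $M = \prod_{i=1}^{m} M_i$ of $\Gamma$ that is shattered by $G$ and has exactly $k := \vcd^*(G)$ non-trivial factors. Let ${\mathcal P} = {\mathcal P}_1 \times \cdots \times {\mathcal P}_m$ be the associated partition of $V(\Gamma)$, where ${\mathcal P}_i = \{P^i_1, \ldots, P^i_{t_i}\}$ is the partition of $V(G_i)$ defining the minor $M_i$, so that $|{\mathcal P}_i| = t_i = |V(M_i)|$. A factor $M_i$ is non-trivial precisely when $t_i \geq 2$; the remaining $m-k$ factors satisfy $t_i = 1$. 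Consequently, the number of blocks of $\mathcal P$ is
$$|{\mathcal P}| = \prod_{i=1}^{m} t_i \;\geq\; 2^k.$$

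Next, I would invoke the shattering hypothesis from Definition \ref{minorvcdim}: every block $P^1_{l_1} \times \cdots \times P^m_{l_m}$ of $\mathcal P$ contains at least one vertex of $G$. Since the blocks of a partition of $V(\Gamma)$ are pairwise disjoint, the vertices contributed by distinct blocks are distinct, so
$$n = |V(G)| \;\geq\; |{\mathcal P}| \;\geq\; 2^k.$$
Taking logarithms gives $\vcd^*(G) = k \leq \log n$, as claimed.

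There is no real obstacle here beyond unwinding the notation: the proof is essentially a one-line counting argument, and the bound $\vcd(G) \leq \vcd^*(G) \leq \log n$ follows immediately in combination with Lemma \ref{VC-dim-prop1}.
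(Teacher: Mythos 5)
Your proposal is correct and follows exactly the paper's own argument: pick a shattered minor-subproduct with $k=\vcd^*(G)$ non-trivial factors, note that the induced partition of $V(\Gamma)$ has at least $2^k$ pairwise disjoint blocks each containing a vertex of $G$, and conclude $n\ge 2^k$. No differences worth noting.
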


\begin{proof} Let  $M=M_{1}\product\cdots\product M_{m}$ be a minor
subproduct of $\prod_{i=1}^m
	G_i$ shattered by $G$ with $k=\vcd^*(G)$ non-trivial factors. We want to 
	show that $k\le \log
	n$.
	Let ${\mathcal P}_{i}=\{ P^{i}_1,\ldots,P^{i}_{t_i}\}$ denote the partition 
	of $G_{i}$ defining the
	minor
	$M_{i}$. Since every non-trivial factor of $M$ contains at least two 
	vertices, ${\mathcal
		P}:={\mathcal
		P}_{1}\times\cdots\times {\mathcal P}_{m}$ contains at least $2^k$ 
		parts. By the definition of
	shattering, any of those parts $P^{1}_{l_1}\times\cdots\times P^{m}_{l_m}$ 
	(where
	$l_1=1,\ldots,t_1,\ldots,\mbox{ and } l_m=1,\ldots,t_m$) contains a vertex 
	$x$ of $G$. Since
	$\mcp$ is
	a partition of $\Gamma$, two vertices belonging to different parts have to 
	be different.
	Consequently, $G$ contains at least $2^k$ vertices.
\end{proof}

We continue with the extension to subgraphs of Cartesian  products of the 
operators $\mcs_e$ and
$\mcs^e$ defined for set systems $\mcs$. In case of set systems $\mcs$, the 
1-inclusion graph
$G(\mcs)$ is an induced subgraph of the product of $K_2$'s. Then $e$ 
corresponds to a factor of
this product and $G(\mcs_e)$ can be viewed as the image of $G$ in  the product  
of $K_2$'s
where the whole factor corresponding to $e$ was contracted.  In case when the 
factors are arbitrary
graphs, contracting a whole factor of the product would be too rough. So, let 
$u$ and $v$ be two
adjacent vertices of some factor $G_i$. Let $N$ denote the set of common 
neighbors of $u$ and
$v$ in $G_i$. Let  $\widehat G_i$ be the graph obtained from $G_i$ by 
contracting the edge $uv$,
namely, the graph in which the edge $uv$ is replaced by a vertex $w$ and every 
edge $xu$ and/or
$xv$ of $G_i$ is replaced by a single new edge $xw$; thus $\widehat G_i $ does 
not contain loops
and multiple edges. Let $\widetilde G_i$ be the graph which is a star having  
as the central vertex a
vertex  $\widetilde{w}$ corresponding to the  edge  $uv$  and as the set 
$\widetilde{N}$ of leaves
the vertices $\widetilde{x}$ corresponding to vertices $x$ of $N$ (i.e., such 
that $xuv$ is a triangle of $G_i$); the edges of  $\widetilde{G_i}$ are all 
pairs of the form $\widetilde{w}\widetilde{x}$.

\begin{figure}
	\begin{minipage}{0.49\textwidth}
		\centering
		\includegraphics[scale=0.3]{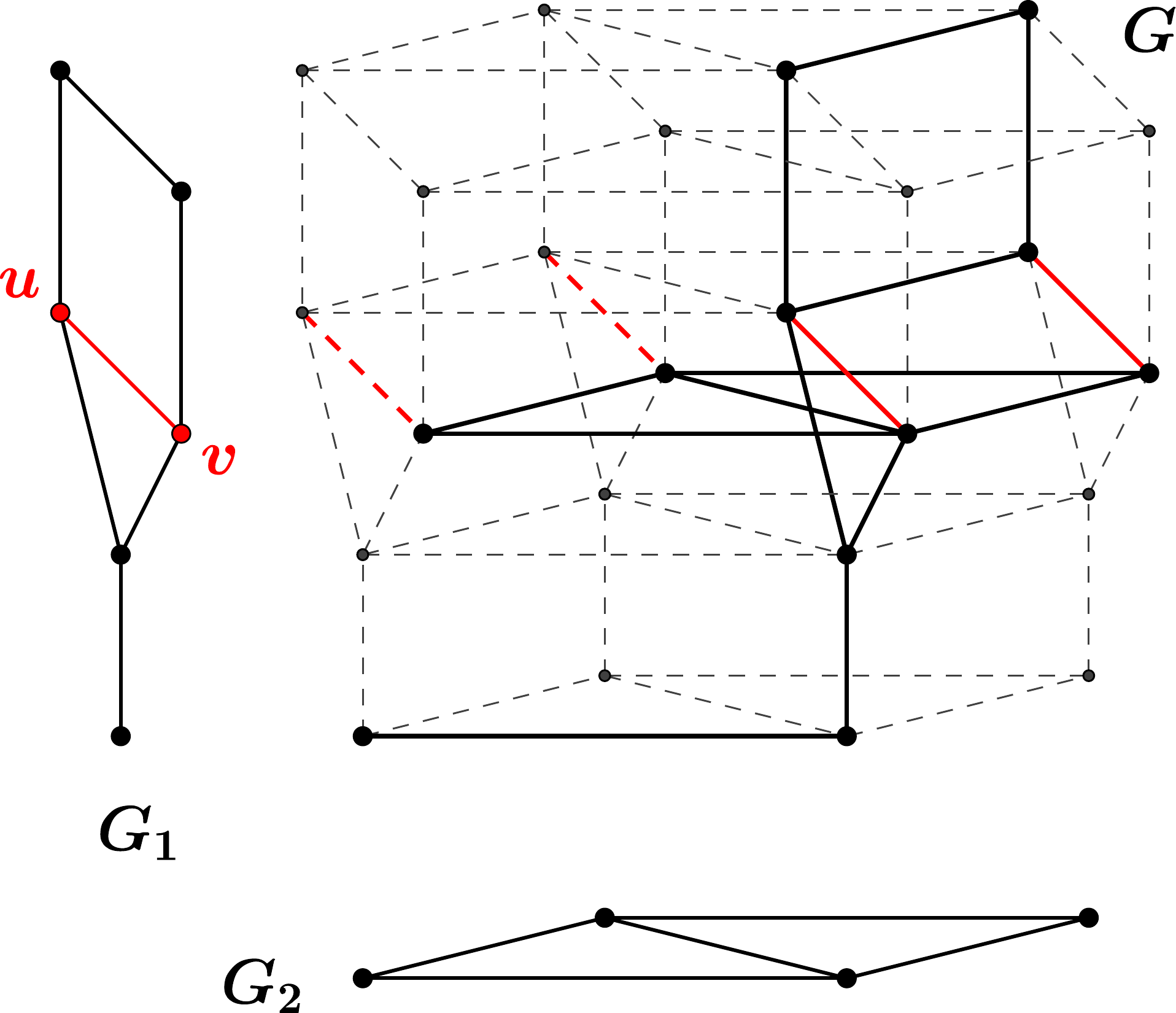}
	\end{minipage}
	\begin{minipage}{0.49\textwidth}
		\centering
		\includegraphics[scale=0.3]{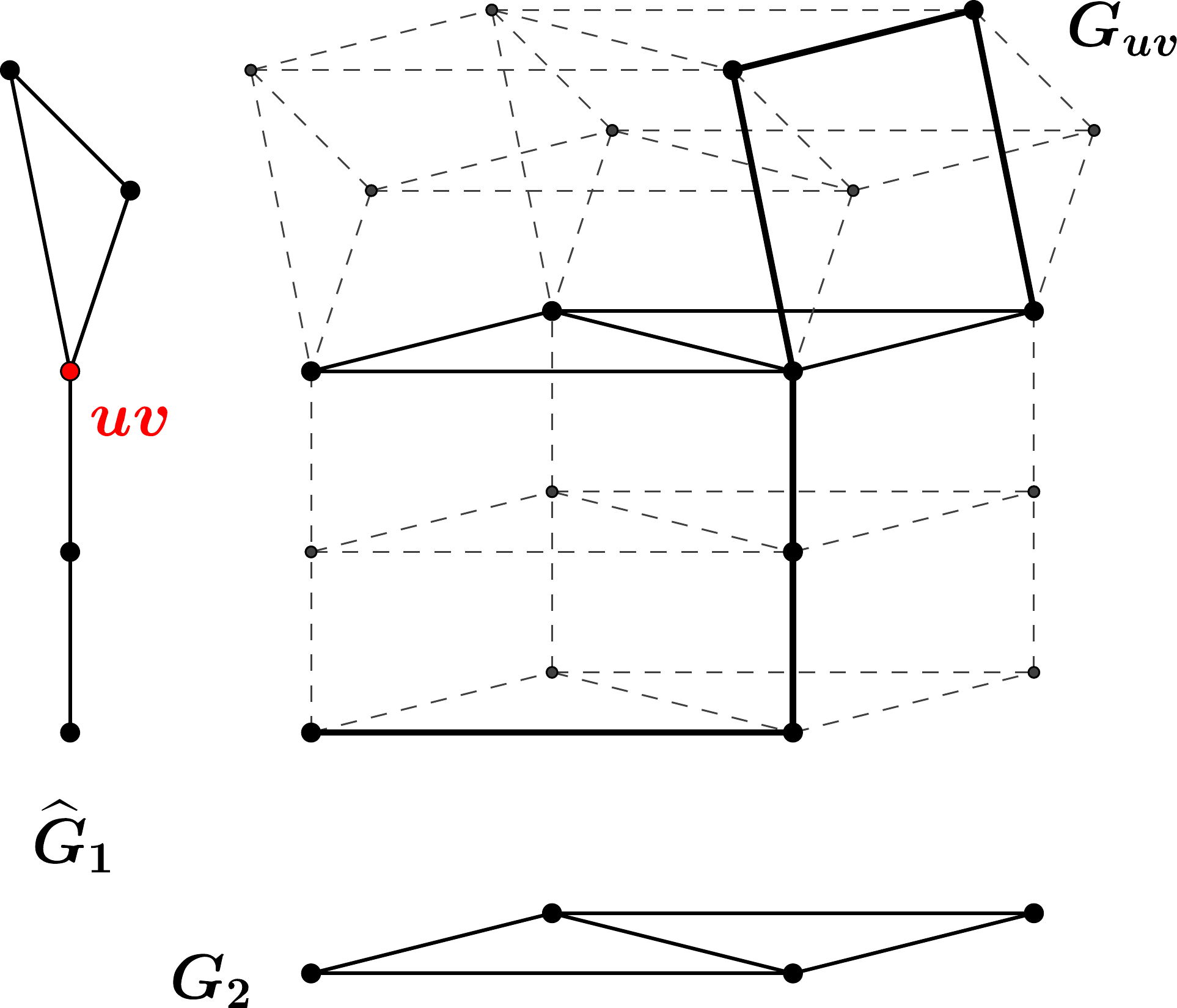}
	\end{minipage}
	
	\begin{minipage}{0.49\textwidth}
		\centering
		\includegraphics[scale=0.3]{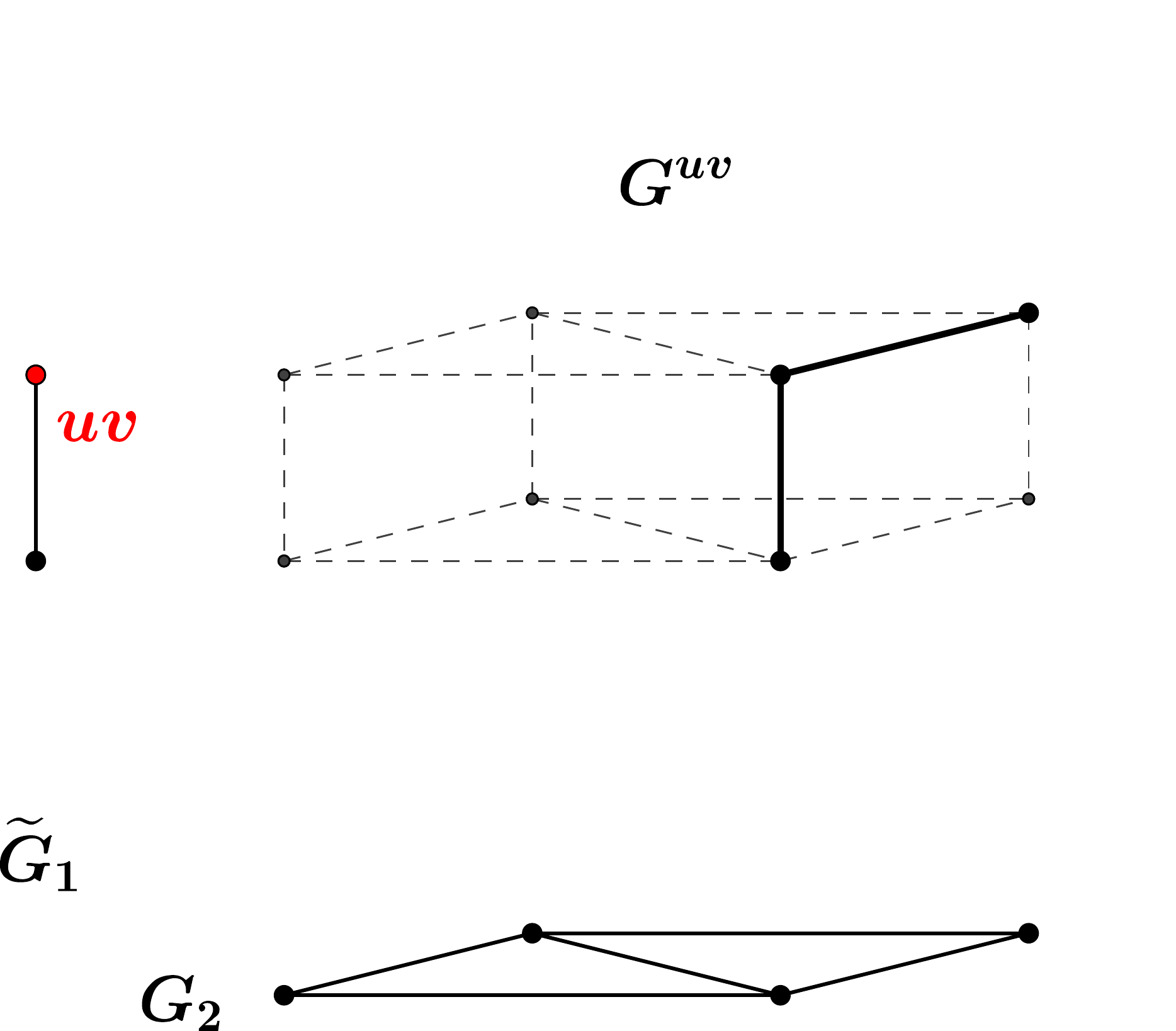}
	\end{minipage}
	\begin{minipage}{0.49\textwidth}
		\centering
		\includegraphics[scale=0.3]{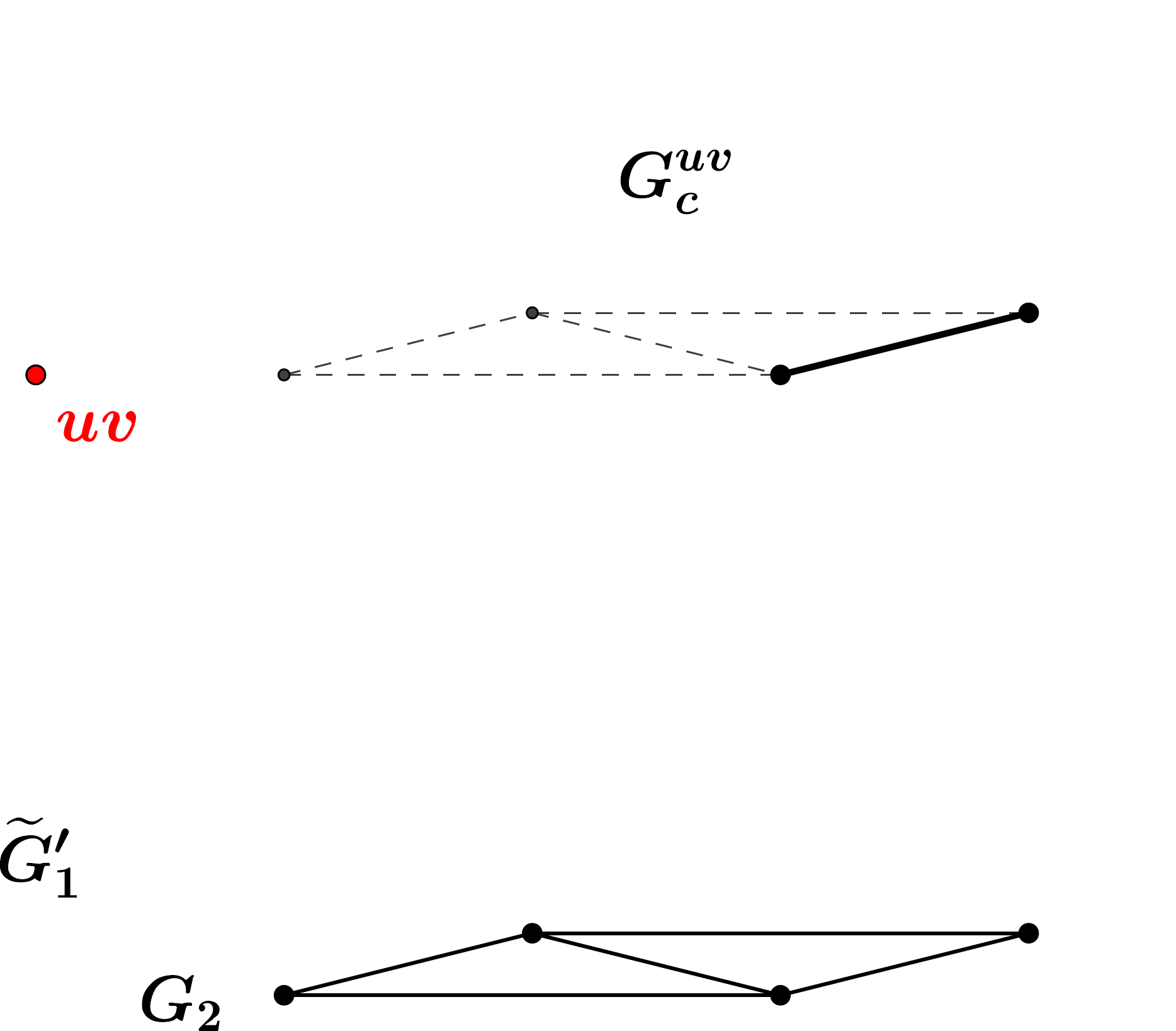}
	\end{minipage}
	
	\caption{\label{contraction}Examples of graphs $G_{uv}$ and $G^{uv}_c$}
\end{figure}

Let $G_{uv}$ be the subgraph of $\widehat
\Gamma$$:=$$G_1\product\cdots\product G_{i-1}\product
\widehat G_i\product G_{i+1}\product\cdots\product G_m$ obtained from $G$
by contracting every edge of
{\it type} $uv$, i.e., by identifying any pair of vertices of the form
$((v_1,\ldots,v_{i-1},u,v_{i+1},\ldots,v_m), 
(v_1,\ldots,v_{i-1},v,v_{i+1},\ldots,v_m))$, and removing
multiple edges. Let $G^{uv}$ be the subgraph of
$\widetilde{\Gamma}:=G_1\product\cdots\product
G_{i-1}\product \widetilde G_i\product G_{i+1}\product\cdots\product G_m$
obtained from $G$ by applying the
transformation of $G_i$ to $\widetilde G_i$. Namely, $G^{uv}$ is the subgraph of
$\widetilde{\Gamma}$ induced by the following set of vertices: \lenum{1}
$(v_1,\ldots,v_{i-1},\widetilde{w},v_{i+1},\ldots,v_m)$ is a vertex of $G^{uv}$ 
if
$(v_1,\ldots,v_{i-1},u,v_{i+1},\ldots,v_m)$ and 
$(v_1,\ldots,v_{i-1},v,v_{i+1},\ldots,v_m)$ are vertices of
$G$ and \lenum{2} $(v_1,\ldots,v_{i-1},\widetilde{x},v_{i+1},\ldots,v_m)$ is a 
vertex of $G^{uv}$ if
$(v_1,\ldots,v_{i-1},x,v_{i+1},\ldots,v_m)$ is a vertex of $G$, $x\in N$, and 
$(v_1,\ldots,u,\ldots,v_m)$
and $(v_1,\ldots,v,\ldots,v_m)$ are vertices of $G$.

Notice that $G_{uv}$ plays the role of $\mcs_e$ in the binary case. To play the 
role of $\mcs^e$ we
define the graph $G^{uv}_c$ which is the subgraph of $G^{uv}$ induced by the 
vertices that have a
central node $\widetilde{w}$ of $\widetilde G_i$ as their $i$th coordinate. If 
$G_i$ is a $K_2$ (or,
more generally, the edge $uv$ does not belong to a triangle), then $G^{uv}$ 
coincides with
$G_c^{uv}$. The remaining vertices of $G^{uv}$, those having a leaf 
$\widetilde{x}$ of $\widetilde
G_i$ as their $i$th coordinate, will be called \emph{tip vertices}. We denote 
by $V_l(G^{uv})$ the set
of \emph{tip vertices}. Finally, the remaining edges of $G^{uv}$ 
will be denoted by $E_l(G^{uv})$. Examples of graphs $G_{uv}$ and
$G^{uv}_c$ are presented in Fig. \ref{contraction}. In all subsequent results, 
we suppose that $G$ is
a subgraph of a Cartesian product $\Gamma:=G_1\product\cdots\product G_m$,
$G_i$ is any factor of
$\Gamma$, and $uv$ is any edge of $G_i$.

\begin{lemma}\label{prop_VC(G_uv)} Let $\Gamma=G_1\product\cdots\product
G_m$ be a Cartesian
	product of finite graphs and let $G$ be an induced subgraph of $\Gamma$. 
	For all $G_i$,
	$i\in\{1,\ldots,m\}$, and for all $uv\in E(G_i)$, the following 
	inequalities hold:
	\begin{enumerate}[(1)]
		\item\label{itm_vcd(G_uv)} $\vcd^*(G_{uv})\le \vcd^*(G)$;
		\item\label{itm_vcdens(G_uv)} $\vcdens^*(G_{uv})\leq\vcdens^*(G)$,
	\end{enumerate}
	\noindent where the VC-dimension and the VC-density of $G_{uv}$ are 
	computed with respect to $\widehat \Gamma$.
\end{lemma}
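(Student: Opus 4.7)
The approach is to show that every minor-subproduct of $\widehat{\Gamma}$ shattered by $G_{uv}$ lifts to a minor-subproduct of $\Gamma$ with the same factor graph $M=M_1\product\cdots\product M_m$ (hence the same number of non-trivial factors and the same density) that is shattered by $G$. Once this lifting is established, both inequalities \lenum{\ref{itm_vcd(G_uv)}} and \lenum{\ref{itm_vcdens(G_uv)}} follow at once from the definitions of $\vcd^*$ and $\vcdens^*$.

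Concretely, let $M=M_1\product\cdots\product M_m$ be a minor-subproduct of $\widehat{\Gamma}$ shattered by $G_{uv}$, with partitions ${\mathcal P}_j=\{P^j_1,\ldots,P^j_{t_j}\}$ defining $M_j$ as a minor of $G_j$ for $j\ne i$ and of $\widehat{G_i}$ for $j=i$. The lifting modifies only ${\mathcal P}_i$: the unique part containing the contracted vertex $w$ is replaced by the same set with $w$ swapped for the pair $\{u,v\}$, yielding a partition ${\mathcal P}'_i$ of $V(G_i)$. That part remains connected in $G_i$ because $uv\in E(G_i)$ and every edge $wx$ of $\widehat{G_i}$ within the part lifts to an edge $ux$ or $vx$ of $G_i$. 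Since contracting $\{u,v\}$ in $G_i$ produces $\widehat{G_i}$, contracting ${\mathcal P}'_i$ in $G_i$ yields the same graph as contracting ${\mathcal P}_i$ in $\widehat{G_i}$, so $M_i$ is still a spanning subgraph of the quotient, i.e., a minor of $G_i$. The other factors are untouched, so the lifted collection is indeed a minor-subproduct of $\Gamma$, with the same factor graph $M$.

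To verify shattering, fix a part $P=P^1_{l_1}\times\cdots\times P'^{\,i}_{l_i}\times\cdots\times P^m_{l_m}$ of the lifted product partition of $V(\Gamma)$, and let $\widehat P$ be the corresponding part of the original partition of $V(\widehat{\Gamma})$; these agree unless $\{u,v\}\subseteq P'^{\,i}_{l_i}$, in which case $\widehat P$ is obtained from $P$ by replacing $\{u,v\}$ by $w$ in the $i$-th coordinate. By shattering of $G_{uv}$, some vertex $\widehat x$ of $G_{uv}$ lies in $\widehat P$. If $\widehat x_i\ne w$, then $\widehat x$ corresponds to a single vertex $x\in V(G)$ with the same coordinates, which lies in $P$; if $\widehat x_i=w$, then by construction of $G_{uv}$ at least one of the two preimages (obtained by replacing the $i$-th coordinate $w$ by $u$ or $v$) is a vertex of $G$, and both preimages lie in $P$ since $\{u,v\}\subseteq P'^{\,i}_{l_i}$. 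In either case $V(G)\cap P\ne\emptyset$, establishing shattering of $M$ by $G$ in $\Gamma$.

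The only delicate point is the bookkeeping in the above case analysis: a vertex of $G_{uv}$ whose $i$-th coordinate equals $w$ may correspond to either one or two vertices of $G$, and one must be careful that whichever preimage exists still sits inside the lifted part $P$. Beyond this, no combinatorial tool deeper than the observation ``contracting $\{u,v\}$ in $G_i$ gives $\widehat{G_i}$'' is needed.
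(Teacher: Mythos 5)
Your proof is correct and follows essentially the same route as the paper's: lift the partition of $\widehat{G_i}$ to a partition of $G_i$ by replacing $w$ with $\{u,v\}$ in its part, observe the resulting minor-subproduct of $\Gamma$ is the same graph $M$, and check shattering by taking preimages (noting that a vertex of $G_{uv}$ with $i$-th coordinate $w$ has at least one preimage in $G$, which lands in the lifted part). The only cosmetic difference is that you treat uniformly what the paper splits into two cases according to whether the $i$-th factor of $M$ is trivial.
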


\begin{proof}
	We will  prove that if a minor-subproduct $M$ of $\widehat \Gamma$ is 
	shattered by $G_{uv}$,
	then $M$ is also a minor-subproduct of $\Gamma$ shattered by $G$.
	Let $M:=M_{1}\product\cdots\product M_{m}$ be defined by the partition
	$\mcp:=\mcp_{1}\times\cdots\times \mcp_{m}$ of the product 
	$\widehat\Gamma$, where
	$\mcp_{j}:=\{P_1^{j},\ldots,P_{t_j}^{j}\}$ is a partition  of $G_{j}$ (or 
	$\widehat G_i$ if $j=i$) defining
	the minor $M_{j}$, $j=1,\ldots,m$. We suppose
	that $M$ contains $k$ non-trivial factors indexed by $i_1,\ldots,i_k$.  
	Since $M$ is shattered by
	$G_{uv}$,  any set $P^{1}_{l_1}\times\cdots\times P^{m}_{l_m}$
	of $\mcp$ (where $l_1=1,\ldots,t_1,\ldots,l_m=1,\ldots,t_m$) contains a 
	vertex $x$ of $G_{uv}$.
	Recall that we merged two adjacent vertices $u$ and $v$ of the
	factor $G_i$ of $\Gamma$. We distinguish two cases.
	
	\medskip\noindent
	{\bf Case 1.} $i\notin \{i_{1},\ldots,i_{k}\}$.
	
	\noindent
	Then $M$ is also a minor-subproduct of $\Gamma$. We will prove that the 
	vertex $x$ of
	$P^{1}_{l_1}\times\cdots\times P^{m}_{l_m}$ also belongs to $G$.
	Consider the $i$th  coordinate $x_i$ of $x$. If $x_i\ne w$, then  $x$ also 
	belongs to $G$.
	Otherwise, if $x_i=w$, then at least one of the vertices 
	$(x_1,\ldots,u,\ldots,x_m)$ or
	$(x_1,\ldots,v,\ldots,x_m)$ must be a vertex of $G$, and thus we can denote 
	by $x$ that vertex.
	
	\medskip\noindent
	{\bf Case 2.} $i \in \{i_{1},\ldots,i_{k}\}$, say $i:=i_j$.
	
	\noindent
	Consider the partition $\mcp_{i}:=\{ P_1^{i},\ldots,P_{t_i}^{i}\}$ of 
	$\widehat G_{i}=\widehat G_{i_j}$
	and suppose that $w\in P_{l}^{i}$. Using the partition $\mcp_{i}$
	of $\widehat G_{i}$, we will define in the following way the partition
	$\mcp'_{i}=\{P_1^{i},\ldots,P_{l-1}^{i},P',\ldots, P_{t_i}^{i}\}$  of 
	$G_i$, where
	$P':= P_{l}^{i}\setminus \{ w\}\cup\{u,v\}$.  Let 
	$\mcp'=\mcp_{1}\times\cdots\times \mcp_{i-1}\times
	\mcp'_{i}\times \mcp_{i+1}\times \cdots\times \mcp_{m}$.
	It can be easily seen that $\mcp'$ is a partition of the Cartesian product 
	$\Gamma$, and that it
	provides a minor-subproduct representation of $M$
	(in the product $\Gamma$). Then as in Case 1 one can show that either 
	$x_i\ne w$ and $x$ is also
	a vertex of $\Gamma$ belonging to $G$, or that $x_i=w$
	and  $(x_1,\ldots,u,\ldots,x_m)$, or $(x_1,\ldots,v,\ldots,x_m)$ must
	belong to $G$.
	
	This shows that if $M$ is a minor-subproduct of $\widehat{\Gamma}$ 
	shattered by $\widehat G$, then $M$ is also a minor-subproduct of $\Gamma$ 
	shattered	by $G$, establishing the inequalities (\ref{itm_vcd(G_uv)}) 
	and (\ref{itm_vcdens(G_uv)}).
\end{proof}

\begin{lemma} \label{prop_VC(G^uv)} Let $\Gamma=G_1\product\cdots\product
G_m$ be a Cartesian
	product of finite graphs and let $G$ be an
	induced subgraph of $\Gamma$. For all $G_i$, $i\in\{1,\ldots,m\}$, and for 
	all $uv\in E(G_i)$, the
	following inequalities hold:
	\begin{enumerate}[(1)]
		\item\label{itm_vcd_Gc^uv} $\vcd^*(G^{uv}_c)\le \vcd^*(G)-1$;
		\item\label{itm_vcdens_Gc^uv}  $\vcdens^*(G^{uv}_c)\leq\vcdens^*(G)-1$,
	\end{enumerate}
	\noindent where the VC-dimension and the VC-density of $G^{uv}_c$ are 
	computed with respect to $\widetilde \Gamma$.	
\end{lemma}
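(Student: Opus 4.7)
The approach is to show that any minor-subproduct $M=M_1\product\cdots\product M_m$ of $\widetilde\Gamma$ shattered by $G^{uv}_c$ can be extended to a minor-subproduct $M'$ of $\Gamma$ shattered by $G$ that has exactly one more non-trivial factor, namely a $K_2$-factor at coordinate $i$ arising from the edge $uv$. Once this extension is in hand, (1) follows from counting non-trivial factors and (2) follows from Lemma~\ref{lem_product_density}, since $\dens(M')=\dens(M)+\dens(M'_i)$ where $M'_i$ is the newly added factor.

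The first step is a simple but crucial observation about $M$. Because every vertex of $G^{uv}_c$ has $\widetilde w$ as its $i$-th coordinate, the partition $\mathcal P_i$ of $V(\widetilde G_i)$ associated with $M_i$ must be the trivial one-class partition $\{V(\widetilde G_i)\}$: if $\mathcal P_i$ admitted a class $P\not\ni\widetilde w$, every cell of $\mathcal P$ whose $i$-th coordinate equals $P$ would be disjoint from $G^{uv}_c$, contradicting the shattering hypothesis. In particular $M_i$ is trivial, so the non-trivial factors of $M$ come only from indices $j\neq i$, and $\dens(M)=\sum_{j\neq i}\dens(M_j)$.

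Next I construct $M'$. Pick a spanning tree $T$ of $G_i$ containing the edge $uv$. Removing $uv$ from $T$ splits it into two subtrees $T_u\ni u$ and $T_v\ni v$ whose vertex sets partition $V(G_i)$. Define $\mathcal P'_i:=\{V(T_u),V(T_v)\}$ and $\mathcal P'_j:=\mathcal P_j$ for $j\neq i$. Both parts of $\mathcal P'_i$ are connected (as subtrees), and the associated minor $M'_i$ of $G_i$ is a $K_2$, the edge being inherited from $uv$. Setting $M':=M_1\product\cdots\product M_{i-1}\product M'_i\product M_{i+1}\product\cdots\product M_m$ gives a minor-subproduct of $\Gamma$ with exactly one more non-trivial factor than $M$.

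It remains to verify that $M'$ is shattered by $G$. Take any cell $\mathcal C=P^1_{l_1}\times\cdots\times P^m_{l_m}$ of $\mathcal P'$ and, without loss of generality, suppose $P^i_{l_i}=V(T_u)$. The cell $\widetilde{\mathcal C}$ obtained by replacing $V(T_u)$ by $V(\widetilde G_i)$ in the $i$-th coordinate is a cell of $\mathcal P$, so by the shattering of $G^{uv}_c$ it contains a vertex $(x_1,\ldots,x_{i-1},\widetilde w,x_{i+1},\ldots,x_m)$ of $G^{uv}_c$. By the very definition of $G^{uv}_c$, both $(x_1,\ldots,u,\ldots,x_m)$ and $(x_1,\ldots,v,\ldots,x_m)$ lie in $V(G)$; the first of these lies in $\mathcal C$ because $u\in V(T_u)$. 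The case $P^i_{l_i}=V(T_v)$ is symmetric, so $M'$ is shattered by $G$. The main delicacy in this plan is the first observation forcing $\mathcal P_i$ to be trivial; once this is established, the spanning-tree partition $\mathcal P'_i$ makes the shattering verification essentially automatic, and the two numerical inequalities reduce to a straightforward counting of factors and an application of Lemma~\ref{lem_product_density}.
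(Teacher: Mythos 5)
Your argument is essentially the paper's own proof: the same key observation that shattering by $G^{uv}_c$ forces $\mathcal{P}_i$ to be the one-class partition (because every vertex of $G^{uv}_c$ has $\widetilde{w}$ as its $i$th coordinate), the same augmentation of $M$ by a $K_2$ factor coming from a connected bipartition of $V(G_i)$ separating $u$ from $v$, and the same shattering verification via the definition of $G^{uv}_c$; the only cosmetic difference is that you obtain the two connected parts from a spanning tree split at $uv$, while the paper takes the vertices reachable from $u$ without passing through $v$ and the complement (both work). One caveat, which applies verbatim to the paper's proof as well: the added factor is a $K_2$, and $\dens(K_2)=\tfrac12$, so the identity $\dens(M')=\dens(M)+\dens(M'_i)$ from Lemma~\ref{lem_product_density} only yields $\vcdens^*(G^{uv}_c)\le \vcdens^*(G)-\tfrac12$, not the stated $\vcdens^*(G)-1$ in part~(\ref{itm_vcdens_Gc^uv}); part~(\ref{itm_vcd_Gc^uv}) is unaffected, since there one counts non-trivial factors and the count genuinely increases by one.
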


\begin{proof}
	Suppose by way of contradiction that $G^{uv}_c$ shatters in  
	$\widetilde{\Gamma}$ a
	minor-product $M=M_1\product\cdots\product M_m$ such that either
	$\frac{|E(M)|}{|V(M)|}\geq\vcdens^*(G)$ or $M$ has at least $\vcd^*(G)$ 
	non-trivial factors
	$M_{i_1},\ldots,M_{i_k}$. Let  $M$ be defined by the partition
	$\mcp:=\mcp_{1}\times\cdots\times \mcp_{m}$ of $\widetilde{\Gamma}$, where
	$\mcp_{j}:=\{P_1^{j},\ldots,P_{t_j}^{j}\}$ is a partition  of $G_{j}$
	(or $\widetilde G_i $ if $j=i$) defining the minor $M_{j}$, $j=1,\ldots,m$. 
	
	Notice that all vertices of $G_c^{uv}$ have $\widetilde w$ as their $i$th 
	coordinate. Therefore if
	$i\notin\{i_1,\ldots,i_k\}$, we can define $\mcp'$ as the product of 
	partitions $\mcp$ where the
	$i$th partition $\mcp_i=V(\widetilde G_i)$ has been replaced by 
	$\mcp_i'=V(G_i)$. So $\mcp'$ is a
	partition of $\Gamma$ defining the same minor $M$ now shattered by $G$.
	Vice-versa, if $i\in\{i_1,\ldots,i_k\}$, say $i=i_k$, then, since all 
	vertices of $G_c^{uv}$ have
	$\widetilde w$ as their $i$th coordinate, the partition $\mcp_{i}$ must 
	contain a single member, i.e.,
	$M_i=K_1$. Thus further we can assume that $\mcp_i=V(\widetilde G_i),$ and
	$i\notin\{i_1,\ldots,i_k\}$.

	Now, we assert that $M':=M_{1}\product\cdots\product M_{i-1}\product
	K_2\product
	M_{i+1}\product\cdots\product M_m$ is a minor-subproduct of $\Gamma$
	shattered by the graph $G$.
	Since $M$ has $K_1$ as $i$th coordinate, this would imply that 
	$|V(M')|=2|V(M)|$ and
	$|E(M')|=2|E(M)|+|V(M)|$, whence
	\begin{gather*}
	\frac{|E(M')|}{|V(M')|} =   \frac{2|E(M)| + |V(M)|}{2|V(M)|} 
                            =   \frac{|E(M)|}{|V(M)|} + \frac{1}{2}
                            =   \vcdens^*(G_c^{uv})   + \frac{1}{2}
                            \ge \vcdens^*(G)          + \frac{1}{2},
	\end{gather*}
	leading to a contradiction with $\frac{|E(M)|}{|V(M)|}\geq\vcdens^*(G)$ and 
	showing (\ref{itm_vcdens_Gc^uv}).
	Furthermore, $M'=M\product K_2$ implies that its
	$\vcd^*(M')=\vcd^*(M)+1$ leading to a
	contradiction with $k\geq\vcd^*(G)$ and showing (\ref{itm_vcd_Gc^uv}).
	
	First we prove that $M'$ is a minor-product of $\Gamma$. Recall that
	$M=M_{1}\product\cdots\product
	M_{m}$  is defined by the partition $\mcp:=\mcp_{1}\times\cdots\times 
	\mcp_{m}$ of
	$\widetilde{\Gamma}$, where $\mcp_{i}=V(\widetilde G_i)$. Define the 
	following partition
	$\mcp':=\mcp_{1}\times\cdots\times \mcp_{i-1}\times
	\mcp'_{i}\times\mcp_{i+1}\times\cdots\times\mcp_{m}$ of  $\Gamma$, where 
	$\mcp'_{i}=\{ P_1^{i},
	P_2^{i}\}$ and $\{ P_1^{i}, P_2^{i}\}$ define a partition of $V(G_i)$ into 
	two connected subgraphs,
	the first containing the vertex $u$ and the second containing the vertex 
	$v$. This can be done by
	letting $P_1^{i}$ be the set of all vertices of $G_i$ reachable from $u$ 
	via simple paths not
	passing via $v$ and by setting $P_2^{i}:=V(G_i)\setminus P_1^{i}$.  Then 
	clearly $M'$ is defined by
	the partition $\mcp'$.

	It remains to show that $M'$ is shattered in $\Gamma$ by $G$. Pick any set
	$P^{1}_{l_1}\times\cdots\times P^{m}_{l_m}$ of $\mcp'$ (where
	$l_1=1,\ldots,t_1,\ldots,l_m=1,\ldots,t_m$ and $t_i=2$). Since $M$ is 
	shattered by $G_c^{uv}$, the
	set $P^{1}_{l_1}\times\cdots\times P^{i-1}_{l_{i-1}}\times V(\widetilde 
	G_i)\times
	P^{i+1}_{l_{i+1}}\times\cdots\times P^m_{l_m}$ has a vertex $x$ belonging 
	to $G^{uv}_c$, and then
	having $\widetilde w$ as $i$th coordinate. From the definition of 
	$G^{uv}_c$ and of the vertex
	$\widetilde w$ we conclude that the vertex $x_1$ obtained from $x$ by 
	replacing the $i$th
	coordinate $\widetilde w$ by $u$ is a vertex of 
	$P^{1}_{l_1}\times\cdots\times
	P^{i}_1\times\cdots\times P^m_{l_m}$, while the vertex $x_2$ obtained from 
	$x$ by replacing the
	$i$th coordinate $\widetilde w$ by $v$ is a vertex of 
	$P^{1}_{l_1}\times\cdots\times
	P^{i}_2\times\cdots\times P^m_{l_m}$. Since $x$ is a vertex of $G^{uv}_c$, 
	then $x_1$ and $x_2$
	are vertices of $G$. This shows that both sets 
	$P^{1}_{l_1}\times\cdots\times
	P^{i}_1\times\cdots\times P^m_{l_m}$  and $P^{1}_{l_1}\times\cdots\times 
	P^{i}_2\times\cdots\times
	P^m_{l_m}$ of $\mcp'$ contain vertices $x_1$ and $x_2$ which belong to $G$. 
	This proves that
	$G$ shatters $M'$ in $\Gamma$.
\end{proof}

\begin{lemma}\label{lem_counting_V_and_E} The graphs $G,G_{uv},G_c^{uv}$, and 
$G^{uv}$ satisfy
	the following relations:
	$$\begin{cases}
	|V(G)|=|V(G_{uv})|+|V(G^{uv})|-|V_l(G^{uv})|=|V(G_{uv})|+|V(G^{uv}_c)|, \\
	|E(G)|\leq|E(G_{uv})|+|E(G^{uv})|+|V(G^{uv}_c)|.
	\end{cases}$$
\end{lemma}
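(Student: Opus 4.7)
The plan is to establish the vertex equation by a direct counting argument and to derive the edge inequality through a classification of the parallel edges created by the $uv$-contraction; as in Lemmas~\ref{prop_VC(G_uv)} and~\ref{prop_VC(G^uv)}, I will work under the assumption that $G$ is an induced subgraph of $\Gamma$. Let $k$ be the number of $uv$-type edges of $G$. Any vertex of $G$ has at most one $uv$-counterpart (the vertex obtained by swapping $u$ and $v$ in coordinate $i$, when it lies in $V(G)$), so the $2k$ endpoints of $uv$-edges of $G$ form $k$ disjoint pairs that get merged in $G_{uv}$; hence $|V(G_{uv})|=|V(G)|-k$. Because $G$ is induced, the central vertices of $G^{uv}$ are in bijection with the $uv$-edges of $G$, so $|V(G^{uv}_c)|=k$, and adding the two gives $|V(G_{uv})|+|V(G^{uv}_c)|=|V(G)|$. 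The other form of the vertex equation follows from the disjoint partition $V(G^{uv})=V(G^{uv}_c)\sqcup V_l(G^{uv})$ fixed by the definition of $G^{uv}$.

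For the edge bound, split $E(G)$ into its $k$ $uv$-edges and the set $E^\star$ of non-$uv$ edges. Form the natural multigraph contraction $\bar G$ of $G$ by contracting every $uv$-edge while keeping parallel copies; a non-$uv$ edge $xy$ cannot have both endpoints in the same $uv$-equivalence class (this would force $xy$ itself to be a $uv$-edge), so $\bar G$ has no loops and $|E(\bar G)|=|E^\star|$. Writing $m_e$ for the multiplicity in $\bar G$ of each edge $e$ of the simplification $G_{uv}$, one obtains
$$|E(G)|=k+|E^\star|=|V(G^{uv}_c)|+|E(G_{uv})|+\sum_{e\in E(G_{uv})}(m_e-1),$$
so it suffices to prove $\sum_e(m_e-1)\le |E(G^{uv})|$.

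The main step is to classify all pairs of distinct non-$uv$ edges $xy,x'y'\in E(G)$ that become parallel in $\bar G$, i.e., that satisfy $\{[x],[y]\}=\{[x'],[y']\}$. A direct case analysis on the possible identifications of endpoints shows that every parallel class with $m_e\ge 2$ has $m_e=2$ exactly, and the two identified edges fall into one of two types: \textbf{(A)} $x=x'$ while $\{y,y'\}$ is a $uv$-pair of $G$, forcing $x,y,y'$ to agree in every coordinate but the $i$-th with $\{y_i,y'_i\}=\{u,v\}$ and $x_i$ a common neighbour $z\in N$ of $u,v$, i.e., a triangle $(u,v,z)$ realized in a fiber of $G$; or \textbf{(B)} both $\{x,x'\}$ and $\{y,y'\}$ are $uv$-pairs of $G$ and the edges $xy,x'y'$ lie in a common direction $j\ne i$, producing a ``$uv$-square'' of $G$.

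Finally, a type-(A) parallel class injects into $E(G^{uv})$ via the central-tip edge joining $(\ldots,\widetilde{w},\ldots)$ and $(\ldots,\widetilde{z},\ldots)$ (the star structure of $\widetilde{G}_i$ makes this the unique central-tip adjacency witnessed by the triangle), and a type-(B) parallel class injects into $E(G^{uv}_c)\subseteq E(G^{uv})$ via the central-central edge joining the two central vertices representing the $uv$-edges of the square. Since these two injections have disjoint images in $E(G^{uv})$, we conclude $\sum_e(m_e-1)\le |E(G^{uv})|$, completing the proof. The main obstacle is the case-(A)/(B) classification of parallel edges together with ruling out higher multiplicities; once this is in hand, translating each pattern into the appropriate edge of $G^{uv}$ is bookkeeping from the definitions of $\widehat{G}_i$ and $\widetilde{G}_i$.
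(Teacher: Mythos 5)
Your proof is correct and takes essentially the same route as the paper's: the contracted type-$uv$ edges account for the $|V(G^{uv}_c)|$ term, and your type-(A) (triangle) and type-(B) (square) parallel classes are exactly the paper's two groups of multiple edges, matched respectively to $|E_l(G^{uv})|$ and $|E(G^{uv}_c)|$ with $|E(G^{uv})|=|E_l(G^{uv})|+|E(G^{uv}_c)|$. You have merely made the paper's terse counting rigorous by passing through the multigraph contraction and verifying that every parallel class has multiplicity at most $2$.
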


\begin{proof} $G_{uv}$ contains the vertices of $G$ minus one vertex for each 
contracted edge of
	type $uv$. $G^{uv}$ contains one vertex for each contracted edge of $G$ 
	(the vertices of
	$V(G^{uv}_c)$) plus one vertex for each triangle of $G$ involving an edge 
	of type $uv$ (the
	vertices
	of $V_l(G^{uv})$). Therefore to obtain $|V(G)|$, from  
	$|V(G_{uv})|+|V(G^{uv})|$ we have to
	subtract
	$|V_l(G^{uv})|$.
	
	$E(G_{uv})$ contains the set $E(G)$ of edges of $G$ minus \lenum{1} the 
	contracted edges of
	type $uv$, minus \lenum{2} the multiple edges obtained when contracting a 
	triangle of $G$
	containing an edge of type $uv$, minus \lenum{3} the multiple edges 
	obtained when contracting a
	square of $G$ with two opposite edges of type $uv$, and plus \lenum{4} some 
	edges we may
	have created if we had only two opposite vertices of a square, one of type 
	$u$ and the other of
	type $v$. Notice that  there are $|V(G^{uv}_c)|$ edges of type $uv$ (group 
	\lenum{1}), there are
	$|E_l(G^{uv})|$ in group \lenum{2}, and $|E(G^{uv}_c)|$ in group \lenum{3}. 
	Since
	$|E(G^{uv})|=|E_l(G^{uv})|+|E(G^{uv}_c)|$, we obtain the required inequality
	$|E(G)|\leq|E(G_{uv})|+|E(G^{uv})|+|V(G^{uv}_c)|$.
\end{proof}

\begin{lemma}\label{lem_G_c^uv} $\frac{|V(G^{uv})|}{|V(G^{uv}_c)|}\le |N|$, 
where $N$ is the set of
	the common neighbors of $u$ and $v$.
\end{lemma}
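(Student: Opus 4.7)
The plan is a short fiber-counting argument that exploits the two-stratum structure of $V(G^{uv})$. By the definition of $G^{uv}$, every vertex has as its $i$-th coordinate either the central node $\widetilde w$, in which case it sits in $V(G^{uv}_c)$, or a leaf $\widetilde x$ with $x\in N$, in which case it is a tip vertex in $V_l(G^{uv})$. These two collections are disjoint, so Lemma~\ref{lem_counting_V_and_E} already gives the identity $|V(G^{uv})|=|V(G^{uv}_c)|+|V_l(G^{uv})|$, reducing the task to bounding $|V_l(G^{uv})|$ in terms of $|V(G^{uv}_c)|$.

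Next I would introduce the natural projection $\pi\colon V_l(G^{uv})\to V(G^{uv}_c)$ that sends a tip vertex $t=(v_1,\ldots,\widetilde x,\ldots,v_m)$ to the central vertex $\pi(t)=(v_1,\ldots,\widetilde w,\ldots,v_m)$. The non-trivial verification here is that $\pi$ is well-defined: the very definition of a tip vertex of $G^{uv}$ forces both $(v_1,\ldots,u,\ldots,v_m)$ and $(v_1,\ldots,v,\ldots,v_m)$ to lie in $V(G)$, and this is exactly the condition that places $\pi(t)$ in $V(G^{uv}_c)$.

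The core step is then to bound the fibers of $\pi$. For a fixed central vertex $y=(v_1,\ldots,\widetilde w,\ldots,v_m)\in V(G^{uv}_c)$, the preimage $\pi^{-1}(y)$ is in obvious bijection with $\{x\in N : (v_1,\ldots,x,\ldots,v_m)\in V(G)\}$, and since this set injects into $N$ we get $|\pi^{-1}(y)|\le|N|$. Summing over all $y\in V(G^{uv}_c)$ yields $|V_l(G^{uv})|\le|N|\cdot|V(G^{uv}_c)|$, and combining with the partition identity delivers the claimed ratio bound.

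I do not expect a real obstacle: the whole argument is bookkeeping about which tuples belong to $V(G)$ versus $V(G^{uv})$. The only point that requires care is reading the definition of a tip vertex of $G^{uv}$ carefully enough to see that its two companion lifts in direction $i$ must also belong to $V(G)$, as this is precisely what makes $\pi$ land inside $V(G^{uv}_c)$ and pins the fiber size to the common-neighbor count $|N|$.
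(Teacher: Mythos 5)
Your argument is essentially the paper's own proof: the paper likewise splits $V(G^{uv})$ into central vertices and tip vertices, observes that each tip vertex is adjacent to a unique central vertex, and that each central vertex has at most $|N|$ adjacent tips. Your projection $\pi$ and the fiber bound $|\pi^{-1}(y)|\le |N|$ are just a more explicit rendering of that counting, and your well-definedness check (that the two lifts of a tip in direction $i$ lie in $V(G)$, forcing $\pi(t)\in V(G^{uv}_c)$) is the right point to verify.

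One caveat, which applies equally to the paper's proof and to yours: the counting actually yields $|V(G^{uv})|=|V(G^{uv}_c)|+|V_l(G^{uv})|\le (1+|N|)\,|V(G^{uv}_c)|$, i.e.\ a ratio of at most $|N|+1$, not $|N|$. Your closing sentence, that the fiber bound ``combined with the partition identity delivers the claimed ratio bound,'' glosses over this $+1$. The stated inequality is in fact false in degenerate cases: if the edge $uv$ lies in no triangle then $N=\emptyset$, $V_l(G^{uv})=\emptyset$, and the ratio equals $1>|N|$ whenever $G^{uv}_c$ is nonempty. The discrepancy is harmless for the application in Theorem \ref{thm_subgraphs_products}, which only needs $\frac{|V(G^{uv})|}{|V(G^{uv}_c)|}\le \mu(H)$ and has $|N|\le \mu(H)-1$, so $|N|+1\le\mu(H)$ suffices; but you should either state the lemma with $|N|+1$ or be explicit that you are bounding $\frac{|V_l(G^{uv})|}{|V(G^{uv}_c)|}$ by $|N|$.
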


\begin{proof} Each vertex of $G^{uv}$ is either a central vertex of the form
	$(v_1,\ldots,v_{i-1},\widetilde{w},v_{i+1},\ldots,v_m)$ or is a tip vertex 
	of the form
	$(v_1,\ldots,v_{i-1},\widetilde{x},v_{i+1},\ldots,v_m).$ Each tip vertex of 
	$G^{uv}$ is adjacent to a
	single central vertex. Since each central vertex of $G^{uv}$ is adjacent
	to at most $|N|$ tip vertices, we obtain the required inequality.
\end{proof}

\subsection{Proof of Theorem \ref{thm_subgraphs_products}}

Let $G$ be a subgraph with $n$ vertices of the Cartesian product
$\Gamma:=G_1\product\cdots\product
G_m$ of connected graphs $G_1,\ldots,G_m$ from ${\mathcal G}(H)$. We have to 
prove that
$\frac{|E(G)|}{|V(G)|}\le \mu(H)\cdot\vcd^*(G)\le \mu(H)\cdot\log n$. The 
second 
inequality follows
from Lemma~\ref{VC-dim-prop2}. We will prove the inequality 
$\frac{|E(G)|}{|V(G)|}\le
\mu(H)\cdot\vcd^*(G)$ by induction on the number of vertices in the factors of 
$\Gamma$. Since each
factor $G_i$ of $\Gamma$  belongs to ${\mathcal G}(H)$, $G_i$ is 
$\mu(H)$-degenerated, i.e.,
$G_i$ and any of its subgraphs contains a vertex $v$ of degree at most
$\mu(H)$. Let $u$ be any
neighbor of $v$ in $G_i$. Then the set $N$ of common neighbors of $u$ and $v$ 
has size at most
$\mu(H)-1$. Consider the graphs  $G_{uv}, G^{uv},$ and $G^{uv}_c$ obtained from 
$G$ by
performing the  operations from previous subsection
with respect to the edge $uv$ of $G_i$. Then $G_{uv}$ is a subgraph of the 
product $\widehat
\Gamma =G_1\product\cdots\product G_{i-1}\product\widehat G_i\product
G_{i+1}\cdots\product G_m$. Since
$\widehat G_i$ is a minor of $G_i$, all factors of  $\widehat \Gamma$ belong to 
${\mathcal G}(H)$.
Moreover, since $\widehat G_i $ contains fewer vertices than $G_i$, we can
apply the induction
assumption to subgraphs of $\widehat \Gamma$, in particular to $G_{uv}$.  
Analogously, $G^{uv}$
and $G_c^{uv}$ are subgraphs of the product
$\widetilde{\Gamma}=G_1\product\cdots\product
G_{i-1}\product\widetilde G_i\product G_{i+1}\product\cdots\product G_m$
and since $\widetilde G_i$ is a star
isomorphic to a subgraph of $G_i$, all factors of $\widetilde{\Gamma}$ also 
belong to ${\mathcal
	G}(H)$. Since $\widetilde G_i$ contains fewer vertices than $G_i$, also
	the graphs $G^{uv}$ and
$G_c^{uv}$ do.  Consequently, we have $\frac{|E(G_{uv})|}{|V(G_{uv})|}\le 
\mu(H)\cdot\vcd^*(G_{uv})$
and $\frac{|E(G_c^{uv})|}{|V(G_c^{uv})|}\le \mu(H)\cdot\vcd^*(G_c^{uv})$.

By Lemma \ref{lem_counting_V_and_E} and using the inequality 
$\frac{a_1+a_2}{b_1+b_2}\le \max\{
\frac{a_1}{b_1},\frac{a_2}{b_2}\}$, we obtain
$$
\begin{aligned}
\frac{|E(G)|}{|V(G)|}	&\leq 
\frac{|E(G_{uv})|+|E(G^{uv})|+|V(G^{uv}_c)|}{|V(G_{uv})|+|V(G^{uv}_c)|}	\\
&\leq \max\left\{\frac{|E(G_{uv})|}{|V(G_{uv})|}, 
\frac{|E(G^{uv})|+|V(G^{uv}_c)|}{|V(G^{uv}_c)|} \right\}.
\end{aligned}
$$

By Lemma \ref{prop_VC(G_uv)},   $\vcd^*(G_{uv})\le \vcd^*(G)$, whence
$$
\frac{|E(G_{uv})|}{|V(G_{uv})|}\le \mu(H)\cdot\vcd^*(G_{uv})\le 
\mu(H)\cdot\vcd^*(G).
$$
Thus it remains to provide a similar upper bound for
$\frac{|E(G^{uv})|+|V(G^{uv}_c)|}{|V(G^{uv}_c)|}$. Since 
$|E(G^{uv})|+|V(G^{uv}_c)|=|E(G^{uv}_c)|+|V_l(G^{uv})|+|V(G^{uv}_c)|=|E(G^{uv}_c)|+|V(G^{uv})|$
and $|N|\le \mu(H)$, from Lemmas \ref{prop_VC(G^uv)} and  \ref{lem_G_c^uv} we 
conclude:
$$
\begin{aligned}
\frac{|E(G^{uv})| + |V(G^{uv}_c)|}{|V(G^{uv}_c)|}
&=\frac{|E(G_c^{uv})| + |V(G^{uv})|}{|V(G^{uv}_c)|}\\
&\leq \mu(H)\cdot\vcd^*(G^{uv}_c) + \mu(H)	\\
&\leq \mu(H)\cdot(\vcd^*(G)-1) + \mu(H)				\\
&= \mu(H)\cdot\vcd^*(G).
\end{aligned}
$$
This establishes the inequality  $\frac{|E(G)|}{|V(G)|}\le 
\mu(H)\cdot\vcd^*(G)$ 
and concludes the
proof of the theorem.
\qed

\begin{remark} To prove Conjecture \ref{conj_E/V<vcdens}, in Lemma 
\ref{prop_VC(G^uv)}  it is necessary to establish a stronger inequality 
(\ref{itm_vcdens_Gc^uv}).
\end{remark}

\section{Special graph classes}\label{sect_consequences}  

In this section we specify our results (and sometimes obtain stronger
formulations) for subgraphs of
Cartesian products of several classes of graphs: bounded degeneracy graphs, 
graphs with
polylogarithmic average degree, dismantlable graphs, chordal graphs, octahedra, 
cliques.

\subsection{Bounded degeneracy graphs}
The {\it degeneracy}  of a graph $G$ is the smallest $k$ such that the vertices 
of $G$ admit a total
order $v_1,\ldots,v_n$ such that the degree of $v_i$ in the subgraph of $G$ 
induced by
$v_i,\ldots,v_n$ is at most $k$ (similarly to arboricity, degeneracy of $G$ is 
linearly bounded by its
density $\dens(G)$). Let ${\mathcal G}_{\delta}$ denote the class of all graphs 
with degeneracy at
most $\delta$. ${\mathcal G}_{\delta}$ contains the class of all graphs in 
which all degrees of
vertices are bounded by $\delta$. ${\mathcal G}_{5}$ contains the class of all 
planar graphs. The
density of any graph from  ${\mathcal G}_{\delta}$ is bounded by $\delta$. From 
Theorem
\ref{thm_subgraphs_products1} we immediately obtain the following corollary:

\begin{corollary} \label{corollary_bounded-degree-graphs}
	If $G$ is a subgraph of $G_1\product\cdots\product G_m$ and
	$G_1,\ldots,G_m\in {\mathcal
		G}_{\delta}$, then $\frac{|E(G)|}{|V(G)|}\le 2\delta\cdot\log |V(G)|.$
\end{corollary}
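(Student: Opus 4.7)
The plan is to reduce the corollary directly to Theorem \ref{thm_subgraphs_products1}, for which it suffices to check that if $H \in \mathcal{G}_\delta$ then $\mad(H) \le 2\delta$. This is the only nontrivial content: once we establish this inequality, the stated bound follows by plugging $\beta \le \lceil 2\delta \rceil = 2\delta$ into the conclusion $\frac{|E(G)|}{|V(G)|} \le \beta \log|V(G)|$ of Theorem \ref{thm_subgraphs_products1}.

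First, I would observe that if $H \in \mathcal{G}_\delta$, then every subgraph $H'$ of $H$ also lies in $\mathcal{G}_\delta$, because a degenerate vertex ordering of $H$ restricts to one of $H'$ (the restricted ordering still has the property that each vertex has at most $\delta$ later neighbors in the restricted graph). Applied to $H'$ itself, the degeneracy ordering $v_1, \ldots, v_{n'}$ of $H'$ assigns to each $v_i$ at most $\delta$ neighbors among $\{v_{i+1}, \ldots, v_{n'}\}$. Summing over $i$ counts each edge exactly once, so $|E(H')| \le \delta \cdot |V(H')|$, i.e. $\dens(H') \le \delta$, which gives $d(H') = 2|E(H')|/|V(H')| \le 2\delta$. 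Taking the maximum over all subgraphs yields $\mad(H) \le 2\delta$.

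Applying this to each factor gives $\mad(G_i) \le 2\delta$ for $i=1,\ldots,m$, hence
\[
\beta \;=\; \left\lceil \max\{\mad(G_1), \ldots, \mad(G_m)\} \right\rceil \;\le\; \lceil 2\delta \rceil \;=\; 2\delta,
\]
using that $\delta$ is an integer. Theorem \ref{thm_subgraphs_products1} then gives $\frac{|E(G)|}{|V(G)|} \le \beta \log |V(G)| \le 2\delta \log |V(G)|$, which is the claim.

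I do not anticipate a real obstacle: the only step that is not pure bookkeeping is the bound $\mad(H) \le 2\delta$, and this is a standard consequence of degeneracy via the handshake-style counting above. Everything else is specialization of Theorem \ref{thm_subgraphs_products1} to the class $\mathcal{G}_\delta$.
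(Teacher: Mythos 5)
Your proof is correct and follows essentially the same route as the paper: the paper notes that every graph in ${\mathcal G}_{\delta}$ has density at most $\delta$ (hence $\mad\le 2\delta$) and then invokes Theorem \ref{thm_subgraphs_products1}; you simply spell out the standard degeneracy-ordering count that the paper leaves implicit. No issues.
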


\subsection{Graphs of polylogarithmic average degree}

Let ${\mathcal G}_{\log,k}$ denote the class of graphs $H$ closed by subgraphs 
and in which
$\mad$ is bounded by a polylogarithmic function, i.e., $2\ceil{\dens(H)}\le 
\log^k|V(H)|$. In this case,
Theorem \ref{thm_subgraphs_products1} provides the following corollary:

\begin{corollary} \label{corollary_polylog} If $G$ is a subgraph of
$G_1\product\cdots\product G_m$ and
	$G_1,\ldots,G_m\in {\mathcal G}_{\log,k}$, then $G\in {\mathcal 
	G}_{\log,k+1}$, i.e., 
	$\frac{|E(G)|}{|V(G)|}\le \log^{k+1}|V(G)|.$
\end{corollary}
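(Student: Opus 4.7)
The plan is to derive this corollary as a direct consequence of Theorem~\ref{thm_subgraphs_products1}, essentially by chasing definitions. Since every subgraph $G'$ of $G$ is itself a subgraph of $G_1 \product \cdots \product G_m$, it suffices to establish the displayed bound for $G$ and observe that the same reasoning yields it for every $G' \subseteq G$, so $G \in \mathcal{G}_{\log, k+1}$.

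First I would control the projections. For each $i$, $\pi_i(G)$ is a subgraph of $G_i$, and since the class $\mathcal{G}_{\log, k}$ is closed under subgraphs by definition, $\pi_i(G) \in \mathcal{G}_{\log, k}$. The polylogarithmic defining inequality then yields $2\lceil \dens(\pi_i(G))\rceil \leq \log^k |V(\pi_i(G))|$. Because projection can only decrease the vertex count, $|V(\pi_i(G))| \leq |V(G)|$, so $2\lceil \dens(\pi_i(G))\rceil \leq \log^k |V(G)|$.

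Next I would convert densities into maximum average degrees, using the identity $\mad(H) = 2\dens(H)$ from the preliminaries together with the elementary inequality $\lceil 2x \rceil \leq 2\lceil x \rceil$. This gives
$$\beta_0 \,=\, \lceil \max_i \mad(\pi_i(G))\rceil \,=\, \lceil \max_i 2\dens(\pi_i(G))\rceil \,\leq\, 2\lceil \max_i \dens(\pi_i(G))\rceil \,\leq\, \log^k |V(G)|,$$
where $\beta_0$ is the quantity appearing in Theorem~\ref{thm_subgraphs_products1}. Plugging this into that theorem yields
$$\frac{|E(G)|}{|V(G)|} \,\leq\, \beta_0 \log |V(G)| \,\leq\, \log^{k+1}|V(G)|,$$
as claimed. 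Applying the identical argument to any subgraph $G'$ of $G$ (which is also a subgraph of the same Cartesian product, and whose projections are subgraphs of those of $G$) gives $|E(G')|/|V(G')| \leq \log^{k+1}|V(G')|$, certifying $G \in \mathcal{G}_{\log, k+1}$.

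There is no real obstacle in this argument: the corollary is a direct consequence of Theorem~\ref{thm_subgraphs_products1} together with the subgraph-closedness of $\mathcal{G}_{\log, k}$, and the only mild care needed is in handling the ceiling when one converts between $\dens$ and $\mad$.
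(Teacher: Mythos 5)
Your proposal is correct and follows essentially the same route as the paper: both deduce the bound from Theorem~\ref{thm_subgraphs_products1} by observing that each projection $\pi_i(G)$ is a subgraph of $G_i$, hence lies in ${\mathcal G}_{\log,k}$, so that $\beta_0\le\log^k|V(G)|$. Your version is slightly more careful in two harmless respects — making the ceiling conversion $\lceil 2x\rceil\le 2\lceil x\rceil$ explicit and verifying the subgraph-closure needed for $G\in{\mathcal G}_{\log,k+1}$ — but there is no substantive difference.
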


\begin{proof}
	By Theorem \ref{thm_subgraphs_products1}, $\frac{|E(G)|}{|V(G)|}\leq
	\max\{\ceil{\mad(\pi_1(G))},\ldots,\ceil{\mad(\pi_m(G))}\}\cdot\log 
	|V(G)|.$ Since  $\pi_i(G)$ is a
	subgraph of $G_i$, all projections $\pi_i(G), i=1,\ldots,m,$ of $G$ on 
	factors belong to ${\mathcal
		G}_{\log,k}$. Consequently,  $\ceil{\mad(\pi_i(G))}\le 
		\log^k|V(\pi_i(G))|\le \log^k|V(G)|$. Thus
	$\frac{|E(G)|}{|V(G)|}\leq \log^{k+1}|V(G)|$.
\end{proof}

\subsection{Products of dismantlable graphs}
A vertex $u$ of a graph $G$ is {\it dominated} in $G$ by its neighbor $v$ if 
$N[u]\subseteq N[v]$
($uv$ is called a {\it dominating edge}). A graph $G$ is {\it dismantlable} 
\cite{HeNe} if $G$ admits an ordering
$v_1,v_2,\ldots,v_n$ of its vertices such that for each $i\in\{1,\ldots,n\}$, 
$v_i$ is dominated in the
subgraph $G[v_i,\ldots,v_n]$ of $G$ induced by the vertices $v_i,\ldots,v_n$. 
Dismantlable graphs
comprise chordal graphs, bridged graphs, and weakly bridged graphs as 
subclasses (for definition
of the last two classes and their dismantlability see for example 
\cite{BrChChGoOs}). Dismantlable
graphs are exactly the cop-win graphs \cite{NoWi}.

A  \emph{min-dismantling order} of $G$ is a dismantling order	in which at 
each step $i$, $v_i$ is a
dominated vertex of $G[v_i,\ldots,v_n]$ of smallest degree.   We define the 
{\it dismantling
	degeneracy} $\dd(G)$ of $G$ as the maximum degree of  a vertex $v_i$ in 
	$G[v_i,\ldots,v_n]$ in a
min-dismantling order of $G$.  Let $\Gamma:=\prod_{i=1}^m G_i$ be a Cartesian 
product of
dismantlable graphs $G_1,\ldots,G_m$. Let $v_{i,1},\ldots,v_{i,n_i}$ be a 
min-dismantling ordering
$<_i$ of the vertex-set of $G_i$, $i=1,\ldots,m$. Define a partial order 
$(\prod_{i=1}^mV_i,\preceq)$
on the vertex set  of $\Gamma$  as the Cartesian product of the totally ordered 
sets $(V_i,<_i),$
$i=1,\ldots,m$. Let $<$ denote any linear extension of $\preceq$. We will call 
this order a {\it product
	min-dismantling order}. The maximal degree of a vertex $v_i$ in the 
	subgraph of $\Gamma$
induced
by $v_i,\ldots,v_N$, where $N=n_1\times\cdots\times n_m$, will be denoted by 
$\DD(\Gamma)$. One
can show that $\DD(\Gamma)=\sum_i \dd(G_i)$.

For a subgraph $G$ of $\Gamma$ with its vertices ordered $v_1,\ldots,v_n$ 
according to the total
order $<$ on $\Gamma$, we will define $\DD(G)$ as the maximum degree of a 
vertex $v_i$ in the
subgraph  of $G$ induced by $v_i,\ldots,v_n$. Note that 
$\DD(G)\leq\DD(\Gamma)$. Note also that
contracting a dominating edge $u_iv_i$ of a factor $G_i$ gives a minor of $G_i$ 
that, at the same
time, is an induced subgraph of $G_i$. Thus, if $G$ is a subgraph of a 
Cartesian product of
dismantlable graphs and at each step in the proof of Theorem
\ref{thm_subgraphs_products} we
contract a dominating edge of a factor, we obtain the following result:

\begin{proposition}\label{prop_density_dismantlable} If $G$ is a subgraph of
	$\Gamma=G_1\product\cdots\product G_m$ and $G_1,\ldots,G_m$ are
	dismantlable graphs,  then
	$\frac{|E(G)|}{|V(G)|} \leq \DD(G)\cdot\vcd(G).$
\end{proposition}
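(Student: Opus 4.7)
The plan is to mimic the inductive structure of the proof of Theorem \ref{thm_subgraphs_products} verbatim, but with two specialisations that are available when every factor is dismantlable: at each step I pick the contracted edge to be a \emph{dominating} edge $u_iv_i$ of some factor $G_i$ (with $u_i$ dominated by $v_i$), and I reformulate the auxiliary lemmas with $\vcd$ in place of $\vcd^*$. The crucial structural input is that contracting a dominating edge $u_iv_i$ of $G_i$ produces the graph $\widehat G_i$, which is simultaneously a minor of $G_i$ and an induced subgraph of $G_i$ (namely $\widehat G_i\cong G_i-u_i$); hence $\widehat G_i$ is again dismantlable, and the star $\widetilde G_i$ is trivially dismantlable. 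This keeps $G_{uv}$ and $G^{uv}_c$ inside Cartesian products of dismantlable graphs with strictly fewer vertices in the affected factor, so the induction on the total number of vertices across factors carries through.

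Inside this induction, I would choose $u_i$ to be the first vertex of $G_i$ in a min-dismantling order of its current state, and $v_i$ one of its dominators. Because $u_i$ is dominated by $v_i$, the common neighbourhood $N$ of $u_i$ and $v_i$ in $G_i$ is exactly $N_{G_i}(u_i)\setminus\{v_i\}$, which by the min-dismantling choice has size at most $\DD(G)-1$. Thus Lemma \ref{lem_G_c^uv} specialises to $|V(G^{uv})|/|V(G^{uv}_c)|\le \DD(G)-1$, and Lemma \ref{lem_counting_V_and_E} is unchanged (it is a pure counting identity). The remaining work is to prove the $\vcd$-analogues of Lemmas \ref{prop_VC(G_uv)} and \ref{prop_VC(G^uv)}, i.e.\ $\vcd(G_{uv})\le \vcd(G)$ and $\vcd(G^{uv}_c)\le \vcd(G)-1$. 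Once these are in hand, plugging them into the exact max-of-two-ratios calculation that closes the proof of Theorem \ref{thm_subgraphs_products} delivers $\frac{|E(G)|}{|V(G)|}\le \DD(G)\cdot\vcd(G)$.

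For the $\vcd$-analogue of Lemma \ref{prop_VC(G^uv)} the pullback is straightforward: a subproduct $M$ of $\widetilde\Gamma$ shattered by $G^{uv}_c$ has its $i$-th factor trivial (all vertices of $G^{uv}_c$ have $\widetilde w$ as $i$-th coordinate), so $i$ is not counted among the non-trivial factors of $M$; then $M'\!:=\!M\,\square\,K_2$, where the new $K_2$ factor is the edge $u_iv_i\subseteq G_i$, is an honest subproduct of $\Gamma$ with one more non-trivial factor, and the same vertex witnesses used in the proof of Lemma \ref{prop_VC(G^uv)} show that $M'$ is shattered by $G$. For the analogue of Lemma \ref{prop_VC(G_uv)}, a subproduct $M$ of $\widehat\Gamma$ shattered by $G_{uv}$ whose $i$-th factor $M_i$ avoids the contracted vertex $w$ pulls back unchanged to a subproduct of $\Gamma$ (since $V(M_i)\subseteq V(\widehat G_i)\setminus\{w\} = V(G_i)\setminus\{u_i,v_i\}$ and shattering is transferred as in Case 1 of Lemma \ref{prop_VC(G_uv)}).

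The main obstacle is the remaining case of the $\vcd$-analogue of Lemma \ref{prop_VC(G_uv)}, where $w\in V(M_i)$. Here one must build a subproduct $M'$ of $\Gamma$ whose $i$-th factor is a \emph{connected} subgraph of $G_i$ containing both $u_i$ and $v_i$, and verify shattering by $G$ even though the $G_{uv}$-witness of a given $v'\in M'$ with $v'_i\in\{u_i,v_i\}$ might a priori only arise from one of $(\,\cdots,u_i,\cdots),(\,\cdots,v_i,\cdots)\in V(G)$. The natural candidate $M'_i$ is obtained from $M_i$ by replacing $w$ by the edge $u_iv_i$ together with all edges of $G_i$ joining $\{u_i,v_i\}$ to $V(M_i)\setminus\{w\}$; connectivity of $M'_i$ in $G_i$ follows from connectivity of $M_i$ in $\widehat G_i$, because every vertex of $M_i$ adjacent to $w$ is a neighbour of $v_i$ in $G_i$. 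To complete the shattering check one uses $N[u_i]\subseteq N[v_i]$ to ensure that whichever endpoint of $u_iv_i$ is produced by the $G_{uv}$-witness can be pushed to the required coordinate while remaining in $G$; this is precisely the step in which dismantlability of the factor is essential, and it is where the proof diverges from (and strengthens) the $\vcd^*$-argument of Theorem \ref{thm_subgraphs_products}.
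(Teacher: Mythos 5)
Your overall architecture is the same as the paper's: contract a dominating edge $u_iv_i$ at each step, note that $\widehat G_i\cong G_i-u_i$ is simultaneously a minor and an induced subgraph of $G_i$ (hence dismantlable), keep Lemma \ref{lem_counting_V_and_E}, bound $|V(G^{uv})|/|V(G^{uv}_c)|$ via Lemma \ref{lem_G_c^uv}, and rerun the closing computation of Theorem \ref{thm_subgraphs_products} with $\vcd$ in place of $\vcd^*$. Your $\vcd$-version of Lemma \ref{prop_VC(G^uv)} is fine: it works because a central vertex of $G^{uv}_c$ certifies that \emph{both} $x[u_i]$ and $x[v_i]$ lie in $V(G)$, so $M\product\{u_i,v_i\}$ really is a shattered subproduct of $\Gamma$. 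The gap is exactly at the step you flag as the main obstacle, namely the claim that domination $N[u_i]\subseteq N[v_i]$ lets you ``push'' a $G_{uv}$-witness to whichever of the coordinates $u_i,v_i$ is required. Domination is a property of the factor $G_i$ alone; it says nothing about which of $x[u_i]$, $x[v_i]$ belongs to $V(G)$, an arbitrary subgraph of the product. A witness for a fiber over $w$ only guarantees that \emph{one} of $x[u_i]$, $x[v_i]$ is in $V(G)$, whereas your candidate factor $M'_i\supseteq\{u_i,v_i\}$ requires the fibers over $u_i$ and over $v_i$ to meet $G$ \emph{separately}. Nothing in the construction supplies the missing vertex.

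In fact the inequality $\vcd(G_{uv})\le \vcd(G)$ that this step is meant to establish is false in general, even for dominating edges. Let $G_1$ be the path $u-v-y$ (so $u$ is dominated by $v$), let $G_2=K_2$ on $\{0,1\}$, and let $G$ be induced by $\{(u,0),(y,0),(v,1),(y,1)\}$ in $G_1\product G_2$. Every subproduct with two non-trivial factors ($\{u,v\}\product\{0,1\}$, $\{v,y\}\product\{0,1\}$, or $\{u,v,y\}\product\{0,1\}$) would need $(v,0)\in V(G)$, so $\vcd(G)=1$; yet after contracting $uv$ the trace of $G$ covers all four vertices of $\widehat G_1\product G_2$, so $G_{uv}$ shatters $\{w,y\}\product\{0,1\}$ and $\vcd(G_{uv})=2$. (The paper's own proof dismisses this step with the assertion that $G_{uv}$ is an induced subgraph of $G$, which the same example shows is not literally true; your more explicit argument makes the difficulty visible but does not resolve it.) So the proposal is incomplete at precisely this point: repairing it requires either a genuinely different way of controlling the induced VC-dimension under the contraction, or a decomposition that never merges a fiber in which only one of the two endpoints is present.
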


\begin{proof}	
	Since $G^{uv}$ is a subgraph of $G$, from Lemma \ref{lem_G_c^uv}  we infer 
	that
	$\frac{|V(G^{uv})|}{|V(G^{uv}_c)|}\leq\DD(G^{uv})\leq\DD(G)$.
	Since $G_{uv}$ is an induced subgraph of $G$, we directly obtain that 
	$\vcd(G_{uv}) \leq
	\vcd(G)$. Then, according to Lemma \ref{prop_VC(G^uv)} for $\vcd(G)$,
	we have $\vcd(G^{uv}_c) \leq \vcd(G)-1$. The result now follows from those 
	remarks together with
	the previous inequality.
\end{proof}

\subsection{Products of chordal graphs}\label{sect_chordal_graph}
For {\it chordal graphs} $G$, we use the facts that any simplicial ordering of 
$G$ is a dominating
ordering and that  for any edge $uv$ of a chordal graph $G$, the graph $G_{uv}$ 
is chordal and
$\cm{G_{uv}} \leq \cm{G}$ (where $\cm G$ is the size of a largest clique of 
$G$).  As for
dismantlable graphs, contracting each time an edge between a simplicial vertex 
and its neighbor and
applying this property, we obtain the following corollary of Proposition
\ref{prop_density_dismantlable} and Theorem \ref{thm_subgraphs_products}:

\begin{corollary}\label{corollary_dens<omega.vcdens}  If $G$ is a subgraph of
	$G_1\product\cdots\product G_m$ and $G_1,\ldots,G_m$ are chordal
	graphs, then
	$\frac{|E(G)|}{|V(G)|} \leq \cm G\cdot\vcd(G)$.
\end{corollary}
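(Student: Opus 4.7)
The plan is to adapt the inductive proof of Theorem~\ref{thm_subgraphs_products} by replacing the degeneracy-based choice of $uv$ with a simplicial-vertex-based choice, and exploiting the property stated above the corollary---that contracting an edge of a chordal graph at a simplicial vertex preserves chordality and does not increase the clique number. I induct on the total number of vertices of the factors $G_1,\ldots,G_m$. At the inductive step, pick a factor $G_i$ with at least two vertices, and choose a simplicial vertex $v$ of $G_i$ (which exists by chordality) together with any neighbor $u$. Since $N_{G_i}[v]$ is a clique, the set $N$ of common neighbors of $u,v$ in $G_i$ satisfies $|N|+2\le \omega(G_i)$, and more importantly $\{u,v\}\cup N$ is a clique of $G_i$. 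Then perform the contraction operations of Section~\ref{products2} along $uv$, obtaining $G_{uv}\subseteq\widehat{\Gamma}$ (with $\widehat{G}_i\cong G_i\setminus\{v\}$ still chordal, because $v$ is dominated by $u$) and $G^{uv},G^{uv}_c\subseteq\widetilde{\Gamma}$ (with the star $\widetilde{G}_i$ trivially chordal). Both ambient products are products of chordal graphs with strictly fewer total vertices, so the inductive hypothesis applies to $G_{uv}$ and $G^{uv}_c$.

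The inductive step then mirrors the end of the proof of Theorem~\ref{thm_subgraphs_products}. By Lemma~\ref{lem_counting_V_and_E} together with the standard inequality $\frac{a_1+a_2}{b_1+b_2}\le\max\{\frac{a_1}{b_1},\frac{a_2}{b_2}\}$, it suffices to bound both $\frac{|E(G_{uv})|}{|V(G_{uv})|}$ and $\frac{|E(G^{uv})|+|V(G^{uv}_c)|}{|V(G^{uv}_c)|}$ by $\omega(G)\cdot\vcd(G)$. From Lemmas~\ref{prop_VC(G_uv)} and~\ref{prop_VC(G^uv)}, whose arguments carry over verbatim to the induced VC-dimension $\vcd$ in place of $\vcd^{*}$, we get $\vcd(G_{uv})\le\vcd(G)$ and $\vcd(G^{uv}_c)\le\vcd(G)-1$. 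Granting the chordal-monotonicity claims $\omega(G_{uv}),\omega(G^{uv}_c)\le\omega(G)$ and the refined tip bound $|V(G^{uv})|/|V(G^{uv}_c)|\le \omega(G)-1$, exactly the same algebra as in the proof of Theorem~\ref{thm_subgraphs_products}, with $\omega(G)$ substituted for $\mu(H)$, yields $\frac{|E(G)|}{|V(G)|}\le\omega(G)\cdot\vcd(G)$.

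The main obstacle is justifying those clique-monotonicity claims at the subgraph level, since the property cited above the corollary refers to a chordal graph itself, whereas here $G$ is only a subgraph of a product of chordal graphs. The refined tip bound should follow by a layer-lifting argument: if a central vertex $w$ of $G^{uv}_c$ corresponds to the edge $((\cdot,u,\cdot),(\cdot,v,\cdot))\in E(G)$ and has $k$ tip neighbors, then the $k+2$ associated vertices of $G$ all lie in a single $G_i$-layer with their $i$-coordinates contained in $N_{G_i}[v]$, hence forming a clique in $G$ of size $k+2$; this gives $k+2\le\omega(G)$ and therefore $|V(G^{uv})|/|V(G^{uv}_c)|\le 1+(\omega(G)-2)=\omega(G)-1$. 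The monotonicity $\omega(G_{uv}),\omega(G^{uv}_c)\le\omega(G)$ should be obtained by a parallel lifting argument: for every new $w$-vertex appearing in a clique of $G_{uv}$ (respectively $G^{uv}_c$), simpliciality of $v$ in $G_i$ guarantees a consistent choice of preimage in $\{(\cdot,u,\cdot),(\cdot,v,\cdot)\}$ producing a clique of the same size in $G$. Handling this lifting carefully---especially verifying that the required edges of $G$ are present, which is where the clique structure of $N_{G_i}[v]$ becomes essential---is the technical heart of the proof.
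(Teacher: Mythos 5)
Your overall strategy is exactly the paper's: the authors prove the corollary by rerunning the induction of Theorem~\ref{thm_subgraphs_products} (via Proposition~\ref{prop_density_dismantlable}), contracting at each step an edge $uv$ with $v$ simplicial in its factor, and your treatment of $G^{uv}_c$ and of the tip bound is correct and in fact more careful than the paper's one-line justification. The lifting arguments you sketch do work there: a clique of $G^{uv}_c$ lies in a single $G_j$-layer ($j\ne i$) and its $u$-preimages form a clique of $G$, and the $k$ tips of a central vertex together with the two endpoints of the contracted edge have $i$th coordinates inside the clique $N_{G_i}[v]$, hence give a $(k+2)$-clique of $G$ and the bound $|V(G^{uv})|/|V(G^{uv}_c)|\le \cm{G}-1$.

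The genuine gap is the claim $\cm{G_{uv}}\le\cm{G}$. As the paper defines it (see item (4) in the proof of Lemma~\ref{lem_counting_V_and_E}), $G_{uv}$ contains \emph{created} edges: if $x=(\ldots,u,\ldots,a,\ldots)$ and $y=(\ldots,v,\ldots,b,\ldots)$ belong to $V(G)$ with $ab\in E(G_j)$ but the other two corners of that square do not, then the images of $x$ and $y$ become adjacent in $G_{uv}$ although $x$ and $y$ are not adjacent in $G$. Your ``consistent choice of preimage'' lifting breaks exactly on such edges, because the only valid preimage of one endpoint has $i$th coordinate $u$ and of the other has $i$th coordinate $v$, and simpliciality of $v$ in $G_i$ says nothing about which product vertices lie in $V(G)$. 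Concretely, take $\Gamma=K_2\product K_3$ with $V(K_2)=\{u,v\}$, $V(K_3)=\{a,b,c\}$, and $V(G)=\{(u,a),(v,b),(u,c)\}$: then $G$ is induced with a single edge, so $\cm{G}=2$, while $G_{uv}$ is a triangle, so $\cm{G_{uv}}=3$, and the inductive step as you (and, implicitly, the paper) set it up does not close. To repair it you must either redefine $G_{uv}$ on this branch as the quotient graph of $G$ (discarding created edges, which only strengthens the inequality of Lemma~\ref{lem_counting_V_and_E}; then every clique of $G_{uv}$ does lift, since any two of its vertices joined by a genuine image-edge share an admissible $i$th coordinate in $\{u,v\}$, forcing a common one over the whole clique), taking care that the induction hypothesis must then cover non-induced subgraphs, or else settle for the weaker but cleanly provable bound with $\max_i\cm{G_i}$ in place of $\cm{G}$. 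Spelling out one of these repairs is necessary before the proof can be considered complete.
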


One basic class of Cartesian products of chordal graphs is the class of {\it 
Hamming graphs}, i.e.,
Cartesian product of complete graphs. Consequently,  Corollary 
\ref{corollary_dens<omega.vcdens}
also applies to subgraphs of Hamming  graphs. There is a lot of literature on 
this class of graphs,
since it naturally generalize binary words of constant length (hypercubes) to 
words in arbitrary
alphabets. In particular, generalizations of shattering and ``VC-dimension" 
have been studied in the
case of Hamming graphs (see \cite{Pollard_book,Natarajan}), leading to a 
``stronger" Sauer's lemma
\cite{HaLo} and density results \cite{RuBaRu}.

\subsection{Products of octahedra}\label{sect_octahedra}

The \emph{$d$-dimensional octahedron} $K_{2,\ldots,2}$ (or the {\it cocktail 
party graph}
\cite{DeLa}) is the complete graph on $2d$ vertices minus a perfect matching. 
Equivalently,
$K_{2,\ldots,2}$ is the $d$-partite graph in which each part has two  vertices  
(which we will call {\it
	opposite}). $K_{2,\ldots,2}$ is also the 1-skeleton of the $d$-dimensional 
	cross polytope. If $e_1$
and $e_2$ are opposite vertices, it will be convenient to denote this by 
$e_2=\bar{e_1}$. A subgraph
of an octahedron is called a {\it suboctahedron}. The \emph{$d$-dimensional 
octahedron}
$K_{2,\ldots,2}$ can be viewed as the graph of an alphabet $\Sigma=\{
e_1,\bar{e_1},\ldots,e_d,\bar{e_d}\}$ with an involution 
$\varphi(e_i)=\bar{e_i}$ for each $i=1,\ldots,d$
(an {\it involution} on a finite set $X$ is a mapping $\varphi: X\rightarrow X$ 
such that
$\varphi^2(x)=x$ for any $x\in X$). Consequently, the Cartesian product of $m$ 
$d$-dimensional
octahedra can be viewed as the graph of words of length $m$ of $(\Sigma, 
\varphi)$.

In this subsection, we show that if $G$ is a subgraph of a Cartesian product 
$\Gamma$ of $m$
octahedra $G_1,\ldots,G_m$ of respective dimensions $d_1,\ldots,d_m$,  then an 
analog of Corollary
\ref{corollary_dens<omega.vcdens}  holds (for edge-isoperimetric problem in  
products of
octahedra, see \cite{Har_book}). The difference to  chordal graphs is that 
contracting an edge
of an octahedron can increase its largest clique and the result is not a 
suboctahedron anymore.
Therefore, we have to define the graphs $G_{uv}$ and $G^{uv}$ differently. If 
at the beginning all
factors are octahedra, after a few steps they will no longer be octahedra but 
suboctahedra. 
If some factor is not a clique, then it contains two opposite vertices $e$ and 
$\bar{e}$ and we can
identify them, transforming this factor into a suboctahedron with fewer
opposite pairs. If all factors
are cliques, then their product is a Hamming graph  (which can be viewed as our 
induction basis)
and we can use the results of previous subsection, namely Corollary
\ref{corollary_dens<omega.vcdens}.

Let $G_i$ be a factor of the current Cartesian product $\Gamma$ and suppose 
that $G_i$ is a
suboctahedron containing two opposite vertices $e$ and $\bar{e}$. Let $G$ be a 
subgraph of
$\Gamma$.
We will denote by $v^i$ a vertex of $\Gamma$ with all components fixed excepted 
the $i$th one, i.e.,
$v^i$ fixes the position of a copy of $G_i$. We will denote  this vertex by 
$v^i[e]$, $v^i[\bar{e}]$, or
$v^i[e']$ if we need to fix this $i$th coordinate to $e$, $\bar{e}$, or some 
neighbor  $e'\in V(G_i)$ of
$e$ and $\bar{e}$. Denote by $\widehat G_i$ the graph induced by 
$V(G_i)\setminus\{\bar{e}\}$ and
by $G_e$  the subgraph of $\widehat{\Gamma}$ in which we merged the vertices of 
$G$ having $e$
as their $i$th coordinate with those having $\bar{e}$.
Let $\widetilde G_i$ be a star with $\bar{e}$ as central vertex  and with the 
neighbors of $\bar{e}$ in
$G_i$ as \tips. Let $G^e$ be the subgraph of $\widetilde{\Gamma}$ satisfying 
the following
conditions: \lenum{1} $v^i[\bar{e}]\in V(G^e)$ if and only if 
$v^i[e],v^i[\bar{e}]\in V(G)$; \lenum{2}
$v^i[e']\in V(G^e)$ if and only if $v^i[e'], v^i[\bar{e}], v^i[e]\in V(G)$, 
i.e., we include in $G^e$ the edges of $G$ which are lost by the merging 
operation.

\begin{lemma}\label{lem_counting_V_and_E_octahedra}
	$\begin{cases}
	|V(G)|=|V(G_e)| + |V(G^e_c)| \\
	|E(G)|\leq |E(G_e)| + |E(G^e_c)| + |V_l(G^e)|.
	\end{cases}$
\end{lemma}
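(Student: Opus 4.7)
The plan is to adapt the bookkeeping argument of Lemma~\ref{lem_counting_V_and_E} to the octahedron setting, noting one simplification and one twist caused by the special structure. The simplification: $e$ and $\bar e$ are \emph{non-adjacent} in $G_i$, so the merge operation never contracts an edge of $G$ nor creates a loop, and hence no edge of $E(G)$ is destroyed outright. The twist: the duplicated edges are now distributed between the $V_l(G^e)$ and $E(G^e_c)$ terms (rather than between the $E_l(G^{uv})$ and $E(G^{uv}_c)$ terms of the original lemma), so the matching must be redone.

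For the vertex equation, I would describe $V(G_e)$ as the quotient of $V(G)$ under the relation that identifies $v^i[e]$ with $v^i[\bar e]$ whenever both lie in $V(G)$. Each such identification decreases the vertex count by exactly one, and the positions $v^i$ where the identification fires are, by the very definition of $G^e_c$, in bijection with $V(G^e_c)$. This immediately gives $|V(G)| = |V(G_e)| + |V(G^e_c)|$.

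For the edge inequality, every edge of $E(G)$ contributes an edge to $G_e$, and the losses come solely from two distinct edges of $G$ collapsing onto the same edge of $G_e$. A short case analysis on endpoints sorts these collisions into two types. In the \emph{triangle} type, the two edges are $v^i[e]v^i[e']$ and $v^i[\bar e]v^i[e']$ for some common neighbor $e'$ of $e$ and $\bar e$ in $G_i$; then the three vertices $v^i[e],v^i[\bar e],v^i[e']$ all belong to $V(G)$, so $v^i[e'] \in V_l(G^e)$. In the \emph{square} type, the two edges are $v^i[e]u^i[e]$ and $v^i[\bar e]u^i[\bar e]$ with $u^i$ differing from $v^i$ in exactly one non-$i$ coordinate, so both $v^i$ and $u^i$ index central vertices of $G^e$ and the pair yields an edge of $E(G^e_c)$. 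Charging triangles to distinct tips and squares to distinct central edges yields $|E(G_e)| \geq |E(G)| - |V_l(G^e)| - |E(G^e_c)|$, which rearranges to the claimed inequality.

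The step I expect to require the most care is the case analysis showing that triangle and square collisions are exhaustive. It uses that two distinct edges of $\Gamma$ coincide after merging only when their endpoint pairs agree or are matched by the merge relation, combined with the fact that edges of $\Gamma$ differ in exactly one coordinate; this rules out any ``crossed'' configuration such as pairing $v^i[e]u^i[\bar e]$ with $v^i[\bar e]u^i[e]$, since the endpoints there differ in two coordinates and so are not edges of $\Gamma$ at all.
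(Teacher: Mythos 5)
Your proposal is correct and follows essentially the same route as the paper: the vertex count is the same quotient argument as in Lemma~\ref{lem_counting_V_and_E}, and the edge inequality is obtained by the same double counting, matching each within-copy duplication $v^i[e]v^i[e'],v^i[\bar e]v^i[e']$ to the tip $v^i[e']\in V_l(G^e)$ and each between-copy duplication $v^i[e]u^i[e],v^i[\bar e]u^i[\bar e]$ to the edge $v^i[\bar e]u^i[\bar e]$ of $E(G^e_c)$. Your explicit injective charging and the observation that no edge is destroyed outright (since $e\bar e\notin E(G_i)$) just make precise what the paper reads off its figure.
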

\begin{proof} The counting of vertices of $G$ is the same as in the proof of 
Lemma
	\ref{lem_counting_V_and_E}. The changes concern  the counting of edges. The 
	correspondence
	between the set $E(G)$ and the sets $E(G^e),E(G_e),$  and $V_l(G^e)$ is 
	illustrated in Fig.
	\ref{fig_counting_E_octahedra}. Namely, if $v^i[e],v^i[\bar{e}]\in V(G)$, 
	then the edges from these
	vertices to their neighbors  in the copy indexed by $v^i$ are counted once 
	in $E(G_e)$ and once
	in
	$V_l(G^e)$, and the edges from these vertices to other copies are counted 
	once in $E(G^e_c)$
	and
	once in $E(G_e)$. A new edge may be created if for some $v_1^i$ and $v_2^i$,
	$v_1^i[e],v_2^i[\bar{e}]\in V(G)$ but $v_1^i[\bar{e}],v_2^i[e]\notin V(G)$.
\end{proof}

\begin{figure}
	\centering
	\includegraphics[width=0.4\textwidth]{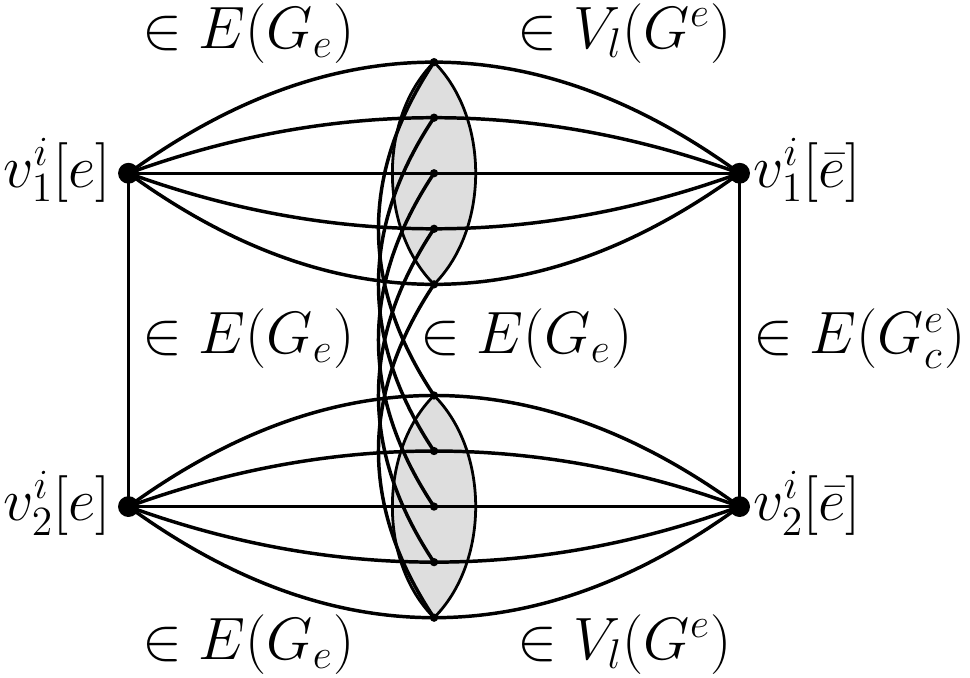}
	\caption{\label{fig_counting_E_octahedra} To the proof of Lemma
		\ref{lem_counting_V_and_E_octahedra}.}
\end{figure}

With the same arguments as in Section \ref{sect_properties_VC-dim}, we can 
prove that \lenum{1}
$\vcd(G_e)\leq \vcd(G)$ and \lenum{2} $\vcd(G^e_c)\leq\vcd(G)-1$.
Notice that $|V(G^e_c)|=0$ if and only if $G^e$ is empty. Otherwise,  if a \tip 
$v^i[e']$ exists in
$G^e$, the central node $v^i[\bar{e}]$ also exists by definition (second 
condition). Those \tips are
the neighbors (in the copy of $\widetilde G_i$ indexed by $v^i$) of 
$v^i[\bar{e}]$ in $G^e$ and
clearly they can not be more than $\cm G$, showing that 
$\frac{|V_l(G^e)|}{|V(G^e_c)|} \leq \cm G$.
Finally, using Lemma \ref{lem_counting_V_and_E_octahedra}, the inequalities (1) 
and (2), and this last
inequality, we obtain:

\begin{proposition}\label{thm_octahedra}
	$\frac{|E(G)|}{|V(G)|} \leq \cm{G}\cdot\vcd(G)$.
\end{proposition}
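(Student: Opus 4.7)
The plan is induction on the total number of vertices across the factors $G_1,\ldots,G_m$ of $\Gamma$. If every factor $G_i$ is a clique, then $\Gamma$ is a Hamming graph, and Corollary \ref{corollary_dens<omega.vcdens} (applied to its chordal factors) already yields $|E(G)|/|V(G)| \le \omega(G)\cdot\vcd(G)$; this will be the base case.

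For the inductive step, I pick a factor $G_i$ that is not a clique together with a pair $\{e,\bar e\}$ of opposite vertices in it, and form the auxiliary graphs $G_e \subseteq \widehat\Gamma$ and $G^e, G^e_c \subseteq \widetilde\Gamma$ defined before the statement. Both $\widehat G_i = G_i \setminus \{\bar e\}$ and the star $\widetilde G_i$ have strictly fewer vertices than $G_i$, so the induction hypothesis applies to $G_e$ and $G^e_c$ as subgraphs of these smaller products. Combining Lemma \ref{lem_counting_V_and_E_octahedra} with the elementary inequality $\frac{a_1+a_2}{b_1+b_2} \le \max\{a_1/b_1,\,a_2/b_2\}$,
$$\frac{|E(G)|}{|V(G)|} \le \max\!\left\{\frac{|E(G_e)|}{|V(G_e)|},\ \frac{|E(G^e_c)|}{|V(G^e_c)|} + \frac{|V_l(G^e)|}{|V(G^e_c)|}\right\}.$$
The first argument is bounded by $\omega(G)\cdot\vcd(G)$ using the induction hypothesis, $\vcd(G_e)\le\vcd(G)$, and $\omega(G_e)\le\omega(G)$. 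The second splits into $|E(G^e_c)|/|V(G^e_c)| \le \omega(G)(\vcd(G)-1)$ by induction and $\vcd(G^e_c)\le\vcd(G)-1$, plus the tip-to-center bound $|V_l(G^e)|/|V(G^e_c)| \le \omega(G)$, summing exactly to $\omega(G)\cdot\vcd(G)$.

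The real work sits in the three supporting claims. I expect the two VC-dimension inequalities to be proved in direct analogy with Lemmas \ref{prop_VC(G_uv)} and \ref{prop_VC(G^uv)}: any shattered minor-subproduct of $\widehat\Gamma$ (resp.\ $\widetilde\Gamma$) can be lifted to a shattered minor-subproduct of $\Gamma$ by absorbing $\bar e$ (resp.\ the leaves of $\widetilde G_i$) into the appropriate block of the defining partition; in the $G^e_c$ case, every vertex shares $\bar e$ as its $i$-th coordinate, so the $i$-th factor of the shattered minor must be trivial, and we can replace it by a connected $2$-partition of $V(G_i)$ separating $e$ from $\bar e$, producing the extra non-trivial factor that yields $\vcd(G^e_c) \le \vcd(G)-1$.

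The hardest step will be justifying the tip-to-center bound $|V_l(G^e)|/|V(G^e_c)|\le\omega(G)$ and the clique-number control $\omega(G_e)\le\omega(G)$ in tandem. The identification of the non-adjacent pair $\{e,\bar e\}$ is not a minor operation and can in principle create new adjacencies in $\widehat\Gamma$, so one has to show that in each fiber the tips package with $v^i[e]$ into a clique of $G$ (which forces the local tip count below $\omega(G)$), and that any clique of $G_e$ that relies on a newly created edge still lifts to a clique in $G$ of equal size via a consistent choice of preimages under the identification map.
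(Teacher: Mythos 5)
Your proposal reproduces the paper's argument essentially step for step: the same Hamming-graph base case via Corollary \ref{corollary_dens<omega.vcdens}, the same identification of an opposite pair $\{e,\bar e\}$ in a non-clique factor producing $G_e$ and $G^e_c$, the same use of Lemma \ref{lem_counting_V_and_E_octahedra} combined with the mediant inequality, and the same three supporting bounds $\vcd(G_e)\le\vcd(G)$, $\vcd(G^e_c)\le\vcd(G)-1$, and $|V_l(G^e)|/|V(G^e_c)|\le\cm{G}$. The two steps you single out as the hardest are precisely the ones the paper itself treats most tersely (it asserts the tip-to-center bound in a single sentence and does not explicitly discuss whether identifying $e$ with $\bar e$ can raise the clique number), so your reconstruction matches both the route of the published proof and, candidly, the places where it is thinnest.
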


\section{Arboricity and adjacency labeling schemes} 
\label{sect_adjacency_scheme}

An \emph{adjacency labeling scheme} on a graph family $\mcg$ consists of a 
\emph{coding
	function} $C_G:V(G)\to\{0,1\}^*$ from the set $V(G)$ of vertices of $G$ the 
	set $\{0,1\}^*$ of finite binary words
that gives to every vertex of a graph $G$ of $\mcg$ a label, and
a
\emph{decoding function} $D_G:\{0,1\}^*\times\{0,1\}^*\to\{0,1\}$ that, given 
the labels of two
vertices of $G$, can determine whether they encode adjacent vertices or not. If 
$\mcg$ is the family
of all forests on $n$ vertices, it is easy to build an adjacency labeling 
scheme using labels of size
$2\lceil\log n\rceil$ bits. Indeed, to construct such a scheme, the coding 
function gives to every
vertex a unique id (that requires $\lceil\log n\rceil$ bits) and concatenate to 
the label of each vertex
the label of its parent. Given two labels, the decoding function determines if 
they encode adjacent
vertices by testing the equality between the first half of one label and the 
second half of the other. It
has been shown in \cite{AlDaKn} that the family of forests admits an adjacency 
labeling scheme
using unique labels of size $\lceil\log n\rceil + O(1)$ bits.  

Kannan, Naor and Rudich \cite{KaNaRu} noticed that if a graph $G$ is covered by 
$k$ forests, then
one can build an adjacency labeling scheme with $(k+1)\lceil\log n\rceil$ bits 
by applying the
construction mentioned above to each of the forests.
The \emph{arboricity} $\arbo(G)$ of a graph $G$ is the minimal number of 
forests necessary to
cover the edges of $G$. The classical theorem by Nash-William \cite{NaWi} 
asserts that the
arboricity of a graph is almost equivalent to its density:
\begin{theorem}[Nash-Williams]\label{thm_NW} The edges of a graph 
$G=(V(G),E(G))$ can be
	partitioned in $k$ forests if and only if $|E(G')|\leq k(|V(G')|-1)$ for 
	all $G'$ subgraph of $G$. i.e.,
	$
	\arbo(G) = \max_{G'\subseteq G}\left(\frac{|E(G')|}{|V(G')|-1}\right)
	\leq \dens(G).
	$
\end{theorem}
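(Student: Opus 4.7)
The plan is to establish the biconditional, from which the displayed formula for $\arbo(G)$ follows by taking $k$ equal to the arboricity. The easy ($\Rightarrow$) direction is a one-line counting argument: if $E(G)$ is partitioned into forests $F_1,\ldots,F_k$, then for any subgraph $G'\subseteq G$ the sets $F_i\cap E(G')$ are pairwise disjoint forests on the vertex set $V(G')$, and a forest on $n$ vertices has at most $n-1$ edges; summing over $i$ gives $|E(G')|\leq k(|V(G')|-1)$.

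For the hard ($\Leftarrow$) direction, my preferred route is via the matroid union theorem of Edmonds and Nash-Williams applied to the graphic matroid $M(G)$ on ground set $E(G)$, whose rank function is $r(F)=|V(F)|-c(F)$, where $V(F)$ is the set of vertices incident to some edge of $F$ and $c(F)$ is the number of connected components of $(V(F),F)$. Independent sets of the $k$-fold union $kM(G)$ are exactly the edge subsets partitionable into $k$ forests, and matroid union gives that the maximum size of an independent set of $kM(G)$ equals $\min_{F\subseteq E(G)}\bigl(|E(G)\setminus F|+k\cdot r(F)\bigr)$. Hence $E(G)$ itself is partitionable into $k$ forests iff $|F|\leq k\cdot r(F)$ for every $F\subseteq E(G)$. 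Under the hypothesis, apply the density bound $|E(G'')|\leq k(|V(G'')|-1)$ to each connected component $G''$ of $(V(F),F)$ and sum: the right-hand sides telescope to $k(|V(F)|-c(F))=k\cdot r(F)$, verifying the matroid union condition and producing the desired partition.

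A self-contained alternative avoiding matroid union is the classical augmenting-sequence argument: among all $k$-tuples of pairwise edge-disjoint forests in $G$, pick one of maximum total size; if some edge $e=uv$ remains uncovered, perform swaps along the fundamental cycles of $F_i+e$ to trace out a subgraph $G'\subseteq G$ on which every $F_i\cap E(G')$ is already spanning, then count edges to contradict the density hypothesis. The main obstacle is of course the hard direction; the easy direction is a single count. Invoking matroid union makes the argument brisk but offloads the combinatorial content to the matroid theorem, while the direct swap proof requires careful bookkeeping of the augmenting sequence to extract the violating subgraph $G'$.
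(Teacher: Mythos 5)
The paper offers no proof of this statement: it is quoted as the classical theorem of Nash-Williams with a citation to \cite{NaWi}, so there is no in-paper argument to compare yours against. Judged on its own, your proof of the biconditional is correct and standard. The forward direction is the right one-line count. For the converse, the reduction to matroid union is sound: independence in the $k$-fold union of the graphic matroid is exactly partitionability into $k$ forests, and your component-wise summation correctly converts the hypothesis $|E(G'')|\leq k(|V(G'')|-1)$, applied to each connected component $G''$ of $(V(F),F)$, into the required rank condition $|F|\leq k\cdot r(F)=k(|V(F)|-c(F))$. The augmenting-sequence alternative you sketch is the classical self-contained route and would also work with the bookkeeping you describe. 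One caveat: the displayed formula does not quite ``follow by taking $k$ equal to the arboricity.'' Since $\arbo(G)$ is an integer, the partition criterion yields $\arbo(G)=\left\lceil\max_{G'}\frac{|E(G')|}{|V(G')|-1}\right\rceil$, and the trailing inequality $\leq\dens(G)$ is false as literally written: for any subgraph $G'$ with an edge one has $\frac{|E(G')|}{|V(G')|-1}>\frac{|E(G')|}{|V(G')|}$, so the maximum is strictly larger than $\dens(G)$, not at most it. What is true, and what the paper actually needs downstream, is that $\arbo(G)$ is bounded by a small function of $\dens(G)$ (e.g., $\arbo(G)\leq\left\lceil\dens(G)\right\rceil+1$). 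This imprecision is inherited from the statement itself, but you should not claim that the displayed chain follows verbatim from the biconditional.
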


Thus, the upper bounds  for the densities of  graphs families provided in 
previous sections also
bound their arboricity. We then directly obtain upper bounds on the size of 
labels of adjacency
labeling schemes as a corollary of our results and the results of Nash-Williams 
\cite{NaWi} and Kannan and al. \cite{KaNaRu}:

\begin{corollary}
	Let $\Gamma:=G_1\product \cdots\product G_m$ and let $G$ be a subgraph
	of $\Gamma$ with $n$ vertices and induced (resp., minor) VC-dimension $d$ 
	(resp., $d^*$). Then, $G$ admits an adjacency labeling scheme with labels of
	size:
	\begin{enumerate}
		\itemsep0ex
		\item $O(d\cdot\log n)$, if $\Gamma$ is an hypercube.
		\item $O(\beta_0\cdot d\cdot\log n)$, where $\beta_0 =
		\lceil\max\{\mad(\pi_1(G)),\ldots,\mad(\pi_m(G))\}\rceil$.
		\item $O(d^*\cdot\mu(H)\cdot\log n)$, if $G_1,\ldots,G_m\in {\mathcal 
		G}(H)$.
		\item $O(\delta\cdot\log^2n)$, if $G_1,\ldots,G_m\in\mcg_\delta$.
		\item $O(\log^{k+2}n)$, if $G_1,\ldots,G_m\in\mcg_{\log,k}$.
		\item $O(\DD(G)\cdot d\cdot \log n)$, if $G_1,\ldots,G_m$ are 
		dismantlable.
		\item $O(\cm G\cdot d\cdot\log n)$, if $G_1,\ldots,G_m$ are chordal
		graphs.
        \item $O(\cm G\cdot d\cdot\log n)$, if $G_1,\ldots,G_m$ are
        octahedra. 
	\end{enumerate}
\end{corollary}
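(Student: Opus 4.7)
The plan is to combine each of the density bounds established earlier in the paper with Nash-Williams' theorem (Theorem~\ref{thm_NW}) and with the Kannan--Naor--Rudich construction recalled at the start of this section. The general pipeline is: whenever $\dens(G)\le D$, Nash-Williams gives $\arbo(G)\le D$ (up to rounding), and then decomposing $E(G)$ into $\arbo(G)$ forests and giving each vertex a unique $\lceil\log n\rceil$-bit identifier together with the identifier of its parent in each forest produces an adjacency labeling scheme with labels of total size $O(\arbo(G)\cdot\log n)=O(D\cdot\log n)$ bits.

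Concretely, I would handle each of the eight items by invoking the appropriate density result and plugging it into this pipeline. For~(1) the density bound is $\dens(G)\le d$, by Theorem~\ref{subgraphs_hypercubes}. For~(2) I would apply Theorem~\ref{thm_subgraphs_products1}, whose conclusion $\dens(G)\le\beta_0\log|V(G)|$ directly yields the claimed label size. For~(3), Theorem~\ref{thm_subgraphs_products} gives $\dens(G)\le\mu(H)\cdot d^*$. For~(4) and~(5) I would use Corollaries~\ref{corollary_bounded-degree-graphs} and~\ref{corollary_polylog}, which provide the density bounds $2\delta\log n$ and $\log^{k+1}n$ respectively. Finally, for~(6),~(7), and~(8), I would invoke Proposition~\ref{prop_density_dismantlable}, Corollary~\ref{corollary_dens<omega.vcdens}, and Proposition~\ref{thm_octahedra}, which give density bounds of the form $\DD(G)\cdot d$ or $\cm{G}\cdot d$ (where $d=\vcd(G)$).

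Since each case reduces to a one-line application of the same three-step pipeline (density $\Rightarrow$ arboricity $\Rightarrow$ labels), there is no genuine technical obstacle. The only minor subtlety is that Nash-Williams' inequality has $|V(G')|-1$ rather than $|V(G')|$ in the denominator, which costs at most a factor of~$2$ when $|V(G')|\ge 2$ and is therefore harmlessly absorbed in the $O(\cdot)$ notation. The proof is in this sense a bookkeeping exercise matching each graph-class hypothesis to the sharpest available density bound proved earlier in the paper.
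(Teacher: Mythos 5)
Your proposal is correct and follows essentially the same route as the paper: each item is obtained by feeding the corresponding density bound from the earlier sections into Nash-Williams' theorem to bound the arboricity and then into the Kannan--Naor--Rudich forest-decomposition labeling scheme. Your per-item attributions are, if anything, slightly more careful than the paper's own terse list, and your remark about the $|V(G')|-1$ denominator being absorbed into the $O(\cdot)$ is a valid handling of the only rounding subtlety.
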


The assertions (1), (2), (3), (6), and (7) follow from Theorem 
\ref{thm_subgraphs_products}, the assertions (4) and (5) follow from 
Theorem \ref{thm_subgraphs_products1}, and the assertion (8) is a 
consequence of Proposition \ref{thm_octahedra}.

\medskip\noindent
{\bf Acknowledgements.} We would like to acknowledge the 
two referees for a careful reading of the first version and useful 
comments. This work was supported by ANR project DISTANCIA 
(ANR-17-CE40-0015). 

\bibliographystyle{amsalpha}

\begin{thebibliography}{99}
	
	
	\bibitem{AbDeFiGoWe} I. Abraham, D. Delling, A. Fiat, A. V. Goldberg, 
	and R. F. F. Werneck, VC-Dimension and shortest path algorithms, ICALP 
	2011,  pp. 690--699.
	
	\bibitem{AfPaPa} F. N. Afrati, C. H. Papadimitriou, and G. 
	Papageorgiou, The complexity of cubical graphs, Inf. Contr. 66 (1985), 
	53--60. Corrigendum: The complexity of cubical graphs,
	Inf. Comput. 82 (1989), 350--353.
	
	\bibitem{AlDaKn} S. Alstrup, S. Dahlgaard, and M. B. T. Knudsen, Optimal 
	induced universal graphs
	and adjacency labeling for trees, FOCS 2015, pp. 1311--1326.
	
	\bibitem{AlTa} N. Alon, and M. Tarsi, Colorings and orientations of graphs, 
	Combinatorica 12
	(1992), 125--134.
	
	\bibitem{AnBrCo} M. Anthony, G. Brightwell, and C. Cooper, On the 
	Vapnik-Chervonenkis
	dimension of a random graph, Discrete Math. 138 (1995), 43--56.
	
	\bibitem{BaCh_survey} H.-J. Bandelt and V. Chepoi, Metric graph theory and 
	geometry: a
	survey, in: J. E. Goodman, J. Pach, R. Pollack (Eds.), Surveys
	on Discrete and Computational Geometry. Twenty Years later,
	Contemp. Math., vol. 453, AMS, Providence, RI, 2008, pp. 49--86.
	
	\bibitem{BaChDrKo} H.-J. Bandelt, V. Chepoi, A. Dress, and J. Koolen, 
	Combinatorics of lopsided
	sets, Europ. J. Combin. 27 (2006), 669--689.
	
    \bibitem{Bezrukov} S.L. Bezrukov, Edge isoperimetric problems on 
    graphs, in: L. Lov\'asz, A. Gyarfas, G.O.H. Katona, A. Recski, L. Szekely 
    (Eds.), Graph Theory and Combinatorial Biology, Budapest 1999, in: 
    Bolyai Soc. Math. Stud., vol.7, 1999, pp.157--197.
	
	\bibitem{BoRa} B. Bollob\'as, and A.J. Radcliffe, Defect Sauer results, J. 
	Comb. Th. Ser. A 72
	(1995), 189--208.
	
	\bibitem{BrChChKoLaVa} B. Bre\v{s}ar, J. Chalopin, V. Chepoi, M. Kov\v{s}e, 
	A. Labourel, and Y.
	Vax\`es, Retracts of products of chordal graphs, J. Graph Theory 73 
	(2013), 161--180.
	
	\bibitem{BrChChGoOs} B. Bre\v{s}ar, J. Chalopin, V. Chepoi, T. Gologranc, 
	and D. Osajda, Bucolic
	complexes, Advances Math. 243 (2013), 127--167.
	
	\bibitem{ChChHiOs} J. Chalopin, V. Chepoi, H. Hirai, and D. Osajda, Weakly 
	modular graphs and
	nonpositive curvature, Memoirs of Amer. Math. Soc. (to appear).
	
	\bibitem{Cha} M.  Chastand, Fiber-complemented  graphs. I,  Structure  and  
	invariant
	subgraphs,  Discrete  Math. 226 (2001), no. 1--3, 107--141.
	
	\bibitem{CBHa} N. Cesa-Bianchi and D. Haussler, A graph-theoretic 
	generalization of Sauer-Shelah lemma, Discrete Appl. Math. 86 (1998), 
	27--35.
	
	\bibitem{ChEsVa} V. Chepoi, B. Estellon, and Y. Vax\`es, On covering 
	planar graphs with a fixed number of balls, Discrete and Computational 
	Geometry 37 (2007), 237--244.
	
	\bibitem{ChLaRa_halfcubes} V. Chepoi, A. Labourel, and S. Ratel, On 
	density of subgraphs of halved cubes, European J. Combin. (2018), 
	doi:10.1016/j.ejc.2018.02.039.
	
	\bibitem{DeLa}  M. Deza and  M. Laurent, Geometry of Cuts and Metrics, 
	Springer-Verlag, Berlin, 1997.
	
	\bibitem{Die} R. Diestel, Graph Theory, Graduate texts in mathematics, New 
	York, Berlin, Paris,
	Springer, 1997.
	
	\bibitem{Djokovic} D.\v{Z}. Djokovi\'{c}, Distance--preserving subgraphs of 
	hypercubes, J.
	Combin. Th. Ser. B 14 (1973), 263--267.
	
	\bibitem{GaGr} M.R. Garey and R.L. Graham, On cubical graphs, J. Combin. 
	Th. B 18 (1975),
	84--95.
	
	\bibitem{Gr} R.L. Graham, On primitive graphs and optimal vertex 
	assignments, Ann. N. Y. Acad.
	Sci. 175 (1970), 170--186.
	
	\bibitem{GrWi} R.L. Graham and  P.M. Winkler, On isometric embeddings of 
	graphs, Trans. Amer.
	Math. Soc. 288 (1985), 527--536.
	
	\bibitem{HaImKl} R. Hammack, W. Imrich, and S. Klav\v zar, Handbook of 
	Product Graphs, Second
	Edition, CRC Press, Boca Raton, 2011.
	
	\bibitem{Har}  L.H. Harper, Optimal assignments of numbers to vertices, 
	SIAM J. Appl.  Math. 12
	(1964), 131--135.
	
	\bibitem{Har_book} L.H. Harper, Global Methods for Combinatorial 
	Isoperimetric Problems,
	Cambridge Studies in Advanced Mathematics (No. 90), Cambridge University 
	Press 2004.
	
	\bibitem{Hau} D. Haussler, Sphere packing numbers for subsets of the 
	Boolean $n$-cube with
	bounded Vapnik-Chervonenkis dimension, J. Comb. Th. Ser. A 69 (1995), 
	217--232.
	
	\bibitem{HaLiWa}  D. Haussler, N. Littlestone, and  M. K. Warmuth, 
	Predicting $\{0,1\}$-functions
	on randomly drawn points, Inf. Comput. 115 (1994), 248--292.
	
	\bibitem{HaLo} D. Haussler and P.M. Long, A generalization of Sauer's 
	lemma, J. Combin. Th. Ser.
	A 71 (1995), 219--240.
	
	
	\bibitem{HaWe} D. Haussler and E. Welzl, $\epsilon$-Nets and simplex range 
	queries, Discrete and
	Computational Geometry 2 (1987), 127--151.

    \bibitem{HeNe} P. Hell and J. Ne\v{s}et\v{r}il,  Graphs and Homomorphisms, 
    New York,
    Oxford University Press, 2004.
	
	\bibitem{FuPa} Z. F\"uredi and J. Pach, Traces of finite sets: extremal 
	problems
	and geometric applications, Extremal Problems for Finite Sets (Visegr\'ad, 
	Hungary, 1991),
	pp. 251--282, Bolyai Society Mathematical Studies, 3.
	
	\bibitem{KaNaRu} S. Kannan, M. Naor, and S. Rudich, Implicit representation 
	of graphs, SIAM J.
	Discr. Math. 5 (1992), 596--603.
	
	\bibitem{KrKrRuUrWo} E. Kranakis, D. Krizanc, B. Ruf, J. Urrutia, and G. 
	Woeginger, The
	VC-dimension of set systems defined by graphs, Discrete Appl. Math.
	77 (1997), 237--257.
	
	\bibitem{KuWa} D. Kuzmin and M.K. Warmuth, Unlabelled compression schemes 
	for maximum
	classes, J. Mach. Learn. Res. 8 (2007), 2047--2081.
	
	\bibitem{La} J. Lawrence, Lopsided sets and orthant-intersection of convex 
	sets,  Pacific J. Math.
	104 (1983), 155--173.
	
	\bibitem{NaWi} C. St J.A. Nash-Williams, Edge-disjoint spanning trees of 
	finite graphs, J. London
	Math. Soc. 1 (1961),  445--450.
	
	\bibitem{Natarajan}  B. K. Natarajan, On learning sets and functions, 
	Machine Learning 4 (1989),
	67--97.
	
	\bibitem{NoWi} R. Nowakowski and P. Winkler, Vertex-to-vertex pursuit in a 
	graph, Discrete Math.
	43 (1983), 235--239.
	
	\bibitem{Pollard_book} D. Pollard, Convergence of Stochastic Processes,  
	Springer Science \&
	Business Media, 2012.
	
	\bibitem{RuBaRu} B.I. Rubinstein, P.L. Bartlett, and J.H. Rubinstein, 
	Shifting: one-inclusion mistake
	bounds and sample compression, J. Comput. Syst. Sci.  75 (2009), 37--59.
	
	\bibitem{Sauer} N. Sauer, On the density of families of sets, J. Combin. 
	Th. Ser. A 13 (1972),
	145--147.
	
	\bibitem{Shp} S.V. Shpectorov, On scale embeddings of graphs into 
	hypercubes,  Europ.  J.
	Combin.  14 (1993), 117--130.
	
	\bibitem{VaCh} V.N. Vapnik and A.Y. Chervonenkis, On the uniform 
	convergence of relative
	frequencies of events to their probabilities, Theory Probab. Appl. 16 
	(1971), 264--280.
	
\end{thebibliography}

\end{document}